\newcommand{\EE}{\mathbb{E}}
\newcommand{\Fcal}{\mathcal{F}}
\newcommand\independent{\protect\mathpalette{\protect\independenT}{\perp}}\def\independenT#1#2{\mathrel{\rlap{$#1#2$}\mkern2mu{#1#2}}}
\newcommand{\tnull}{\textnormal{null}}
\newcommand{\nd}{\textnormal{d}}
\newcommand{\Rcal}{\mathcal{R}}
\newcommand{\Tcal}{\mathcal{T}}
\newcommand{\rejset}{R}
\newcommand{\nullSimes}{\textnormal{null-Simes}}
\newcommand{\BH}{\textnormal{BH}}
\newcommand{\oBH}{\textnormal{oBH}}
\newcommand{\oSBH}{\textnormal{oSBH}}
\newcommand{\oeBH}{\textnormal{oe-BH}}
\newcommand{\LORD}{\textnormal{LORD}}
\newcommand{\SAFFRON}{\textnormal{SAFFRON}}
\newcommand{\FDR}{\textnormal{FDR}}
\newcommand{\FDP}{\textnormal{FDP}}
\newcommand{\supFDR}{\textnormal{SupFDR}}
\newcommand{\StopFDR}{\textnormal{StopFDR}}
\newcommand{\prob}[1]{\mathbb{P}\left(#1\right)}
\newcommand{\expect}{\mathbb{E}}
\newcommand{\ind}[1]{\mathbbm{1}\left\{#1\right\}}
\newcommand{\naturals}{\mathbb{N}}
\newcommand{\reals}{\mathbb{R}}
\newtheorem{theorem}{Theorem}[section]
\newtheorem{corol}[theorem]{Corollary}
\newtheorem{proposition}[theorem]{Proposition}
\newtheorem{lemma}[theorem]{Lemma}
\theoremstyle{remark}
\newtheorem{remark}{Remark}
\newtheorem{example}{Example}
\newtheorem{definition}{Definition}
\newtheorem{fact}{Fact}
\newif\ifarxiv
\title{An online generalization of the \\ (e-)Benjamini-Hochberg procedure}
\author{Lasse Fischer\thanks{University of Bremen, Germany. Email: \texttt{fischer1@uni-bremen.de}} \and Ziyu Xu\thanks{Carnegie Mellon University, USA. Email: \texttt{xzy@cmu.edu} } \and Aaditya Ramdas\thanks{Carnegie Mellon University, USA. Email: \texttt{aramdas@cmu.edu} }}
\begin{document}

\maketitle
\begin{abstract}
In online multiple testing, the hypotheses arrive one by one, and at each time we must immediately reject or accept the current hypothesis solely based on the data and hypotheses observed so far. Many online procedures have been proposed, but none of them are generalizations of the Benjamini-Hochberg (BH) procedure based on p-values, or of the e-BH procedure that uses e-values. In this paper, we consider a relaxed problem setup that allows the current hypothesis to be rejected at \emph{any later step}.
We show that this relaxation allows us to define --- what we justify extensively to be --- the natural and appropriate online extension of the BH and e-BH procedures.
We show that the FDR guarantees for BH (resp. e-BH) and online BH (resp. online e-BH) are identical under positive, negative or arbitrary dependence, at fixed and stopping times. Further, the online BH (resp. online e-BH) rule recovers the BH (resp. e-BH) rule as a special case when the number of hypotheses is known to be fixed. Of independent interest, our proof techniques also allow us to prove that numerous existing online procedures, which were known to control the FDR at fixed times, also control the FDR at stopping times. 
\end{abstract}

\newpage
\tableofcontents
\newpage

\section{Introduction}
Suppose we observe a stream of (null) hypotheses $(H_t)_{t\in \mathbb{N}}$ that comes either with a sequence of p-values $(P_t)_{t\in \mathbb{N}}$ or with a sequence of e-values $(E_t)_{t\in \mathbb{N}}$. Let
$\mathbb{P}$ denote the true but unknown probability distribution and $\mathbb{E}$  the corresponding expected value, while $\mathbb{P}_{H_0}$ and $\mathbb{E}_{H_0}$ denote the probability and expectation under the null hypothesis (that is, according to any distribution that satisfies $H_0$). A p-value $P_t$ for $H_t$ is a random variable with values in $[0,1]$ such that $\mathbb{P}_{H_t}(P_t\leq x)\leq x$ for all $x\in [0,1]$. In contrast, an e-value $E_t$ is a nonnegative random variable with $\mathbb{E}_{H_t}[E_t]\leq 1$. Previous works \citep{foster2008alpha, javanmard2018online, fischer2024online, xu2024online} defined an online multiple testing procedure as a sequence of test decisions $(r_t)_{t\in \mathbb{N}}$, where each $r_t$ is only allowed to depend on the p-values (or e-values) $P_1,\ldots, P_t$. In this paper we relax this setup, which allows to define more powerful procedures that are valid under more general stopping rules. This allows us to propose (what we justify to be) online generalizations of the Benjamini-Hochberg (BH) and e-BH procedures (recapped later).

\subsection{Online ARC (accept-to-reject) procedures} An \textit{online ARC procedure} is an online multiple testing method that can change earlier accept decisions (or non-rejections) into a rejection at a later point in time. The motivation is that early in the procedure, there have been no discoveries yet, and so the level at which hypotheses are tested at must be quite stringent leading to fewer discoveries initially. However, as the number of discoveries builds up, we may realize that we could have tested earlier hypotheses at more relaxed levels, leading to rejections at earlier times.

In more detail, an online ARC procedure goes through all p-values $P_1, P_2,\ldots$ (or e-values) and at each time $t\in \mathbb{N}$ it decides based on $P_1,\ldots,P_t$ which of the hypotheses $H_1,\ldots, H_t$ are to be rejected with the restriction that as soon as a hypothesis is rejected, this decision cannot be overturned in the future and thus the rejection remains during the entire testing process. However, in contrast to a fully online multiple testing procedure, an online ARC procedure allows to turn an accepted hypothesis into a rejected one based on information obtained in the future. This is why we call this online with \emph{accept-to-reject changes} (online ARC) procedure. Formally one can define an online ARC procedure as follows.

\begin{definition}
    An online ARC procedure is a nested sequence of rejection sets $R_1\subseteq R_2 \subseteq R_3 \subseteq \ldots $, where each $R_t\subseteq [t] \coloneqq \{1,\ldots,t\}$ only depends on the p-values or e-values that are known up to step $t$.
\end{definition}


In statistical hypothesis testing, the null hypothesis $H_t$ usually states the status quo which is challenged by testing it. If $H_t$ is rejected, this can be seen as evidence that $H_t$ is not true. This evidence is used to draw conclusions and derive recommendations for the future. However, if $H_t$ is accepted, this is typically not seen as evidence that $H_t$ is true, but just as insufficient evidence to reject $H_t$. Therefore, actions are typically not changed based on accepted hypotheses. For these reasons, we believe that reversing rejections should be forbidden,  while accept-to-reject changes are useful in many applications.

For example, an important online multiple testing application is A/B testing in tech companies \citep{kohavi2013online, ramdas2017online}. Here, a rejection might lead to an adjustment of a website or an app. However, after an acceptance there won't be any changes. Consequently, reversing rejections can lead to undoing unnecessary adjustments made to a website, which can cost money and time, whereas turning accepted hypotheses into rejected ones may lead to additional improvements of the website at a later point. The same logic applies to many other online multiple testing tasks such as platform trials \citep{robertson2023platform, zehetmayer2022online} and genomic studies \citep{aharoni2014generalized, 10002015global}.

Online multiple testing procedures are a special case of online ARC procedures, where we disallow all changes to decisions. Sometimes, online multiple testing procedures are suitable for the applications, leading to a rich literature on the topic, but ARC procedures also have their place, and this particular paper focuses on designing online ARC procedures.

Interestingly, offline procedures can also be interpreted as special online ARC procedures where we do not make any rejections until the final step.

\subsection{Problem Setup}
Let $I_0\subseteq \mathbb{N}$ be the index set of true hypotheses. The false discovery proportion (FDP) of a rejection set $R$ is defined as
$$
\text{FDP}(R)\coloneqq\frac{|R\cap I_0|}{|R| \lor 1}.
$$
The false discovery rate (FDR) is the expected FDP: $\text{FDR}(R)=\mathbb{E}\left[ \text{FDP}(R)\right]$. We let FDP$_t$ and FDR$_t$ be shorthand for FDP$(R_t)$ and FDR$(R_t)$, respectively, for each $t \in \naturals$.
The type FDR control usually considered in online multiple testing is defined as:
$$
  \textnormal{OnlineFDR} \coloneqq \sup_{t \in \naturals}\ \FDR_t.
$$
\citet{javanmard2018online} provided an algorithm that controlled OnlineFDR below a fixed level $\alpha \in [0, 1]$ for independent p-values. However, this does not guarantee FDR control at any data-adaptive stopping time (like the time of the hundredth rejection); this is severe restriction in the online setting. For this reason, \citet{xu2022dynamic} introduced the SupFDR, defined as
$$
\text{SupFDR}\coloneqq\mathbb{E}\left[ \sup_{t\in \mathbb{N}}\  \text{FDP}_t \right].
$$

We can also define a notion of FDR for arbitrary (data-dependent) stopping times
\begin{align}
    \textnormal{StopFDR} \coloneqq \sup_{\tau \in \Tcal}\ \FDR_\tau,
\end{align} 
where $\Tcal$ denotes the set of all stopping times.
In this paper, we will generally use $\alpha \in (0, 1]$ to denote a constant used to induce a fixed level of error control unless otherwise specified. We note that enforcing control of SupFDR and StopFDR have the following relationship,
\begin{align}
&\text{SupFDR} \leq \alpha \implies \textnormal{StopFDR} \leq \alpha \\ &\text{StopFDR} \leq \alpha \implies \textnormal{SupFDR} \leq \alpha(1 + \log(\alpha^{-1})).
\end{align}
Controlling the SupFDR ensures valid FDR control at arbitrary (data-dependent) stopping times since the supremum of $\FDP_t$ will exceed $\FDP_\tau$ at any single time $\tau$. Interestingly, a reverse relationship also holds, meaning control of the FDR at stopping times implies control of the SupFDR, albeit at an inflated level (we prove this result in Appendix~\ref{sec:appn_supFDR_FDR}).

In addition to the aforementioned false discovery metrics that involve all time steps, we will also consider the SupFDR until some fixed time $K$, defined as
$$
\text{SupFDR}^K\coloneqq\mathbb{E}\left[ \sup_{1\leq t\leq K} \text{FDP}_t \right].
$$
The $\text{SupFDR}^K$ can be useful if there is a fixed maximum number of hypotheses $K$, but we may want to stop early.

\subsection{Our contributions}
In this paper we introduce the online e-BH and online BH procedure, online ARC versions of the e-BH and BH procedures \citep{wang2022false, benjamini1995controlling}.

The e-BH procedure was recently proposed by \citet{wang2022false} and provides an e-value analog of the seminal p-value based Benjamini-Hochberg (BH) procedure \citep{benjamini1995controlling}. We show that online e-BH controls the SupFDR under arbitrary dependence between the e-values (Section~\ref{sec:supFDR_eBH}). To the best of our knowledge, this is the first nontrivial online procedure with SupFDR control under arbitrary dependence. In addition, it uniformly improves e-LOND, the current state-of-the-art online procedure for FDR control with arbitrarily dependent e-values.
Next, we demonstrate how the ``boosting'' approach for the \emph{offline} e-BH procedure \citep{wang2022false} can be extended to the \emph{online} e-BH procedure (Section~\ref{sec:boosting}). We also propose a new boosting technique for locally dependent e-values which improves on boosting under arbitrary dependence. A comparison of the guarantees provided by the e-BH and online e-BH procedure is given in Table~\ref{tab:e-BH}.

\ifarxiv
\begin{table*}[h!]
    \centering
    \resizebox{\textwidth}{!}{%
    \begin{tabular}{l|l|l}
       Assumption & e-BH & Online e-BH \\ \hline
        Independence or PRDS (boosted) & $\text{FDR}\leq \alpha$ \citep{wang2022false} & $\text{OnlineFDR}\leq \alpha$ (Sec.~\ref{sec:boosting}) \\
        Arbitrary dependence & $\text{FDR}\leq \alpha$ \citep{wang2022false} & $\text{SupFDR}\leq \alpha$ (Sec.~\ref{sec:supFDR_eBH}) \\
        Local dependence (boosted) & Not considered before& $\text{SupFDR}\leq \alpha$ (Sec.~\ref{sec:boosting}) \\
         \hline
    \end{tabular}}
    \caption{Comparison of e-BH and online e-BH guarantees.}
    \label{tab:e-BH}
\end{table*}
\else
\begin{table*}[h!]
    \centering
    \resizebox{14cm}{!}{%
    \begin{tabular}{l|l|l}
       Assumption & e-BH & Online e-BH \\ \hline
        Independence or PRDS (boosted) & $\text{FDR}\leq \alpha$ \citep{wang2022false} & $\text{OnlineFDR}\leq \alpha$ (Sec.~\ref{sec:boosting}) \\
        Arbitrary dependence & $\text{FDR}\leq \alpha$ \citep{wang2022false} & $\text{SupFDR}\leq \alpha$ (Sec.~\ref{sec:supFDR_eBH}) \\
        Local dependence (boosted) & Not considered before& $\text{SupFDR}\leq \alpha$ (Sec.~\ref{sec:boosting}) \\
         \hline
    \end{tabular}}
    \caption{Comparison of e-BH and online e-BH guarantees.}
    \label{tab:e-BH}
\end{table*}
\fi

In contrast, we show that the online BH procedure controls the OnlineFDR under positive regression dependence on a subset (PRDS)  (Section~\ref{sec:ofdr-prds}). Furthermore, it controls the SupFDR at slightly inflated levels under weak positive (PRDN) and negative (WNDN) dependence assumptions (Sections~\ref{sec:supFDR_PRDN} and~\ref{sec:neg_dep}).  Under arbitrary dependence, online BH provides $\text{SupFDR}^K$ control at an inflated level if a maximum number of hypotheses $K$ is specified (\Cref{sec:supFDR_oBH_arb}). The guarantees of the BH and online BH procedure are summarized in Table~\ref{tab:BH}.



\ifarxiv
\begin{table*}[h!]
    \centering
    \resizebox{\textwidth}{!}{%
    \begin{tabular}{l|l|l}
       Assumption & BH & Online BH \\ \hline
        Independence & $\text{FDR}\leq \alpha$ \citep{benjamini1995controlling} & $\text{OnlineFDR}\leq \alpha$ (Sec.~\ref{sec:ofdr-prds}) \\
        PRDS & $\text{FDR}\leq \alpha$ \citep{ benjamini2001control} & $\text{OnlineFDR}\leq \alpha$ (Sec.~\ref{sec:ofdr-prds}) \\
        PRDN & $\text{FDR}\leq \alpha(1+\log(\alpha^{-1}))$ \citep{su_fdr-linking_theorem_2018} & $\text{SupFDR}\leq \alpha(1+\log(\alpha^{-1}))$ (Sec.~\ref{sec:supFDR_PRDN}) \\
        WNDN & $\text{FDR}\leq \alpha(3.18+\log(\alpha^{-1})) $ \citep{chi_multiple_testing_2024} & $\text{SupFDR}\leq \alpha(3.18+\log(\alpha^{-1})) $  (Sec.~\ref{sec:neg_dep}) \\
        Arbitrary dependence & $\text{FDR}\leq \alpha \ell_K $ \citep{benjamini2001control} & $\text{SupFDR}^K\leq \alpha \ell_K $  (Sec.~\ref{sec:supFDR_oBH_arb}) \\ \hline
    \end{tabular}}
    \caption{Comparison of BH and online BH guarantees.}
    \label{tab:BH}
\end{table*}
\else
\begin{table*}[h!]
    \centering
    \resizebox{15cm}{!}{%
    \begin{tabular}{l|l|l}
       Assumption & BH & Online BH \\ \hline
        Independence & $\text{FDR}\leq \alpha$ \citep{benjamini1995controlling} & $\text{OnlineFDR}\leq \alpha$ (Sec.~\ref{sec:ofdr-prds}) \\
        PRDS & $\text{FDR}\leq \alpha$ \citep{ benjamini2001control} & $\text{OnlineFDR}\leq \alpha$ (Sec.~\ref{sec:ofdr-prds}) \\
        PRDN & $\text{FDR}\leq \alpha(1+\log(\alpha^{-1}))$ \citep{su_fdr-linking_theorem_2018} & $\text{SupFDR}\leq \alpha(1+\log(\alpha^{-1}))$ (Sec.~\ref{sec:supFDR_PRDN}) \\
        WNDN & $\text{FDR}\leq \alpha(3.18+\log(\alpha^{-1})) $ \citep{chi_multiple_testing_2024} & $\text{SupFDR}\leq \alpha(3.18+\log(\alpha^{-1})) $  (Sec.~\ref{sec:neg_dep}) \\
        Arbitrary dependence & $\text{FDR}\leq \alpha \ell_K $ \citep{benjamini2001control} & $\text{SupFDR}^K\leq \alpha \ell_K $  (Sec.~\ref{sec:supFDR_oBH_arb}) \\ \hline
    \end{tabular}}
    \caption{Comparison of BH and online BH guarantees.}
    \label{tab:BH}
\end{table*}
\fi

As shown in the Tables~\ref{tab:e-BH} and~\ref{tab:BH}, both online procedures, online e-BH and online BH, provide OnlineFDR control whenever their offline counterparts provide FDR control and in some cases even control the SupFDR where the offline methods were only known to control the FDR. In addition, we will show that the \emph{online} versions are natural \emph{generalizations} of the \emph{offline} procedures, meaning e-BH and BH can be obtained by online e-BH and BH for a specific choice of weighting parameters (see Sections~\ref{sec:online_eBH} and~\ref{sec:onlineBH}). Therefore, we believe the names \emph{online e-BH} and \emph{online BH} are justified for our procedures.

Along the way, we also prove SupFDR control for all other existing online procedures with FDR control under arbitrary dependence. This includes the reshaped LOND procedure \citep{zrnic2021asynchronous} and e-LOND \citep{xu2024online}. Previously, these methods were only known to control the FDR at fixed times. In addition, we prove SupFDR control of LOND \citep{fisher2022online} under PRDN and WNDN at slightly inflated levels. We also provide new guarantees for the TOAD algorithm by \citet{fisher2022online}. 


\subsection{The TOAD algorithm}
While ARC procedures have not been explicitly named and conceptualized before, the only online ARC procedure that we are aware of (which is not already a fully online procedure) is the TOAD algorithm by \citet{fisher2022online}. The TOAD algorithm was introduced in the context of p-values and \enquote{decision deadlines} where one must make a decision about $H_t$ by a deadline $d_t$ based on a p-value $P_t$. One can view online procedures as having a deadline for making a decision as setting $d_t = t$, and online ARC procedures as setting $d_t = \infty$. In Fisher's framework, rejection-to-acceptance changes before the deadline were not explicitly disallowed, although not used by the procedures. In the Appendix~\ref{sec:e-TOAD} we present the e-TOAD algorithm, an e-value analog of the TOAD procedure with decision deadlines. Then, our online e-BH procedure can be viewed as an instance of our e-TOAD framework with unbounded deadlines, where e-TOAD extends TOAD and its guarantees from p-values to e-values. Further, while \citet{fisher2022online} showed that TOAD for arbitrarily dependent p-values controls the OnlineFDR, we extend his result to show that it actually controls the SupFDR. We also improve the power of TOAD under arbitrary dependence slightly by allowing individual shape functions for the hypotheses. Furthermore, we prove that TOAD for positively dependent p-values controls the SupFDR under PRDN and WNDN at a slightly inflated level.

\section{The online e-BH  procedure\label{sec:online_eBH}}
Let us first review the e-BH procedure for offline multiple testing. Let $K$ e-values $E_1,\ldots, E_K$ for the hypotheses $H_1,\ldots,H_K$ be given. The (base) e-BH procedure \citep{wang2022false} rejects hypotheses with the $k^*$ largest e-values, where
\begin{align}
    k^*=\max\left\{k\in \{1,\ldots,K\}: \sum_{j=1}^K \mathbbm{1}\{E_j\geq K/(k \alpha ) \} \geq k\right\},
\end{align}
with the convention $\max(\emptyset)=0$.
 For the online e-BH procedure we weight each hypothesis $H_i$ with a nonnegative constant $\gamma_i$ such that $\sum_{i\in \mathbb{N}} \gamma_i \leq 1$. Let $\pi_0 \coloneqq \sum_{i \in I_0} \gamma_i$ be the sum of the null weights. The online e-BH procedure is defined by the rejection sets
 \begin{align}
 &R_t^{\oeBH}\coloneqq\left\{i\leq t: E_i\geq \frac1{k_t^* \alpha \gamma_i}\right\}, t\in \mathbb{N}, \text{ where}
 \\
&k_t^* =\max\left\{k\in \{1,\ldots,t\}: \sum_{j=1}^t \mathbbm{1}\{E_j\geq 1/(k \alpha  \gamma_j) \} \geq k\right\},
\end{align}
with the conventions $\max(\emptyset)=0$ and $1/0=\infty$.
Note that $k_t^*$ is nondecreasing in $t$, which implies that $(R_t^{\oeBH})_{t\in \mathbb{N}}$ is a nested sequence of rejection sets such that the online e-BH procedure is indeed an online ARC procedure. Also note that the naive procedure that sets $R_t$ to be the rejection set of the offline e-BH procedure applied to $E_1,...,E_t$ is not an online ARC procedure: it may change its mind in both directions (possibly more than once).

If we choose $\gamma_i=1/K$ for $i=1,\ldots,K$ and $\gamma_i=0$ otherwise, the online e-BH becomes the e-BH procedure, and in that sense it is an \emph{online generalization of the e-BH procedure}.
In particular, the online e-BH allows to choose different weights for the hypotheses and also to test an infinite number of hypotheses. \citet{wang2022false} have already formulated a weighted e-BH procedure which is similar to this method; however, their approach does not immediately apply to an infinite number of hypotheses while we claim that our approach defines an online ARC procedure with SupFDR control.

\subsection{SupFDR control under arbitrary dependence\label{sec:supFDR_eBH}}

To show SupFDR control, we first define the following class of procedures.

\begin{definition}
A discovery set of finite cardinality, $\rejset \subseteq \naturals$, is considered \emph{self-consistent} if it satisfies $E_t \geq (\alpha \gamma_t |\rejset|)^{-1}$ for e-values (or $P_t \leq \alpha \gamma_t |\rejset|$ for p-values) for each $t \in \rejset$. We denote by $\Rcal(\alpha)$ the set of discovery sets that are self-consistent at level $\alpha$.
\end{definition}
The above definition is consistent with the definition of self-consistency as defined by \citet{blanchard2008two} and generalizes the definition (referred to as ``compliance'') introduced in \citet{su_fdr-linking_theorem_2018}. \citet{wang2022false} showed that self-consistent e-value procedures control FDR under arbitrary dependence in the offline multiple testing setting with uniform weights. We now show an extension of this result for general weights $(\gamma_t)_{t \in \naturals}$ and any self-consistent rejection set in an infinite stream of hypotheses.
\begin{proposition}
    \label{prop:e-self-consist}
    For a stream of infinite hypotheses and its corresponding arbitrarily dependent e-values $(E_t)_{t \in \naturals}$, $\expect\left[\sup_{R \in \Rcal(\alpha)} \FDP(R)\right] \leq \alpha$.
\end{proposition}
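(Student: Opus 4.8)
The plan is to exhibit a single nonnegative random variable that dominates $\FDP(R)$ \emph{simultaneously} for every self-consistent $R$, thereby collapsing the expected supremum into an ordinary expectation. The crucial observation is that self-consistency forces the same bound on each discovery set irrespective of which indices it contains, so the supremum over $\Rcal(\alpha)$ never has to be pushed through the expectation in any delicate way.

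Concretely, fix a realization of the e-values and let $R \in \Rcal(\alpha)$ be any self-consistent set; in particular $R$ is finite by definition, so $\FDP(R)$ is well posed. For a null index $t \in I_0$ with $t \in R$, self-consistency gives $E_t \geq (\alpha \gamma_t |R|)^{-1}$, and since $t \in R$ forces $|R| \geq 1$ I may rearrange this into
\[
\frac{\ind{t \in R}}{|R| \lor 1} \leq \alpha \gamma_t E_t \ind{t \in R} \leq \alpha \gamma_t E_t ,
\]
an inequality that holds trivially when $t \notin R$. Summing over all $t \in I_0$ and recognizing the left-hand side as $\FDP(R)$ gives
\[
\FDP(R) = \sum_{t \in I_0} \frac{\ind{t \in R}}{|R| \lor 1} \leq \alpha \sum_{t \in I_0} \gamma_t E_t .
\]
The right-hand side does not depend on $R$, so taking the supremum over $R \in \Rcal(\alpha)$ preserves it: $\sup_{R \in \Rcal(\alpha)} \FDP(R) \leq \alpha \sum_{t \in I_0} \gamma_t E_t$ pointwise.

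It then remains to take expectations. Since the dominating variable is a measurable majorant, the inequality passes through the (outer) expectation regardless of any measurability subtleties attached to the random collection $\Rcal(\alpha)$ or to the supremum itself. Because the summands $\gamma_t E_t$ are nonnegative, Tonelli's theorem licenses exchanging the expectation with the (possibly infinite) sum over $I_0$; then the e-value property $\expect[E_t] = \expect_{H_t}[E_t] \leq 1$ for each null $t$, together with $\sum_{t \in I_0} \gamma_t = \pi_0 \leq \sum_{i \in \naturals} \gamma_i \leq 1$, yields
\[
\expect\left[\sup_{R \in \Rcal(\alpha)} \FDP(R)\right] \leq \alpha \sum_{t \in I_0} \gamma_t \, \expect[E_t] \leq \alpha \pi_0 \leq \alpha .
\]

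Each step is routine; the only point that genuinely carries the argument is the \emph{uniformity} of the bound in the second display, which hinges on self-consistency tying the threshold for each $t$ to $|R|$ in exactly the way needed to cancel the $|R| \lor 1$ in the denominator of $\FDP$. I expect the main conceptual obstacle to be convincing the reader that no interchange of $\sup$ and $\expect$ is actually performed: the supremum is dominated deterministically by an $R$-free random variable \emph{before} any expectation is taken, and this is precisely what sidesteps the measurability and arbitrary-dependence difficulties that a naive ``$\sup$ inside $\expect$'' would otherwise raise. The remaining technical details are merely handling the infinite index set $I_0$ via Tonelli and recalling that self-consistent sets are finite, so that $\FDP$ is defined throughout.
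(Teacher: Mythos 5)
Your proof is correct. The core inequality is the same as the paper's: self-consistency plus the linearization $\ind{x \geq 1} \leq x$ cancels the $|R| \lor 1$ denominator, giving $\ind{t \in R}/(|R| \lor 1) \leq \alpha \gamma_t E_t$ for null $t$. Where you genuinely differ is in the architecture. The paper proceeds in two stages: it first restricts to finitely many hypotheses, picks a (random) FDP-maximizing set $R^* \in \Rcal(\alpha)$, and bounds $\FDR(R^*)$; it then extends to the infinite stream via a separate limiting step, writing $\sup_{R \in \Rcal(\alpha)} \FDP(R) = \lim_{t \to \infty} \sup_{R \in \Rcal(\alpha),\, R \subseteq [t]} \FDP(R)$ and invoking the Fatou--Lebesgue theorem to exchange limit and expectation. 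You instead exhibit the single $R$-free majorant $\alpha \sum_{t \in I_0} \gamma_t E_t$ that dominates $\FDP(R)$ pointwise and \emph{simultaneously} for every self-consistent $R$, so the supremum is controlled before any expectation is taken and Tonelli finishes the argument. This buys three simplifications at once: no appeal to the existence of a maximizer, no finite-to-infinite limit argument, and no measurability concern about the supremum (a measurable majorant suffices for the outer expectation) --- a point the paper glosses over. What the paper's route buys in exchange is that the finite-$K$ case is isolated as a reusable intermediate bound (it is exactly the weighted offline e-BH FDR result), and the same limiting template is then recycled verbatim in Propositions~\ref{prop:p-self-consist} and~\ref{prop:p-self-consist-neg}, where no uniform majorant of your kind is available. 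One small caveat shared by both proofs: if $\gamma_t = 0$ for some null $t$, self-consistency forces $E_t = \infty$ for $t \in R$, and then $\alpha \gamma_t E_t$ is $0$ under the convention $0 \cdot \infty = 0$, so your displayed inequality fails on that event; since $\expect_{H_t}[E_t] \leq 1$ makes this event null, the conclusion is unaffected, but it deserves a word.
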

\begin{proof}
    We first prove it for the case where we have $K$ hypotheses.  Since $\Rcal(\alpha)$ is finite, there exists a $R^*$ that maximizes the FDP. Thus we have that
    \begin{align}
    \expect\left[\sup_{R \in \Rcal(\alpha)} \FDP(R)\right] &= \FDR(R^*) \\ &= \sum\limits_{i \in I_0} \expect\left[\frac{\ind{E_i \geq 1/(\alpha \gamma_i |R^*|)}}{|R^*| \vee 1}\right] \\ &\leq \alpha \cdot \sum\limits_{i \in I_0} \gamma_i \expect[E_i] \leq \pi_0\alpha \leq \alpha.
    \label{eq:finite-K-supfdr}
    \end{align}
Where the first inequality follows because $\mathbbm{1}\{x \geq 1\} \leq x$ for all $x \geq 0$, the second follows from $E_i$ being an e-value for each $i \in I_0$, and the last is due to $\pi_0 \leq 1$.

When we have an infinite stream of hypotheses, we know the following is true
\begin{align}
     \sup_{\rejset \in \Rcal(\alpha)} \FDP(R) = \lim_{t\rightarrow \infty} \sup_{\rejset \in \Rcal(\alpha), R\subseteq [t]} \FDP(R).
\end{align} Thus, by taking an expectation over all the terms in the above equation, we get
\ifarxiv
\begin{align}
     \expect\left[\sup_{\rejset \in \Rcal(\alpha)}\ \FDP(R) \right]
     = \lim_{t\rightarrow \infty} \expect\left[\sup_{\rejset \in \Rcal(\alpha), R\subseteq [t]}\ \FDP(R) \right]
     \leq \pi_0  \alpha \leq \alpha,
\end{align}
\else
\begin{align}
     \expect\left[\sup_{\rejset \in \Rcal(\alpha)}\ \FDP(R) \right]
     &= \lim_{t\rightarrow \infty} \expect\left[\sup_{\rejset \in \Rcal(\alpha), R\subseteq [t]}\ \FDP(R) \right]
     \\ &\leq \pi_0  \alpha \\ &\leq \alpha,
\end{align}
\fi
where the equality is by the Fatou-Lebesgue theorem and the last two inequalities are by \eqref{eq:finite-K-supfdr}.
\end{proof}

This immediately implies SupFDR control for online e-BH, since the rejection sets $R_t^{\oeBH}$ are self-consistent for all $t$.

\begin{theorem}\label{theo:SupFDR_control}
    The online e-BH procedure controls the SupFDR at level $\alpha$ under arbitrary dependence between the e-values.
\end{theorem}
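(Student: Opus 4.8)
The plan is to reduce the statement entirely to Proposition~\ref{prop:e-self-consist}, so that the only genuine work is certifying that every rejection set the procedure produces lies in the self-consistent class $\Rcal(\alpha)$. Recall that $\text{SupFDR} = \expect\left[\sup_{t \in \naturals} \FDP(R_t^{\oeBH})\right]$. If I can show $R_t^{\oeBH} \in \Rcal(\alpha)$ for every $t$, then pointwise (on every sample path) each $\FDP(R_t^{\oeBH})$ is bounded by $\sup_{R \in \Rcal(\alpha)} \FDP(R)$, and hence so is the supremum over $t$. Taking expectations and invoking Proposition~\ref{prop:e-self-consist} then yields $\text{SupFDR} \leq \alpha$ immediately, with no further dependence assumptions needed since the proposition already handles arbitrary dependence.

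The heart of the argument is therefore the self-consistency check, which I would carry out by comparing the cardinality $|R_t^{\oeBH}|$ to the index $k_t^*$. Fix $t$ and write $|R_t^{\oeBH}| = \sum_{i=1}^t \ind{E_i \geq 1/(k_t^* \alpha \gamma_i)}$. By the very definition of $k_t^*$ as the largest $k$ satisfying $\sum_{j=1}^t \ind{E_j \geq 1/(k \alpha \gamma_j)} \geq k$, the sum evaluated at $k = k_t^*$ is at least $k_t^*$, so $|R_t^{\oeBH}| \geq k_t^*$. Consequently $1/(\alpha \gamma_i |R_t^{\oeBH}|) \leq 1/(k_t^* \alpha \gamma_i)$, and since every $i \in R_t^{\oeBH}$ satisfies $E_i \geq 1/(k_t^* \alpha \gamma_i)$ by construction, it also satisfies the weaker threshold $E_i \geq 1/(\alpha \gamma_i |R_t^{\oeBH}|)$. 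This is exactly the self-consistency condition, so $R_t^{\oeBH} \in \Rcal(\alpha)$ for all $t$.

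There is essentially no serious obstacle here: the entire probabilistic and dependence-handling burden was discharged in Proposition~\ref{prop:e-self-consist} (and in particular in the finite-$K$ bound~\eqref{eq:finite-K-supfdr} together with the Fatou--Lebesgue passage to the infinite stream). The one point requiring care is bookkeeping on the inequality direction relating $k_t^*$ to $|R_t^{\oeBH}|$ --- one must confirm that the cardinality dominates the index, not the reverse, so that the self-consistency threshold is indeed the \emph{looser} of the two. Once this is verified, the conclusion $\text{SupFDR} \leq \alpha$ follows in a single line.
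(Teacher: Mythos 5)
Your proposal is correct and is exactly the paper's argument: the paper also deduces Theorem~\ref{theo:SupFDR_control} directly from Proposition~\ref{prop:e-self-consist} by observing that each $R_t^{\oeBH}$ is self-consistent. You merely spell out the self-consistency check (via $|R_t^{\oeBH}| \geq k_t^*$, which makes the self-consistency threshold the looser one) that the paper leaves implicit, and that check is carried out correctly.
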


The existing state-of-the-art method for OnlineFDR control under arbitrary dependence of the e-values is e-LOND \citep{xu2024online}. In the following subsection, we show that e-LOND even provides SupFDR control and that online e-BH uniformly improves e-LOND.

\subsection{Online e-BH uniformly improves e-LOND\label{sec:e-LOND}}
 The e-LOND algorithm \citep{xu2024online} rejects hypothesis $H_t$, if $E_t\geq (\alpha_t^{\text{e-LOND}})^{-1}$, where
 \ifarxiv
$$\alpha_t^{\text{e-LOND}}\coloneqq\alpha \gamma_t \left(|R_{t-1}^{\text{e-LOND}}|+1\right) \quad \text{and} \quad R_{t}^{\text{e-LOND}}\coloneqq\{i\leq t: E_i\geq (\alpha_i^{\text{e-LOND}})^{-1}\}. $$
\else
\begin{align}&\alpha_t^{\text{e-LOND}}\coloneqq\alpha \gamma_t \left(|R_{t-1}^{\text{e-LOND}}|+1\right) \quad \\ \text{and} \quad &R_{t}^{\text{e-LOND}}\coloneqq\{i\leq t: E_i\geq (\alpha_i^{\text{e-LOND}})^{-1}\}. \end{align}
\fi
 It follows immediately that $R_{t}^{\text{e-LOND}} \subseteq R_t^{\text{e-BH}}$ for all $t\in \mathbb{N}$, where $R_t^{\text{e-BH}}$ is the rejection set of the online e-BH procedure at step $t$. Hence, online e-BH uniformly improves e-LOND, meaning that the former makes at least the same discoveries as the latter on every run. \citet{xu2024online} only proved FDR control for e-LOND at fixed times.
 We note that e-LOND is also a self-consistent procedure so its SupFDR control follows immediately from Proposition~\ref{prop:e-self-consist}.
 \begin{proposition}\label{prop:SupFDR_e-lond}
    The e-LOND procedure \citep{xu2024online} controls SupFDR at level $\alpha$ under arbitrary dependence between the e-values.
\end{proposition}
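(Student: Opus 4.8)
The plan is to reduce everything to verifying that each e-LOND rejection set $R_t^{\text{e-LOND}}$ belongs to the self-consistent class $\Rcal(\alpha)$, after which Proposition~\ref{prop:e-self-consist} closes the argument instantly. Concretely, once I establish $R_t^{\text{e-LOND}} \in \Rcal(\alpha)$ for every $t$, then pointwise $\sup_t \FDP(R_t^{\text{e-LOND}}) \leq \sup_{R \in \Rcal(\alpha)} \FDP(R)$, and taking expectations gives
\[
\supFDR = \EE\left[\sup_t \FDP(R_t^{\text{e-LOND}})\right] \leq \EE\left[\sup_{R \in \Rcal(\alpha)} \FDP(R)\right] \leq \alpha.
\]

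First I would record the structural fact that the sequence $(R_t^{\text{e-LOND}})_{t \in \NN}$ is nested and increasing. The threshold $\alpha_i^{\text{e-LOND}} = \alpha \gamma_i (|R_{i-1}^{\text{e-LOND}}| + 1)$ depends only on data revealed strictly before step $i$, so membership of index $i$ is decided irrevocably at time $i$ and never revisited; hence $R_t^{\text{e-LOND}} \cap [i] = R_i^{\text{e-LOND}}$ for all $i \leq t$. In particular, whenever index $i$ is rejected, $i$ is the only new element added at that step, so $|R_i^{\text{e-LOND}}| = |R_{i-1}^{\text{e-LOND}}| + 1$.

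The core step is then to check the self-consistency inequality for an arbitrary $i \in R_t^{\text{e-LOND}}$. Since $i$'s membership is fixed at time $i$, the rejection rule yields $E_i \geq (\alpha_i^{\text{e-LOND}})^{-1} = \left(\alpha \gamma_i (|R_{i-1}^{\text{e-LOND}}| + 1)\right)^{-1} = \left(\alpha \gamma_i |R_i^{\text{e-LOND}}|\right)^{-1}$. By nestedness $|R_i^{\text{e-LOND}}| \leq |R_t^{\text{e-LOND}}|$, and since $m \mapsto (\alpha \gamma_i m)^{-1}$ is decreasing in $m$, I obtain $E_i \geq \left(\alpha \gamma_i |R_t^{\text{e-LOND}}|\right)^{-1}$, which is exactly the self-consistency condition for $R_t^{\text{e-LOND}}$ at index $i$. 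As $i$ was arbitrary, this shows $R_t^{\text{e-LOND}} \in \Rcal(\alpha)$.

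The only point demanding care — and the one I would flag as the main obstacle — is the mismatch between the cardinality e-LOND uses to reject ($|R_{i-1}^{\text{e-LOND}}| + 1$, frozen at step $i$) and the cardinality $|R_t^{\text{e-LOND}}|$ appearing in the definition of $\Rcal(\alpha)$ evaluated at the later time $t$. These agree at $t = i$, but the latter can only grow as $t$ increases. The argument goes through precisely because a larger rejection set relaxes the self-consistency threshold, so an inequality verified against the smaller count $|R_i^{\text{e-LOND}}|$ automatically persists against $|R_t^{\text{e-LOND}}|$. This monotonicity is exactly what reconciles the ``look-ahead'' threshold built into $\Rcal(\alpha)$ with e-LOND's causal, frozen-at-rejection threshold.
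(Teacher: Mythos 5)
Your proof is correct and takes exactly the same route as the paper: the paper's proof consists of the single observation that e-LOND is a self-consistent procedure, so SupFDR control follows immediately from Proposition~\ref{prop:e-self-consist}. Your write-up simply makes explicit the details the paper leaves implicit --- nestedness of the e-LOND rejection sets and the fact that the self-consistency threshold $(\alpha\gamma_i m)^{-1}$ relaxes as the cardinality $m$ grows --- so nothing further is needed.
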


The power gain of the online e-BH method compared to e-LOND is illustrated in Figure~\ref{fig:sim_e_lond}. The left plot shows that the online e-BH method substantially improves e-LOND when the signal of the alternative hypotheses is weak. This improvement is slightly smaller when the signal strong (right plot). The simulation setup is in Appendix~\ref{sec:sim_setup}.

\ifarxiv
\begin{figure*}[h!]
\centering
\includegraphics[width=0.8\textwidth]{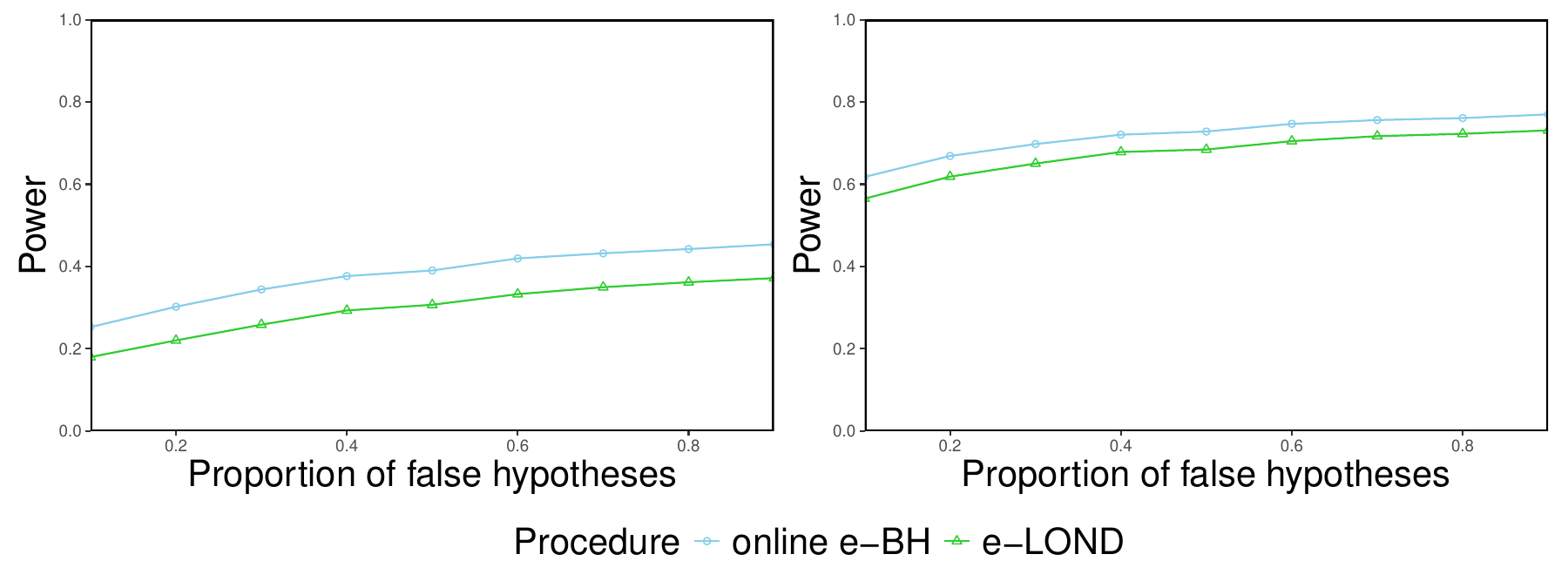}
\caption{Power comparison of online e-BH and e-LOND for different proportions of false hypotheses. In the left (right) plot, the signal of the alternative is weak (strong). 
\label{fig:sim_e_lond} }\end{figure*}
\else
\begin{figure*}[h!]
\centering
\includegraphics[width=14cm]{Plot_e_lond.pdf}
\caption{Power comparison of online e-BH and e-LOND for different proportions of false hypotheses. In the left (right) plot, the signal of the alternative is weak (strong). 
\label{fig:sim_e_lond} }\end{figure*}
\fi



\begin{remark}
    \citet{xu2023more, xu2024online} improved the e-BH and e-LOND procedure, respectively, using a randomization approach. The idea is to define a randomized e-value as
    $$
    S_{\hat{\alpha}_t}(E_t)\coloneqq \max(\mathbbm{1}\{E_t\geq \hat{\alpha}_t\} E_t, \mathbbm{1}\{U\leq E_t \hat{\alpha}_t\}\hat{\alpha}_t^{-1}),
    $$
    where $\hat{\alpha}_t$ is a potentially data-dependent significance level obtained by e-BH or e-LOND and $U$ a uniform random variable on $[0,1]$ that is independent of $E_t$ and $\hat{\alpha}_t$. \citet{xu2023more} showed that if $E_t$ is an e-value, then $S_{\hat{\alpha}_t}(E_t)$ is a valid e-value as well. Furthermore, $S_{\hat{\alpha}_t}(E_t)\geq \hat{\alpha}_t^{-1}$ iff $E_t\geq \hat{\alpha}_t^{-1} U$, implying that e-BH and e-LOND are uniformly more powerful when applied to $S_{\hat{\alpha}_t}(E_t)$ instead of $E_t$. One could consider applying the same idea to online e-BH. However, note that we do not have one data dependent level with the e-BH but a sequence of levels that is increasing overtime due to future rejections. Hence, one would need to fix a time at which the stochastic rounding approach is applied. For example, one could set
    $$\hat{\alpha}_t\coloneqq \alpha \gamma_t (k_{t-1}^*+1).$$
    In this case stochastic rounding would increase the probability of rejecting $H_t$ at time t. However, if we do not reject, then $S_{\hat{\alpha}_t}(E_t)=0$ and there would not be any chance to obtain a rejection at a later stage. In this case, online e-BH becomes equivalent to the randomized e-LOND procedure. One could also consider applying stochastic rounding at the (potentially data-adaptive) time $\tau$ at which entire testing process was stopped. In this case, stochastic rounding could only lead to additional rejections at the very end and online e-BH becomes a weighted generalization of the randomized e-BH procedure by \citet{xu2023more}. In the next section, we consider a deterministic rounding approach which also leads to improved e-values. However, in contrast to stochastic rounding, it does not allow to round to data-adaptive levels.

\end{remark}

\subsection{Boosting the online e-BH procedure\label{sec:boosting}}


If additional information about the marginal or joint distribution of the e-values is available, then the online e-BH procedure can be improved. We call this process \enquote{boosting} in line with the same term being used for e-BH \citep{wang2022false}. We demonstrate the following three forms of boosting in this paper.

\begin{enumerate}
    \item Boosting when information about the marginal distribution is available.
    \item Boosting when information about the marginal distribution is available and the e-values are locally dependent.
    \item Boosting when information about the marginal distribution is available and the e-values are positive regression dependent on a subset (PRDS).
\end{enumerate}

While the first and third form are online extensions of the boosting techniques by \citet{wang2022false}, the second form of boosting is new and particularly useful in the online setting. Online e-BH controls the SupFDR under the first two boosting types and the OnlineFDR under the third boosting type. In the remaining part of the section, we describe the first form of boosting, which is the cornerstone for all three methods. The details for the second and third form are provided in Appendix~\ref{sec:appn_boosting}.

We first reformulate the boosting method by \citet{wang2022false} which makes it easier to use it for our purposes; however, our approach is equivalent to theirs. Let $K$ be the number of hypotheses. Define the truncation function $T:[0,\infty]\to [0,K/\alpha]$ as
\ifarxiv
\begin{align}
    T(x)\coloneqq \sum_{k=1}^K \mathbbm{1}\left\{ \frac{K}{k\alpha} \leq x < \frac{K}{(k-1)\alpha}\right\} \frac{K}{k\alpha} \quad \text{with } T(\infty)\coloneqq\frac{K}{\alpha}, \label{eq:truncation_offline}
\end{align}
\else
\begin{align}
    T(x)\coloneqq \sum_{k=1}^K \mathbbm{1}\left\{ \frac{K}{k\alpha} \leq x < \frac{K}{(k-1)\alpha}\right\} \frac{K}{k\alpha},\ T(\infty)\coloneqq\frac{K}{\alpha}, \label{eq:truncation_offline}
\end{align}
\fi
where we again use the convention $1/0=\infty$.
Note that $T(x)$ only takes values in the set $\{K/(\alpha k): k\in \{1,\ldots,K\}\}\cup \{0\}.$ Let $X_1,\ldots X_K$ be any nonnegative random variables (not necessarily e-values). The idea of boosting is that e-BH applied on $X_1,\ldots, X_K$ rejects the same hypotheses as if applied on $T(X_1),\ldots, T(X_K)$. Hence, if $T(X_1),\ldots, T(X_K)$ are valid e-values for $H_1,\ldots, H_K$, e-BH applied on $X_1,\ldots, X_K$ controls the SupFDR (even though the latter may not be e-values). We call such random variables $X_t$ \emph{boosted e-values}. A simple approach is to set $X_t=b_tE_t$, where $E_t$ is a valid e-value, for some boosting factor $b_t\geq 1$. Note that such a boosting factor always exists, since $T(E_t)\leq E_t$.

For a given $(\gamma_t)_{t\in \mathbb{N}}$ we can simply extend this approach to the online e-BH procedure by defining the truncation function for the t-th e-value as 
\ifarxiv
\begin{align}
T_t(x)\coloneqq \sum_{k=1}^\infty \mathbbm{1}\left\{ \frac{1}{k\alpha\gamma_t} \leq x < \frac{1}{(k-1)\alpha\gamma_t}\right\} \frac{1}{k\alpha \gamma_t} , \quad \text{with } T_t(\infty)\coloneqq\frac{1}{\alpha\gamma_t} , \label{eq:truncation_online}
\end{align}
\else
\begin{align}
T_t(x)\coloneqq \sum_{k=1}^\infty \mathbbm{1}\left\{ \frac{1}{k\alpha\gamma_t} \leq x < \frac{1}{(k-1)\alpha\gamma_t}\right\} \frac{1}{k\alpha \gamma_t}, \label{eq:truncation_online}
\end{align}
with $T_t(\infty)\coloneqq\frac{1}{\alpha\gamma_t}$
\fi
 and the convention $0 \cdot \infty =0$.
 In case of $\gamma_t=1/K$, $t\leq K$, we have $T_t=T$ for all $t\leq K$. We capture the result about boosted e-values in the following proposition.


\begin{proposition}\label{prop:boosted}
    Let $X_1, X_2, \ldots$ be a sequence of nonnegative random variables such that $\mathbb{E}_{H_t}[T_t(X_t)]\leq 1$ for all $t\in \mathbb{N}$. Then the online e-BH procedure applied to $X_1, X_2, \ldots$ controls the SupFDR at level $\alpha$.
\end{proposition}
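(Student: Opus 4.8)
The plan is to reduce the claim to the unboosted guarantee of Theorem~\ref{theo:SupFDR_control} by showing that running online e-BH on the boosted variables $X_1, X_2, \ldots$ is \emph{pathwise indistinguishable} from running it on the truncated variables $T_1(X_1), T_2(X_2), \ldots$. The hypothesis $\expect_{H_t}[T_t(X_t)] \leq 1$ together with the nonnegativity of $T_t$ makes each $T_t(X_t)$ a valid e-value for $H_t$, and $(T_t(X_t))_{t}$ inherits whatever (arbitrary) dependence the $X_t$ carry. Hence Theorem~\ref{theo:SupFDR_control} yields SupFDR control at level $\alpha$ for the procedure run on $(T_t(X_t))_t$, and the indistinguishability transports this bound verbatim to the procedure run on $(X_t)_t$.

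The crux is the following threshold-preservation property of the truncation map in \eqref{eq:truncation_online}: for every $t \in \mathbb{N}$, every $k \in \{1,2,\ldots\}$, and every $x \in [0,\infty]$,
\[
T_t(x) \geq \frac{1}{k\alpha\gamma_t} \iff x \geq \frac{1}{k\alpha\gamma_t}.
\]
I would verify this directly from the definition. By construction $T_t$ rounds its argument \emph{down} to the grid $\{1/(m\alpha\gamma_t) : m \in \mathbb{N}\} \cup \{0\}$: if $\frac{1}{m\alpha\gamma_t} \leq x < \frac{1}{(m-1)\alpha\gamma_t}$ then $T_t(x) = \frac{1}{m\alpha\gamma_t}$. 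Thus $T_t(x) \geq \frac{1}{k\alpha\gamma_t}$ holds iff $m \leq k$, which is exactly the condition $x \geq \frac{1}{k\alpha\gamma_t}$; the endpoint $x = \infty$ (where $T_t(\infty) = \frac{1}{\alpha\gamma_t}$) and the degenerate case $\gamma_t = 0$ (under the conventions $1/0 = \infty$ and $0 \cdot \infty = 0$) are handled as boundary cases.

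Given this equivalence, every indicator appearing in the definition of the online e-BH procedure is unchanged upon replacing $X_j$ by $T_j(X_j)$, since $\ind{X_j \geq 1/(k\alpha\gamma_j)} = \ind{T_j(X_j) \geq 1/(k\alpha\gamma_j)}$ for all $j$ and $k$. Consequently $k_t^*$ — the largest $k \leq t$ with $\sum_{j=1}^{t} \ind{\,\cdot\, \geq 1/(k\alpha\gamma_j)} \geq k$ — takes the same value for both input sequences at every $t$, and, applying the equivalence once more with $k = k_t^*$, the rejection condition defining $R_t^{\oeBH}$ also agrees, so the rejection sets coincide for both sequences at every $t$. The two runs are therefore identical on every sample path.

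I expect the only genuine work to be the careful case analysis establishing the displayed equivalence, in particular the treatment of the half-open grid intervals and the infinite/zero boundary conventions; everything afterward is bookkeeping. Once the displayed equivalence is in hand, the conclusion follows by applying Theorem~\ref{theo:SupFDR_control} to the valid e-values $(T_t(X_t))_t$ and transporting the resulting SupFDR bound through the pathwise equality of the rejection sets.
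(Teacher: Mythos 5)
Your proposal is correct and follows essentially the same route as the paper: the text preceding the proposition argues exactly that online e-BH run on $X_1, X_2, \ldots$ rejects the same hypotheses as when run on $T_1(X_1), T_2(X_2), \ldots$, which are valid e-values by hypothesis, so Theorem~\ref{theo:SupFDR_control} applies. Your explicit verification of the threshold-preservation equivalence $T_t(x) \geq 1/(k\alpha\gamma_t) \iff x \geq 1/(k\alpha\gamma_t)$ just fills in the detail the paper leaves implicit.
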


Due to the infinite sum in \eqref{eq:truncation_online} it might be difficult to determine boosted e-values exactly. In Appendix~\ref{sec:boosting_arb}, we propose two different approaches to handle this and demonstrate the application in a Gaussian testing setup.

A power comparison of online e-BH with boosted and non-boosted e-values is shown in Figure~\ref{fig:sim_boosted}. The left plot shows the results for weak signals and the right plot for strong signals. Boosting improves the power in both cases.

\ifarxiv
\begin{figure*}[h!]
\centering
\includegraphics[width=0.8\textwidth]{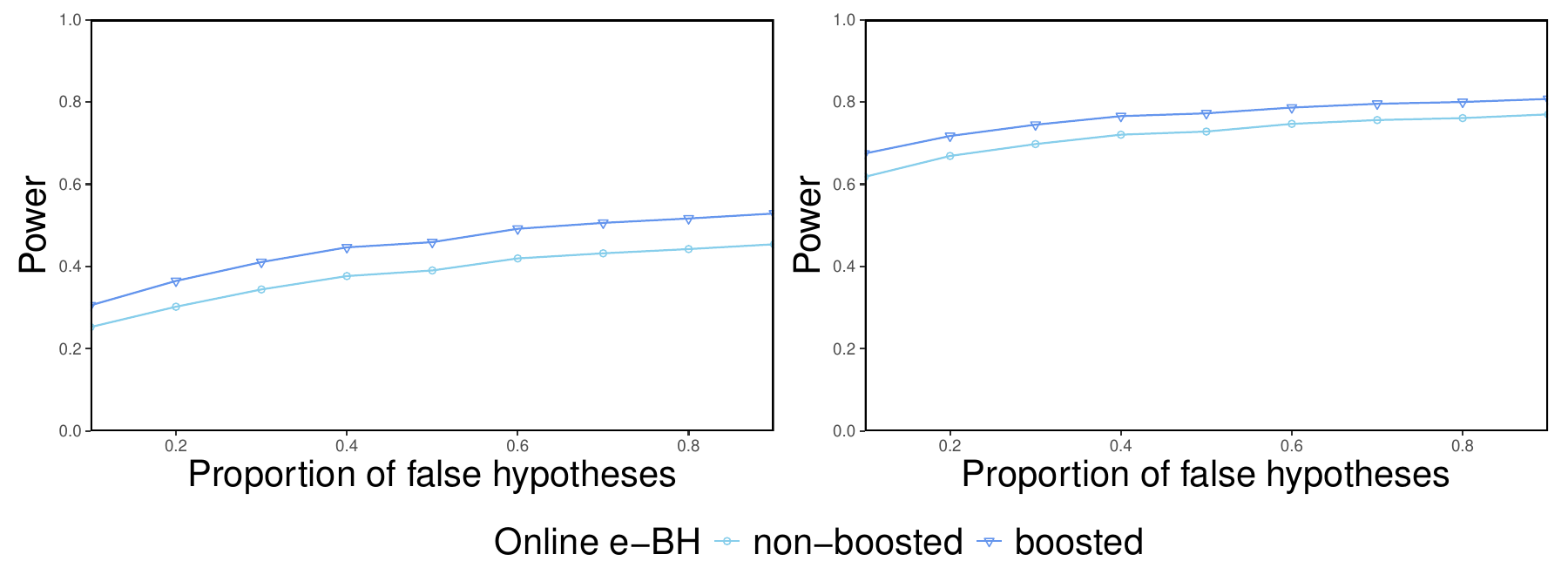}
\caption{Power comparison of online e-BH with non-boosted and boosted e-values for different proportions of false hypotheses. In the left (right) plot, the signal of the alternative is weak (strong). The simulation setup is described in Appendix~\ref{sec:sim_setup}. \label{fig:sim_boosted} }\end{figure*}
\else
\begin{figure*}[h!]
\centering
\includegraphics[width=14cm]{Plot_boosted.pdf}
\caption{Power comparison of online e-BH with non-boosted and boosted e-values for different proportions of false hypotheses. In the left (right) plot, the signal of the alternative is weak (strong). The simulation setup is described in Appendix~\ref{sec:sim_setup}. \label{fig:sim_boosted} }\end{figure*}
\fi

\subsection{Online e-BH with p-values\label{sec:BY}}

In the following, we consider applying the online e-BH procedure to $\psi_1(P_1), \psi_2(P_2), \ldots$, where each $P_t$ is a p-value for $H_t$ and $\psi_t:[0,1]\to [0,\infty]$, $t\in \mathbb{N}$ is a function with $\psi_t(0)=\infty$. \citet{wang2022false} called $\psi_t$ a \textit{decreasing transform} and (for simplicity) assumed that $\psi_1=\psi_2=\ldots=\psi$ for a strictly decreasing and continuous function $\psi$. Here, we allow a different function $\psi_t$ for each hypothesis $H_t$ and only assume that $\psi_t$ is nonincreasing and left-continuous. To clarify this change of definition, we use the term \textit{nonincreasing transform} instead. Define the generalized inverse of a nonincreasing transform $\psi_t$ as
$$
\psi_t^{-1}(x)\coloneqq\max\{u\in [0,1]: \psi_t(u)\geq x\}.
$$
With this, boosting (see Section~\ref{sec:boosting}) can be applied to derive the following result, which is formally proven in the Appendix~\ref{appn:proof_p-vals}.
\begin{proposition}\label{prop:p-values}
    Let $\psi_1, \psi_2,\ldots$ be nonincreasing transforms.
    \begin{enumerate}[label=(\roman*)]
        \item If for every $t$ it holds that
    \begin{align}
        \sum_{k=1}^\infty \frac{1}{k\alpha \gamma_t}\left(\psi_t^{-1}\left(\frac{1}{k \alpha \gamma_t}\right)- \psi_t^{-1}\left(\frac{1}{(k-1) \alpha \gamma_t}\right)\right)\leq 1, \label{eq:condition_p}
    \end{align}
    then the online e-BH procedure applied to the random variables $\psi_1(P_1),\psi_2(P_2), \ldots$ controls the SupFDR at level $\alpha$. \label{bull:prop_p_1}
    \item If the p-values $P_1,P_2,\ldots$ are PRDS (see Appendix~\ref{sec:PRDS} for a formal definition of PRDS) and for all $t$ it holds that
    \begin{align}
        \sup_{k\in \mathbb{N}} \frac{1}{k \alpha \gamma_t} \psi_t^{-1}(1/(k\alpha \gamma_t)) \leq 1,
        \label{eq:condition_p_PRDS}
    \end{align}
    then the online e-BH procedure applied to the random variables $\psi_1(P_1),\psi_2(P_2), \ldots$ controls the OnlineFDR at level $\alpha$.
    \end{enumerate}
\end{proposition}

In the offline case, the Benjamini-Hochberg (BH) procedure \citep{benjamini1995controlling} is equivalent to the e-BH procedure applied to the random variables $1/P_1,\ldots, 1/P_k$ \citep{wang2022false}. One could derive an online BH procedure in the same way by setting the decreasing transforms to $\psi_t(x)=1/x=\psi_t^{-1}(x)$. Proposition~\ref{prop:p-values} immediately implies that this online BH procedure controls the OnlineFDR, if the p-values are PRDS. In the next section, we study the behavior of the online BH procedure more deeply.

\begin{remark}\label{remark:online_BH} \citet{benjamini2001control} introduced an offline p-value based multiple testing procedure with FDR control (BY procedure) and \citet{blanchard2008two} generalized this using a shape function $\beta$ (BR procedure). Let $\beta (k)=\int_{0}^k x \ d\nu (x)$, $k\in \{1,\ldots, K\}$, for some probability measure $\nu$ on $(0,\infty)$.  \citet{blanchard2008two} proved that FDR control is guaranteed under arbitrary dependence of the p-values, if the hypotheses with the $k^*$ smallest p-values are rejected, where
\begin{align}
k^*=\max\left\{k\in \{1,\ldots,K\}: \sum_{j=1}^K \mathbbm{1}\{P_j\leq \beta(k) \alpha /K\}\geq k\right\}. \label{eq:online_BR}
\end{align}
The BY procedure is obtained by setting $\beta(k)=k/\ell_K$, where $\ell_K\coloneqq\sum_{i=1}^K \frac{1}{i}\approx \log(K)$.
\citet{fisher2022online} generalized the method by \citet{blanchard2008two} to the online ARC case in the same manner as we did for the e-BH procedure by using general weights $(\gamma_t)_{t\in \mathbb{N}}$ instead of $1/K$ for each hypothesis (see \eqref{eq:toad} in Appendix for $d_t=\infty$). We call this the online BR procedure in the following.  However, \citet{fisher2022online} only showed FDR control for the online BR procedure at fixed times. In Appendix~\ref{sec:appn_supfdr}, we show how the online BR procedure can be written as a special case of the online e-BH procedure using Proposition~\ref{prop:p-values} \ref{bull:prop_p_1} and therefore provides SupFDR control.
\end{remark}

\begin{remark} \citet{javanmard2015online} introduced a version of their LOND algorithm with OnlineFDR control under arbitrary dependence and later \citet{zrnic2021asynchronous} generalized this method to the reshaped LOND (r-LOND) procedure. The r-LOND procedure rejects hypothesis $H_t$, if $P_t\leq \alpha_t^{\text{r-LOND}}$, where
\ifarxiv
    $$\alpha_t^{\text{r-LOND}}\coloneqq \alpha \gamma_t \beta_t \left(|R_{t-1}^{\text{r-LOND}}|+1\right) \quad \text{and} \quad R_{t-1}^{\text{r-LOND}}\coloneqq \{i\leq t-1: P_i\leq \alpha_i^{\text{r-LOND}}\} $$
\else
\begin{align} &\alpha_t^{\text{r-LOND}}\coloneqq \alpha \gamma_t \beta_t \left(|R_{t-1}^{\text{r-LOND}}|+1\right) \\ \text{and} \quad &R_{t-1}^{\text{r-LOND}}\coloneqq \{i\leq t-1: P_i\leq \alpha_i^{\text{r-LOND}}\} \end{align}
\fi
    and $\beta_t$ is some shape function. In Appendix~\ref{sec:appn_supfdr} we show that r-LOND could be derived as a special case of a boosted e-LOND procedure and therefore also provides SupFDR control under arbitrary dependence of the p-values.
\end{remark}







\section{The online BH procedure\label{sec:onlineBH}}


In the same manner as for e-values, one can define an online ARC version of the p-value based Benjamini-Hochberg (BH) procedure \citep{benjamini1995controlling}, or the \emph{online BH procedure} for short. Recall, as before, that $(\gamma_t)_{t\in \mathbb{N}}$ is a nonnegative sequence that sums to one. Then, define:
 \begin{align}
 &R_t^{\oBH}\coloneqq \left\{i\leq t: P_i\leq {k_t^* \alpha \gamma_i}\right\}, t\in \mathbb{N}, \text{ where}
 \\
&k_t^* =\max\left\{k\in \{1,\ldots,t\}: \sum_{j=1}^t \mathbbm{1}\{P_j\leq k \alpha  \gamma_j \} \geq k\right\}
\end{align}
with the convention $\max(\emptyset)=0$. Again, note that applying the offline BH procedure to the p-values $P_1,\ldots, P_t$ at each time $t$ would not define an online ARC procedure.

The online BH procedure is a special instance of the TOAD framework under positive dependence by \citet{fisher2022online}, which is obtained if no decision deadlines are specified for the TOAD algorithm, meaning  that $d_t=\infty$ for all $t\in \mathbb{N}$. However, the great generality of the TOAD framework leads to a certain notational overhead and complexity that is unnecessary to define and analyze the online BH procedure, leading to this special case being lost. The online BH procedure, we argue, is a simple and important algorithm which deserves particular emphasis.

 As with the online e-BH procedure, if we set $\gamma_1,\dots,\gamma_K$ to equal $1/K$ and the rest of the $\gamma$ sequence to equal zero, one simply obtains the BH procedure at time $K$, so online BH is a \emph{generalization} of the BH procedure.  Although $\text{OnlineFDR}$ control of the online BH procedure under PRDS is implicitly proven by \citet{fisher2022online}, we also provide a short, direct proof in the next subsection. Furthermore, we prove that the online BH procedure additionally controls the SupFDR in case of PRDN (Section~\ref{sec:supFDR_PRDN}) and WNDN (Section~\ref{sec:neg_dep}) p-values, and $\text{SupFDR}^K$ for arbitrary p-values (\Cref{sec:supFDR_oBH_arb}).

\subsection{OnlineFDR control under positive dependence}\label{sec:ofdr-prds}

The most general condition under which FDR control of the BH procedure is usually proven is \emph{positive regression dependence on a subset} (PRDS)  \citep{benjamini2001control} (see Definition~\ref{def:PRDS} in Appendix~\ref{sec:PRDS}). The online BH procedure also controls the OnlineFDR under PRDS.

 \begin{proposition}
     If the null p-values are PRDS, the online BH procedure ensures that $\textnormal{OnlineFDR} \leq \alpha$.
 \end{proposition}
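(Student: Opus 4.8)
The plan is to peel off the ``online'' aspect first and then reduce to a classical fixed-time BH argument. Since $\textnormal{OnlineFDR}=\sup_{t\in\naturals}\FDR_t$ by definition, it suffices to show $\FDR_t\le\alpha$ for every fixed $t$. The key structural observation is that, at a fixed time $t$, the set $R_t^{\oBH}=\{i\le t:P_i\le k_t^*\alpha\gamma_i\}$ is exactly the rejection set of the weighted Benjamini--Hochberg procedure run on $P_1,\dots,P_t$ with weights $\gamma_1,\dots,\gamma_t$ at level $\alpha$, and that $|R_t^{\oBH}|=k_t^*$. The latter identity holds because $k\mapsto\sum_{j\le t}\mathbbm{1}\{P_j\le k\alpha\gamma_j\}$ is nondecreasing, so the maximizer $k_t^*$ must satisfy its defining inequality with equality. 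Thus the proposition reduces to FDR control of weighted BH under PRDS, which I would then prove directly.

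Writing $R=R_t^{\oBH}$ and $k^*=|R|$, I would decompose over the null indices,
\[
\FDR_t=\sum_{i\in I_0,\,i\le t}\EE\!\left[\frac{\mathbbm{1}\{P_i\le k^*\alpha\gamma_i\}}{k^*\vee 1}\right],
\]
so that it suffices to prove the per-hypothesis bound $\EE[\mathbbm{1}\{i\in R\}/(k^*\vee 1)]\le\alpha\gamma_i$ for each null $i$; summing then gives $\FDR_t\le\alpha\sum_{i\in I_0}\gamma_i=\alpha\pi_0\le\alpha$.

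The engine of the per-hypothesis bound is a leave-one-out invariance. For a fixed null $i$ with $\gamma_i>0$, let $\tilde k_i$ be the number of rejections of the same procedure when $P_i$ is replaced by $0$; note $\tilde k_i$ is a function of $(P_j)_{j\ne i}$ only. I would first verify two deterministic facts: (a) on the event $\{i\in R\}$ one has $k^*=\tilde k_i$, because lowering an already-rejected $P_i$ to $0$ leaves the count $\sum_{j\le t}\mathbbm{1}\{P_j\le k\alpha\gamma_j\}$ unchanged at every level $k\ge k^*$ and hence leaves the maximizer unchanged; and (b) $\{i\in R\}=\{P_i\le \tilde k_i\alpha\gamma_i\}$. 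These together yield
\[
\EE\!\left[\frac{\mathbbm{1}\{i\in R\}}{k^*\vee 1}\right]=\EE\!\left[\frac{\mathbbm{1}\{P_i\le \tilde k_i\alpha\gamma_i\}}{\tilde k_i}\right],
\]
where the denominator is now independent of $P_i$.

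It remains to bound the right-hand side by $\alpha\gamma_i$, and this is where positive dependence enters and is the main obstacle. Under independence the bound is immediate: conditioning on $(P_j)_{j\ne i}$ makes $\tilde k_i$ constant and super-uniformity of the null $P_i$ gives $\Pr{P_i\le \tilde k_i\alpha\gamma_i\mid \tilde k_i}\le \tilde k_i\alpha\gamma_i$, whence the expectation is at most $\EE[\alpha\gamma_i]=\alpha\gamma_i$. Under PRDS, $P_i$ and $(P_j)_{j\ne i}$ are dependent, so one cannot simply condition out $\tilde k_i$; instead I would expand the expectation as the layer-cake sum $\sum_k \tfrac1k\,\Pr{P_i\le k\alpha\gamma_i,\ \tilde k_i=k}$ and exploit that $\{\tilde k_i\ge k\}$ is a decreasing event not involving $P_i$. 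The PRDS property---that conditioning on smaller values of the null $P_i$ only increases the probability of such decreasing events---then lets me control each term, following the template of \citet{benjamini2001control} for step-up procedures (equivalently, the self-consistency-plus-monotonicity argument of \citet{blanchard2008two}). Getting the monotonicity direction of the PRDS condition to align with the events $\{\tilde k_i\ge k\}$, and arranging the summation so that the positive-dependence inequality is applied with the correct sign, is the delicate part; everything else is the definitional reduction and deterministic bookkeeping. Finally, summing the per-hypothesis bounds over the nulls and taking the supremum over $t$ delivers $\textnormal{OnlineFDR}\le\alpha$.
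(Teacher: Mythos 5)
Your proposal is correct and its skeleton is the same as the paper's: reduce OnlineFDR to $\FDR_t$ at each fixed $t$, decompose $\FDR_t$ over the null indices, and establish the per-null bound $\EE\left[\ind{P_i\le k_t^*\alpha\gamma_i}/(k_t^*\vee 1)\right]\le\alpha\gamma_i$ using PRDS, then sum to get $\alpha\pi_0\le\alpha$. The difference lies entirely in how that per-null bound is obtained. The paper's proof is a one-line appeal to Lemma 1(b) of \citet{ramdas2019unified}, a super-uniformity lemma asserting exactly this inequality for any coordinatewise nonincreasing function of the p-value vector; since $k_t^*$ is such a function, no leave-one-out construction is needed. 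You instead reconstruct the machinery behind such lemmas. Your deterministic scaffolding is all verifiably correct: $|R_t^{\oBH}|=k_t^*$ follows from the maximality argument you give; on $\{i\in R_t^{\oBH}\}$, setting $P_i$ to $0$ changes no counts at levels $k\ge k_t^*$ and hence leaves the maximizer unchanged, so $k_t^*=\tilde k_i$ there; and $\{i\in R_t^{\oBH}\}=\{P_i\le\tilde k_i\alpha\gamma_i\}$, where the nontrivial inclusion holds because $P_i\le\tilde k_i\alpha\gamma_i$ forces the original count at level $\tilde k_i$ to be at least $\tilde k_i$, giving $k_t^*\ge\tilde k_i$. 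A nice byproduct of your route is that it reduces the problem to a two-variable statement about $(P_i,\tilde k_i)$ with $\tilde k_i$ a function of $(P_j)_{j\ne i}$ alone, which is precisely the form of Lemma 3.2 of \citet{blanchard2008two}; you could close the proof by citing that lemma (or the weighted step-up results of that paper) rather than re-proving it.

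One caution about the step you yourself flag as delicate, since that is where essentially all of the probabilistic content of the proposition lives. After the layer-cake expansion $\sum_k \frac{1}{k}\prob{P_i\le k\alpha\gamma_i,\ \tilde k_i=k}$, the natural telescoping combined with the PRDS inequality for the decreasing events $\{\tilde k_i\ge k\}$ requires, at the hand-off between consecutive levels, inequalities of the form $(k+1)\prob{P_i\le k\alpha\gamma_i}\ge k\,\prob{P_i\le (k+1)\alpha\gamma_i}$. These hold with equality for exactly uniform nulls but can fail for merely superuniform nulls (for instance, a null p-value with an atom just above $k\alpha\gamma_i$), so the straightforward version of the Benjamini--Yekutieli argument does not go through verbatim in the generality claimed by the proposition. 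The arguments in \citet{benjamini2001control}, \citet{blanchard2008two}, and \citet{ramdas2019unified} are organized to avoid this trap (Blanchard--Roquain's lemma, in particular, assumes only $\prob{U\le u}\le u$), so your appeal to that template is legitimate; just be aware that this is the one part of your plan that is invoked rather than executed, and it is genuinely not routine bookkeeping.
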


\begin{proof}
For any $t\in \mathbb{N}$, we have
    $$\text{FDR}_t=\sum_{i\leq t, i\in I_0} \mathbb{E}\left[ \frac{\mathbbm{1}\{P_i\leq {k_t^* \alpha \gamma_i}\}}{k_t^*} \right]\leq \sum_{i\leq t, i\in I_0} \alpha \gamma_i \leq \alpha, $$
    where the first inequality follows from Lemma 1 (b) of \citet{ramdas2019unified}.
\end{proof}

In the next subsection, we prove that online BH controls the SupFDR at a slightly inflated level under a weaker assumption than PRDS. 

\begin{remark}\label{remark:LOND}
    In the same manner as online e-BH uniformly improves e-LOND (Section~\ref{sec:e-LOND}), online BH uniformly improves the LOND procedure introduced by \citet{javanmard2015online} which controls the OnlineFDR under PRDS \citep{zrnic2021asynchronous}. The LOND procedure is exactly defined as e-LOND but with p-values instead of e-values, meaning $H_t$ is rejected if $P_t\leq \alpha \gamma_t (|R_{t-1}^{\text{LOND}}|+1)$ where $|R_{t-1}^{\text{LOND}}|$ is the number of previous rejections. Another uniformly more powerful online procedure than LOND is provided by the LORD algorithm \citep{javanmard2018online, ramdas2017online}, which is only known to control the OnlineFDR under independence of the p-values. In Appendix~\ref{sec:LORD} we compare LORD to our online BH procedure.
\end{remark}

\begin{remark}
It is easy to see from the proof above that online BH even controls the OnlineFDR at level $\alpha \pi_0$. In order to avoid this conservatism in the offline case, Storey and colleagues \citep{storey2002direct, storey2004strong} introduced an adaptive BH procedure by estimating $\pi_0$. In Appendix~\ref{appn:SBH}, we introduce the natural online generalization of this Storey-BH procedure and prove its OnlineFDR control under independence of the p-values. 
\end{remark}


\subsection{SupFDR control under positive dependence\label{sec:supFDR_PRDN}}

We will now show that the online BH procedure can also control the SupFDR, albeit at a slightly inflated level. Note that $R_t^\oBH$ is clearly a self-consistent set (just like online e-BH) --- it is the largest such set when $R$ is restricted to be a subset of $\{1,\ldots,t\}$.
Recall that $\mathcal{R}(\alpha)$ is the set of self-consistent rejection sets. The following theorem provides a weighted version of the \texttt{FDR-linking} theorem (i.e., weighted analog of Theorems 1 and 2 in \citet{su_fdr-linking_theorem_2018}).
\begin{theorem}[Weighted \texttt{FDR-linking} theorem]\label{theo:FDR-linking}
    Under any dependence structure among p-values $(P_t)_{t \in [K]}$ for fixed $K \in \naturals$, we can ensure the following is true:
    \begin{align}
        \expect\left[\sup_{R \in \mathcal{R}(\alpha)}\ \FDP(R)\right] \leq \pi_0\alpha + \pi_0\alpha \int\limits_{\pi_0\alpha}^1 \frac{\FDR_0(x)}{x^2}\ \nd x,
    \end{align} where $\FDR_0(x)$ is the FDR of the weighted BH procedure applied only to the p-values corresponding to the null hypotheses, i.e., $I_0$, with weights $(\gamma_t / \pi_0)_{t \in I_0}$.
\end{theorem}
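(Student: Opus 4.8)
The plan is to reduce the random quantity $\sup_{R \in \Rcal(\alpha)}\FDP(R)$ to a deterministic functional of the null p-values only, and then convert its expectation into the stated integral by a layer-cake argument. First I would introduce the null-count function
$$V_0(r) \coloneqq \bigl|\{i \in I_0 : P_i \le \alpha \gamma_i r\}\bigr|, \qquad r \ge 1,$$
which is nondecreasing in $r$ and bounded by $|I_0|$. Any self-consistent set $R$ of size $r = |R|$ has each member satisfying $P_i \le \alpha\gamma_i |R|$, so its false rejections $R \cap I_0$ form a subset of $\{i \in I_0 : P_i \le \alpha\gamma_i r\}$; hence $|R \cap I_0| \le V_0(r)$ and $\FDP(R) \le V_0(r)/r$. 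Since the FDP is also at most $1$, taking the supremum over $R$ gives the deterministic bound
$$\sup_{R \in \Rcal(\alpha)} \FDP(R) \le \min\Bigl(1,\ M\Bigr), \qquad M \coloneqq \sup_{r \ge 1} \frac{V_0(r)}{r}.$$

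Next I would apply the layer-cake identity $\expect[\min(1,M)] = \int_0^1 \prob{M > c}\,\nd c$ and identify the tail event with a null-only BH rejection. The key observation is that the weighted BH procedure run only on the null p-values with weights $(\gamma_t/\pi_0)_{t\in I_0}$ at level $x$ rejects at least one hypothesis precisely when $V_0\bigl(xk/(\pi_0\alpha)\bigr) \ge k$ for some integer $k \ge 1$; and because every such rejection is false, its FDR equals the probability of making at least one rejection, so $\FDR_0(x) = \prob{\text{null-only weighted BH at level } x \text{ rejects}}$. I would then establish the event inclusion $\{M > c\} \subseteq \{\text{null-only BH at level } \pi_0\alpha/c \text{ rejects}\}$: if $V_0(r) > cr$ for some real $r \ge 1$, then setting $k = V_0(r) \ge 1$ and using $k/c > r$ together with the monotonicity of $V_0$ gives $V_0(k/c) \ge V_0(r) = k$, which is exactly the null-BH self-consistency condition at level $x = \pi_0\alpha/c$. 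This yields $\prob{M > c} \le \FDR_0(\pi_0\alpha/c)$, with no dependence assumption used anywhere so far.

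Finally I would split the integral at $c = \pi_0\alpha$. On $(0,\pi_0\alpha)$ the argument $x = \pi_0\alpha/c$ exceeds $1$, so bounding $\FDR_0 \le 1$ contributes at most $\pi_0\alpha$, giving the first term. On $(\pi_0\alpha, 1)$ I would substitute $x = \pi_0\alpha/c$, $\nd c = -\pi_0\alpha\, x^{-2}\,\nd x$, which maps the interval back to $(\pi_0\alpha, 1)$ and produces exactly $\pi_0\alpha \int_{\pi_0\alpha}^1 \FDR_0(x)/x^2\,\nd x$. Adding the two pieces yields the claimed bound.

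I expect the main obstacle to be the event-inclusion step relating $\{M > c\}$ to a null-only BH rejection: one must handle the strict-versus-nonstrict inequalities and the mismatch between the real-valued index $r$ used in $M$ and the integer reduction factor $k$ of the BH procedure, which is precisely where the monotonicity of $V_0$ does the essential work. The layer-cake conversion and change of variables are routine once this correspondence (and the identity $\FDR_0(x) = \prob{\text{at least one null rejection}}$) is in hand; notably, the entire argument is distribution-free, so the dependence structure enters only later through whatever bound one has on $\FDR_0$.
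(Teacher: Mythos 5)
Your proof is correct and is essentially the paper's own argument in different notation: your $M=\sup_{r\ge 1}V_0(r)/r$ is (up to the harmless truncation at $1$) the paper's $\pi_0\alpha/P^{\nullSimes}$, your event inclusion $\{M>c\}\subseteq\{\text{null-only BH at level }\pi_0\alpha/c\text{ rejects}\}$ is the weighted BH--Simes duality the paper invokes, and your identity $\FDR_0(x)=\prob{\text{null-only BH rejects at level }x}$ is the paper's identification of the null-Simes c.d.f.\ $F$ with $\FDR_0$. The only real difference is bookkeeping in computing the expectation: you use the layer-cake (tail-probability) formula plus a change of variables, while the paper integrates against $F$ and applies integration by parts --- two equivalent routes to the same integral.
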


The proof of Theorem~\ref{theo:FDR-linking} is provided in Appendix~\ref{sec:proofs}. We can use the weighted \texttt{FDR-linking} theorem to prove an upper bound of the expected maximum $\text{FDP}$ over all self-consistent discovery sets under \textit{positive regression dependence within nulls} (PRDN). PRDN is a positive dependence assumption that is restricted to the dependency between null p-values and is therefore weaker than PRDS (see Appendix~\ref{sec:PRDS} for a definition of PRDN).


\begin{proposition}\label{prop:p-self-consist}
    For PRDN p-values $(P_t)_{t \in \naturals}$, we have that
    \ifarxiv
    \begin{align}
        \expect\left[\sup_{\rejset \in \Rcal(\alpha)} \FDP(\rejset)\right] \leq
        \pi_0\alpha(1 + \log((\pi_0\alpha)^{-1})) \leq \alpha(1 + \log(\alpha^{-1})).
    \end{align}
    \else
\begin{align}
        \expect\left[\sup_{\rejset \in \Rcal(\alpha)} \FDP(\rejset)\right] &\leq
        \pi_0\alpha(1 + \log((\pi_0\alpha)^{-1})) \\ &\leq \alpha(1 + \log(\alpha^{-1})).
    \end{align}
    \fi
\end{proposition}
\begin{proof}
    Similar to the proof of Proposition~\ref{prop:e-self-consist}, we first prove the above result restricted to a finite set of $K$ hypotheses. Since under positive regression dependence and the global null, the weighted Simes p-value is a valid p-value, we have that $\FDR_0(x) = x$ under PRDN. Then, we get the upper bounds in the proposition statement for the finite $K$ hypotheses
    by carrying out the integration in the upper bound of the weighted \texttt{FDR-linking} theorem and the fact that the function $f(x) = x(1 + \log(1 / x))$ is increasing in $(0, 1]$. Now, we apply the same limit argument as seen in Proposition~\ref{prop:e-self-consist} to show this upper bound extends to the infinite hypotheses case as well. Thus, we have shown our desired result.
\end{proof}

The SupFDR control of the online BH procedure is now a direct result of Proposition~\ref{prop:p-self-consist}.
\begin{theorem}\label{thm:onlinebh-supfdr}
The online BH procedure ensures that $\supFDR \leq \alpha(1 + \log(\alpha^{-1}))$ when the p-values $(P_t)_{t\in \mathbb{N}}$ are PRDN.
\end{theorem}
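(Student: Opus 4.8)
The plan is to deduce this theorem directly from Proposition~\ref{prop:p-self-consist}, exactly as the sentence preceding the statement advertises, by exhibiting every online BH rejection set as a member of the self-consistent family $\Rcal(\alpha)$ and then dominating the realized online FDP by the supremal self-consistent FDP. The only genuinely new thing to check is the self-consistency of $R_t^{\oBH}$; everything else is a monotonicity-of-expectation argument on top of the already-established proposition.

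First I would verify that $R_t^{\oBH} \in \Rcal(\alpha)$ for every $t \in \NN$. Recall that a finite set $R$ lies in $\Rcal(\alpha)$ for p-values iff $P_i \leq \alpha \gamma_i |R|$ for each $i \in R$. Writing out $R_t^{\oBH} = \{i\leq t : P_i \leq k_t^* \alpha \gamma_i\}$, its cardinality is $|R_t^{\oBH}| = \sum_{j=1}^t \mathbbm{1}\{P_j \leq k_t^* \alpha \gamma_j\}$, which by the defining maximality of $k_t^*$ satisfies $|R_t^{\oBH}| \geq k_t^*$. Hence for any $i \in R_t^{\oBH}$ we have $P_i \leq k_t^* \alpha \gamma_i \leq |R_t^{\oBH}| \alpha \gamma_i$, which is precisely the self-consistency inequality. (When $k_t^* = 0$ the set is empty and the condition holds vacuously.)

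With this in hand, the conclusion is immediate. Since each $R_t^{\oBH}$ is self-consistent, on every sample path we have the pointwise domination $\sup_{t\in\NN} \FDP_t = \sup_{t\in\NN} \FDP(R_t^{\oBH}) \leq \sup_{R \in \Rcal(\alpha)} \FDP(R)$. Taking expectations and invoking Proposition~\ref{prop:p-self-consist} under the PRDN assumption then yields
$$\supFDR = \EE\left[\sup_{t\in\NN} \FDP_t\right] \leq \EE\left[\sup_{R \in \Rcal(\alpha)} \FDP(R)\right] \leq \alpha(1 + \log(\alpha^{-1})),$$
as claimed.

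I do not anticipate a real obstacle here: all the analytic work --- the weighted \texttt{FDR-linking} theorem and the integration that produces the $\log(\alpha^{-1})$ factor --- has already been absorbed into Proposition~\ref{prop:p-self-consist}. The only places that warrant a moment of care are the cardinality inequality $|R_t^{\oBH}| \geq k_t^*$ (which is where the self-consistency hinges) and the degenerate $k_t^* = 0$ case; beyond those, the argument is purely the observation that the online BH sets form a subfamily of $\Rcal(\alpha)$, so their supremal FDP cannot exceed that of the full self-consistent family.
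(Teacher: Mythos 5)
Your proposal is correct and follows exactly the paper's own route: the paper notes that $R_t^{\oBH}$ is self-consistent (the largest self-consistent subset of $\{1,\ldots,t\}$) and then invokes Proposition~\ref{prop:p-self-consist} directly, which is precisely your argument with the cardinality inequality $|R_t^{\oBH}| \geq k_t^*$ spelled out. No gaps; your write-up merely makes explicit the details the paper treats as immediate.
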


Note that the bound in \Cref{thm:onlinebh-supfdr} is sharp in the following sense. 
\begin{theorem}\label{thm:sharp-supfdr}
  For every $\varepsilon > 0$, there exists a $K_0 \in \naturals$, and $\alpha' \in (0, 1]$ such that, for online BH, $$\supFDR \geq \StopFDR > (1 - \varepsilon)\alpha(1 + \log(\alpha^{-1}))$$ for all $\alpha \leq \alpha'$ where the first $K_0$ hypotheses are null and have p-values that are i.i.d.\ uniform random variables, and the remaining hypotheses are non-null with p-values set to 0. Note that this p-value distribution satisfies the PRDN conditon, meaning that the SupFDR upper bound in \Cref{thm:onlinebh-supfdr} is unimprovable.
\end{theorem}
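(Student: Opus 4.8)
The plan is to prove the two asserted inequalities separately. The first, $\supFDR \ge \StopFDR$, is the generic relationship already recorded in the introduction: for any stopping time $\tau$ we have $\FDP_\tau \le \sup_t \FDP_t$ pointwise, so taking expectations and then the supremum over $\tau$ gives it. Hence the entire content is to exhibit, for the stated process, a single stopping time whose $\FDR_\tau = \EE[\FDP_\tau]$ is at least $(1-\varepsilon)\alpha(1+\log(\alpha^{-1}))$. First I would fix the weights so that each of the $K_0$ nulls carries weight $\gamma_0 = \pi_0/K_0$ and the (infinitely many) non-nulls carry arbitrary strictly positive weights summing to $1-\pi_0$; here $\pi_0 = \sum_{i\in I_0}\gamma_i$ is a free parameter that I will push toward $1$. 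The stopping time I would use is $\tau_1 := \inf\{t : R_t^{\oBH}\cap I_0 \neq \emptyset\}$, the first time online BH rejects a true null. This is a genuine stopping time for this example because the nulls are exactly the hypotheses with $P_i > 0$ while the non-nulls have $P_i = 0$, so null-membership of $R_t^{\oBH}$ is observable from the p-values alone; it is also finite almost surely, since as the number of non-nulls grows the threshold diverges and every null is eventually rejected.

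Next I would compute $\FDP_{\tau_1}$ by tracking the online BH dynamics. Write $m$ for the number of non-nulls seen so far (so $t = K_0 + m$ once all nulls have arrived) and $U_{(1)}$ for the smallest null p-value. The key observation is that, once all nulls are present and as long as no null has been rejected, one has $k_t^* = m$ with threshold $m\alpha\gamma_0 < U_{(1)}$: each of the $m$ non-nulls contributes $1$ to the self-consistency count while no null lies below the threshold. Consequently the smallest null is rejected exactly when $m$ first reaches $m_1 = \lceil U_{(1)}/(\alpha\gamma_0)\rceil$, at which moment $k_{\tau_1}^* = m_1 + 1$, so precisely one null and all $m_1$ non-nulls are rejected. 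Therefore $\FDP_{\tau_1} = 1/(1+m_1)$, which equals $\min(1, \alpha\gamma_0/U_{(1)})$ up to the harmless integer rounding in $m_1$ (and equals $1$ on the event $U_{(1)} \le \alpha\gamma_0$, where the null is rejected with no non-null padding).

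It then remains to take expectations. Since $U_{(1)} \sim \mathrm{Beta}(1,K_0)$, I would evaluate $\EE[\min(1, \alpha\gamma_0/U_{(1)})]$; in the cleanest instance $K_0=1$ this is exactly $\int_0^1 \min(1, \alpha\gamma_0/u)\,\nd u = \alpha\gamma_0(1 + \log((\alpha\gamma_0)^{-1}))$, and for general $K_0$ the same leading behaviour $\pi_0\alpha(1 + \log(\alpha^{-1}) + O_{K_0}(1))$ follows from the $1/u$ singularity of the integrand near $0$ (using $\gamma_0 K_0 = \pi_0$). Dividing by $\alpha(1+\log(\alpha^{-1}))$ and letting $\alpha \to 0$ with $\pi_0$ and $K_0$ fixed, the ratio tends to $\pi_0$. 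Thus, given $\varepsilon>0$, I would first choose $\pi_0 > 1-\varepsilon/2$ (and a compatible $K_0$), and then $\alpha'$ small enough that the ratio exceeds $1-\varepsilon$ for all $\alpha \le \alpha'$, matching the quantifier order in the statement. The asserted PRDN property is immediate, since the null p-values are independent uniforms and the non-null p-values are constants.

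The step I expect to be the main obstacle is conceptual rather than computational: one must lower-bound $\StopFDR$, the \emph{smaller} of the two quantities, not $\supFDR$. A priori a single stopping time could lose a full logarithmic factor relative to $\EE[\sup_t \FDP_t]$ (a prophet-inequality gap), so it is not obvious that any stopping rule recovers the inflated bound. The resolution, and the crux of the argument, is that the $\log$ factor is already present in the single expectation $\EE[\FDP_{\tau_1}]$, because $1/U_{(1)}$ is heavy-tailed ($\EE[1/U_{(1)}] = \infty$) and the truncation at level $1$ converts this into the $\log(\alpha^{-1})$ term. Since $\log(\alpha^{-1})$ dominates every additive constant as $\alpha \to 0$, the slack $(1-\varepsilon)$ together with the restriction $\alpha \le \alpha'$ absorbs both the $O_{K_0}(1)$ discrepancy and the loss from $\pi_0 < 1$, so no delicate optimal-stopping or supremum analysis (e.g.\ via Daniels-type results) is actually required.
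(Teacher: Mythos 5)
Your proof is correct, but it takes a genuinely different route from the paper's. The paper's proof is a reduction: it fixes uniform weights $\gamma_t = K^{-1}$ on $[K]$ (zero afterwards), imports the adversarial offline construction of \citet{su_fdr-linking_theorem_2018} (the quantities $j^*$ and $K_1^*$), proves an equivalence lemma saying that the online BH rejection set at time $K_0+K_1^*$ coincides with the offline BH rejection set on that construction, notes that $K_0+K_1^*$ is a stopping time because it is determined entirely by the null p-values (all observed by time $K_0$), and then invokes Su and Ramdas' offline sharpness result (their Theorem 6) wholesale to get the lower bound. You instead give a self-contained argument: rather than optimizing over how many nulls get rejected (Su's $\max_j j/\lceil Kp_{(j)}/\alpha\rceil$ under uniform weights, which is what forces $K_0/K$ close to $1$ in their construction), you use only the first null rejection and compensate by exploiting a freedom that offline BH does not have and the paper's proof does not use --- non-uniform weights --- placing mass $\pi_0$ close to $1$ directly on the nulls, so that the single quantity $\min(1,\alpha\gamma_0/U_{(1)})$ already integrates (exactly, when $K_0=1$) to $\pi_0\alpha(1+\log((\pi_0\alpha)^{-1}))$. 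The paper's route buys brevity and gets the sharp constant for free from existing work; yours buys transparency (the $\log$ factor visibly comes from the truncated heavy tail of $1/U_{(1)}$) and shows a single null hypothesis suffices. Both proofs share the two load-bearing observations: null-triggered times are legitimate stopping times because $I_0$ is deterministic in this instance, and the stream of zero p-value non-nulls lets the stopping rule grow the denominator at will. Two small imprecisions in your write-up, neither fatal: at $\tau_1$ more than one null can be rejected (e.g.\ on the positive-probability event $(m+1)\alpha\gamma_0 < U_{(1)} \le U_{(2)} \le (m+2)\alpha\gamma_0$), so ``precisely one null'' is not exact --- but since $j/(m+j)$ is increasing in $j$, ties only raise the FDP and your lower bound survives; and your general-$K_0$ Beta asymptotics are only sketched, but since the theorem is an existence claim over $K_0$, the exact $K_0=1$ computation is all you need.
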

We defer the proof of this result to Appendix~\ref{sec:appn_obh_sharp}. 

Notably, the p-value configuration constructed in Theorem~\ref{thm:sharp-supfdr} ensures that all p-values are independent. Hence, it is not even possible to improve the SupFDR bound in \Cref{thm:onlinebh-supfdr} under stronger assumptions like PRDS or independence.





\subsection{SupFDR control under negative dependence\label{sec:neg_dep}}
We can also show that online BH controls the SupFDR under some kind of negative dependence, following a similar approach to \citet{chi_multiple_testing_2024}. Like positive dependence, there are many notions of negative dependence. A comprehensive overview of the different notions and a comparison between these is given by \citet{chi_multiple_testing_2024}. In addition, they showed that the BH procedure controls the FDR at a slightly inflated level, if the p-values are \emph{weakly negatively dependent on the nulls (WNDN)} (see Appendix~\ref{sec:PRDS} for a formal definition of WNDN). The following result can be shown about weighted self-consistent procedures under negative dependence.
\begin{proposition}\label{prop:p-self-consist-neg}
    For WNDN p-values $(P_t)_{t \in \naturals}$, we have that
    \begin{align}
        \expect\left[\sup_{\rejset \in \Rcal(\alpha)} \FDP(\rejset)\right] \leq
        \alpha(3.18 + \log(\alpha^{-1})).
    \end{align}
\end{proposition}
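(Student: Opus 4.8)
The plan is to run the proof of Proposition~\ref{prop:p-self-consist} essentially unchanged at the structural level, swapping in a negative-dependence bound on $\FDR_0$ where the positive-dependence case used $\FDR_0(x)=x$. First I would reduce to a finite horizon: fix $K$, bound $\expect[\sup_{R\in\Rcal(\alpha),\,R\subseteq[K]}\FDP(R)]$, and then pass to the infinite stream by the same Fatou--Lebesgue monotone-limit argument used in Propositions~\ref{prop:e-self-consist} and~\ref{prop:p-self-consist}, since $\sup_{R\in\Rcal(\alpha)}\FDP(R)=\lim_{K\to\infty}\sup_{R\in\Rcal(\alpha),\,R\subseteq[K]}\FDP(R)$ holds pointwise. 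For the finite-$K$ bound I would invoke the weighted \texttt{FDR-linking} theorem (Theorem~\ref{theo:FDR-linking}), which absorbs the entire dependence structure into the one-dimensional object $\FDR_0(x)$ and yields
\[
\expect\!\left[\sup_{R\in\Rcal(\alpha)}\FDP(R)\right]\le \pi_0\alpha+\pi_0\alpha\int_{\pi_0\alpha}^{1}\frac{\FDR_0(x)}{x^2}\,\nd x .
\]

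The crux, and the only place this proof differs from the PRDN case, is the bound on $\FDR_0(x)$ under WNDN. Since every hypothesis entering $\FDR_0$ is null, $\FDR_0(x)$ is exactly the probability that weighted BH run at level $x$ on the null p-values rejects at least once, i.e.\ the survival function of the weighted Simes statistic. Under PRDN this is at most $x$; under WNDN it is anti-conservative, but only in the controlled way quantified by \citet{chi_multiple_testing_2024}, whose analysis of BH under negative dependence supplies an explicit bound $\FDR_0(x)\le g(x)$. The point needing care is that Theorem~\ref{theo:FDR-linking} feeds in the \emph{weighted} null BH procedure, so I must confirm the Chi--Ramdas--Wang bound is valid for weighted Simes; this should hold because WNDN is a condition on the joint law of the null p-values alone and is insensitive to the deterministic weights used inside the procedure.

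Substituting $g$ into the integral and carrying out the (elementary but delicate) integration produces the inflation factor. The qualitative feature I would emphasize is that $g$ agrees with $x$ to leading order, so the coefficient of the $\log(\alpha^{-1})$ term stays equal to $1$ exactly as in the PRDN bound, while the correction part of $g$ integrates against $x^{-2}$ to a bounded contribution that upgrades the additive constant from $1$ to $3.18$. Finally I would discharge the $\pi_0$ factors just as in Proposition~\ref{prop:p-self-consist}: the map $x\mapsto x\bigl(3.18+\log(x^{-1})\bigr)$ has derivative $2.18+\log(x^{-1})>0$ on $(0,1]$, hence is increasing, so the $\pi_0\alpha$-dependent bound is dominated by its value at $\alpha$, giving the stated $\alpha(3.18+\log(\alpha^{-1}))$.

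I expect the real work, and the main obstacle, to be the constant-chasing in this last stage: obtaining precisely $3.18$ rather than a looser constant requires the sharp form of the Chi--Ramdas--Wang WNDN-Simes survival bound and a careful optimization of the resulting integral, together with verifying that the $\pi_0$-rescaling inside $\FDR_0$ damages neither the leading $\log$ coefficient nor the final monotonicity step. The finite reduction and the application of \texttt{FDR-linking} are routine; all the genuine difficulty is concentrated in the negative-dependence survival bound and the numeric integration.
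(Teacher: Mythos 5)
Your proposal follows essentially the same route as the paper: reduce to finite $K$, apply the weighted \texttt{FDR-linking} theorem (Theorem~\ref{theo:FDR-linking}), bound $\FDR_0(x)$ by the WNDN Simes type I error bound of \citet{chi_multiple_testing_2024}, and extend to the infinite stream via the limit argument of Proposition~\ref{prop:p-self-consist}. The single step you flag as needing confirmation --- that the WNDN Simes bound remains valid for the \emph{weighted} Simes statistic --- is precisely what the paper settles by citing Prop.~12 of \citet{chi_multiple_testing_2024} (weighted Simes has identical type I error under WNDN as unweighted Simes), and, like you, the paper does not redo the integration yielding the constant $3.18$ but delegates it to the proof of Theorem~19 in that reference.
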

The above result, when restricted to a finite set of hypotheses, follows from the proof of Theorem 19 in \citet{chi_multiple_testing_2024}, as the proof utilizes an application of the \texttt{FDR-linking} theorem with a bound on the type I error of the Simes p-value under WNDN. Prop. 12 in \citet{chi_multiple_testing_2024} shows that the weighted Simes p-value has identical type I error under WNDN as unweighted Simes, so we can apply the weighted \texttt{FDR-linking} theorem to obtain the upper bounds in the proposition for finite hypotheses using an identical argument to the aforementioned proof.
Combining this result with the limit argument seen in Proposition~\ref{prop:p-self-consist}, we can obtain the full result for an infinite stream of hypotheses.

The following theorem shows that online BH even controls SupFDR under WNDN at an inflated level and is a direct result of Proposition~\ref{prop:p-self-consist-neg}.

\begin{theorem}\label{theo:neg}
    The online BH procedure ensures that $\supFDR \leq \alpha(3.18 + \log(\alpha^{-1}))$ when the p-values $(P_t)_{t\in \mathbb{N}}$ are WNDN.
\end{theorem}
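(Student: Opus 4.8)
The plan is to derive Theorem~\ref{theo:neg} as an immediate corollary of Proposition~\ref{prop:p-self-consist-neg}, exactly paralleling how Theorem~\ref{thm:onlinebh-supfdr} follows from Proposition~\ref{prop:p-self-consist} in the positive-dependence case. The single structural fact I need is that every rejection set produced by online BH lies in the self-consistent class $\Rcal(\alpha)$, which was already observed in the text preceding Section~\ref{sec:supFDR_PRDN}. Concretely, the defining relation of $k_t^*$ forces $|R_t^{\oBH}| = k_t^*$, so any $i \in R_t^{\oBH}$ satisfies $P_i \leq k_t^*\alpha\gamma_i = \alpha\gamma_i|R_t^{\oBH}|$, which is precisely the p-value self-consistency condition $P_i \leq \alpha\gamma_i|\rejset|$.

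First I would record the pointwise domination. Since $R_t^{\oBH} \in \Rcal(\alpha)$ for each $t \in \naturals$, we have, deterministically on every sample path,
$$
\sup_{t\in\naturals}\ \FDP(R_t^{\oBH}) \;\leq\; \sup_{\rejset \in \Rcal(\alpha)}\ \FDP(\rejset).
$$
Taking expectations on both sides and unfolding the definition $\supFDR = \expect\big[\sup_{t}\ \FDP_t\big]$ gives
$$
\supFDR \;\leq\; \expect\!\left[\sup_{\rejset \in \Rcal(\alpha)}\ \FDP(\rejset)\right].
$$
Invoking Proposition~\ref{prop:p-self-consist-neg}, whose hypothesis (WNDN p-values) is exactly the assumption of the theorem, bounds the right-hand side by $\alpha(3.18 + \log(\alpha^{-1}))$, completing the argument.

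The theorem itself therefore carries essentially no difficulty once Proposition~\ref{prop:p-self-consist-neg} is in hand; the entire substance is pushed into that proposition. The genuine obstacle, which I would regard as already resolved by the cited material, lies in establishing the self-consistency bound under WNDN: one applies the weighted \texttt{FDR-linking} theorem (Theorem~\ref{theo:FDR-linking}) and must control $\FDR_0(x)$, the FDR of the null-restricted weighted BH procedure, which reduces to bounding the type I error of the weighted Simes p-value under weak negative dependence. This is exactly where the constant $3.18$ enters, via the WNDN Simes bound of \citet{chi_multiple_testing_2024} (their Prop.~12 ensuring the weighted and unweighted Simes p-values share the same type I error), and where the limiting argument of Proposition~\ref{prop:e-self-consist} is needed to pass from a finite horizon $K$ to the infinite stream. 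Since all of these ingredients are available upstream, the remaining step for Theorem~\ref{theo:neg} is the one-line domination-and-expectation argument above.
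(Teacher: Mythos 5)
Your proposal is correct and follows exactly the paper's own route: the paper likewise obtains Theorem~\ref{theo:neg} as a direct consequence of Proposition~\ref{prop:p-self-consist-neg}, using the fact that each $R_t^{\oBH}$ is self-consistent (so $\sup_t \FDP_t$ is pointwise dominated by $\sup_{\rejset \in \Rcal(\alpha)} \FDP(\rejset)$), with all the substantive work (weighted \texttt{FDR-linking}, the WNDN Simes bound from \citet{chi_multiple_testing_2024}, and the finite-to-infinite limit argument) residing in the proposition. Your identification of where the constant $3.18$ enters and of the role of the limiting argument matches the paper's reasoning.
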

The sharpness result in \Cref{thm:sharp-supfdr} holds on an instance of independent p-values (which are WNDN) that achieve SupFDR and StopFDR which approaches $\alpha(1 + \log(\alpha^{-1}))$. Thus, the dependence on $\log(\alpha^{-1})$ cannot be improved, but it may still be possible to show a sharper constant than 3.18.

Propositions~\ref{prop:p-self-consist} and~\ref{prop:p-self-consist-neg} directly imply that the LOND algorithm \citep{javanmard2015online, zrnic2021asynchronous} (see Remark~\ref{remark:LOND}) and the TOAD algorithm for positively dependent p-values \citep{fisher2022online} (see \eqref{eq:toad} in Appendix~\ref{sec:e-TOAD} with $\beta_t(k)=k$) also have the prescribed SupFDR bounds, since both also satisfy the weighted self-consistency condition.

\begin{proposition}\label{prop:supFDR_LOND_TOAD}
    LOND and TOAD ensure that $\supFDR \leq \alpha(1 + \log(\alpha^{-1}))$ when $(P_t)_{t\in \mathbb{N}}$ are PRDN, and $\supFDR \leq \alpha(3.18 + \log(\alpha^{-1}))$ when $(P_t)_{t\in \mathbb{N}}$ are WNDN.
\end{proposition}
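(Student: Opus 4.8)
The plan is to derive both bounds from Propositions~\ref{prop:p-self-consist} and~\ref{prop:p-self-consist-neg} by establishing one structural fact: for every realization and every $t \in \NN$, the rejection sets $R_t^{\text{LOND}}$ and $R_t^{\text{TOAD}}$ (the latter with $\beta_t(k)=k$) are self-consistent at level $\alpha$, i.e.\ they lie in $\Rcal(\alpha)$. Granting this, membership $R_t \in \Rcal(\alpha)$ for all $t$ yields the pointwise domination $\sup_{t\in\NN}\FDP_t \leq \sup_{R\in\Rcal(\alpha)}\FDP(R)$ on every sample path. Taking expectations gives $\supFDR = \expect[\sup_t \FDP_t] \leq \expect[\sup_{R\in\Rcal(\alpha)}\FDP(R)]$, and then Proposition~\ref{prop:p-self-consist} under PRDN and Proposition~\ref{prop:p-self-consist-neg} under WNDN supply the bounds $\alpha(1+\log(\alpha^{-1}))$ and $\alpha(3.18+\log(\alpha^{-1}))$ respectively. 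So the entire content of the proof is the self-consistency verification; everything else is quoting the two earlier propositions.

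For LOND this is a short counting argument. Fix $t$ and take any $i \in R_t^{\text{LOND}}$; by the definition in Remark~\ref{remark:LOND} we have $P_i \leq \alpha_i^{\text{LOND}} = \alpha\gamma_i(|R_{i-1}^{\text{LOND}}|+1)$. Since the LOND rejection sets are nested and $i \leq t$, we have $R_{i-1}^{\text{LOND}} \subseteq R_t^{\text{LOND}}$, while $R_{i-1}^{\text{LOND}} \subseteq [i-1]$ shows that $i$ is not among the indices counted by $R_{i-1}^{\text{LOND}}$. Hence $R_{i-1}^{\text{LOND}} \cup \{i\} \subseteq R_t^{\text{LOND}}$ is a disjoint union, so $|R_{i-1}^{\text{LOND}}| + 1 \leq |R_t^{\text{LOND}}|$ and therefore $P_i \leq \alpha\gamma_i |R_t^{\text{LOND}}|$. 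This is exactly the self-consistency condition from the definition of $\Rcal(\alpha)$, so $R_t^{\text{LOND}} \in \Rcal(\alpha)$.

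For TOAD with $\beta_t(k)=k$ I would argue self-consistency in the same spirit. The case $d_t=\infty$ is precisely the online BH procedure, whose rejection sets were already observed to be self-consistent (indeed maximal within $\{1,\ldots,t\}$), so that instance is immediate. For finite deadlines the only extra ingredient is to track which prior rejections enter the threshold applied to $H_i$ (a decision frozen at its deadline $d_i$), and to confirm this running count never exceeds the cardinality of the final set $R_t^{\text{TOAD}}$ that contains $i$, so that $P_i \leq \alpha\gamma_i |R_t^{\text{TOAD}}|$ persists. This deadline bookkeeping is the one place I expect a genuine (though minor) obstacle: unlike LOND, TOAD's threshold for $H_i$ is determined at $d_i$ and references a running rejection count that must be matched against the final cardinality using the nestedness of the rejection sequence and the definition in~\eqref{eq:toad}. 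Once this monotonicity of counts is checked, the same one-line conclusion as for LOND applies, giving $R_t^{\text{TOAD}} \in \Rcal(\alpha)$ and completing the proof.
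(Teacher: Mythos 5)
Your proof is correct and is essentially the paper's own argument: the paper establishes this proposition in a single line by observing that LOND and TOAD (with $\beta_t(k)=k$) produce rejection sets satisfying the weighted self-consistency condition, and then invoking Propositions~\ref{prop:p-self-consist} and~\ref{prop:p-self-consist-neg} under PRDN and WNDN respectively. Your explicit counting verification for LOND (using nestedness and $i \notin R_{i-1}^{\text{LOND}}$ to get $|R_{i-1}^{\text{LOND}}|+1 \leq |R_t^{\text{LOND}}|$) and the deadline bookkeeping you sketch for TOAD actually supply detail that the paper leaves implicit, but the route is identical.
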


\subsection{$\text{SupFDR}^{\boldsymbol{K}}$ control under arbitrary dependence\label{sec:supFDR_oBH_arb}}

In Remark~\ref{remark:online_BH} we mentioned that the online BR procedure controls the SupFDR, which is formally proven in Appendix~\ref{sec:appn_supfdr}. Suppose we have some fixed number $K\in \mathbb{N}$ of hypotheses and let $\beta(k)=k/\ell_K$ for $k\in \{1,\ldots,K\}$, where $\ell_K=\sum_{i=1}^K \frac{1}{i}\approx \log(K)$. It follows immediately, that online BR applied at level $\alpha \ell_K$ rejects exactly the same hypotheses as online BH applied at level $\alpha$. Hence, online BH controls the $\text{SupFDR}^K$ at the inflated level $\alpha \ell_K$.

\begin{theorem}\label{theo:oBH_arb}
    The online BH procedure ensures that $\supFDR^K \leq \alpha \ell_K$ for all $K\in \mathbb{N}$ under arbitrarily dependent p-values.
\end{theorem}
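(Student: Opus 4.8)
The plan is to deduce this from the SupFDR control of the online BR procedure recorded in Remark~\ref{remark:online_BH}, by observing that online BH run at level $\alpha$ and online BR run at the inflated level $\alpha\ell_K$ with the Benjamini--Yekutieli shape function are the \emph{same} algorithm on every realization of the p-values. Concretely, I would instantiate online BR with the shape function $\beta(k)=k/\ell_K$ for $k\in\{1,\dots,K\}$ and run it at level $\alpha\ell_K$, then match it to online BH termwise.

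First I would confirm that $\beta(k)=k/\ell_K$ is an admissible shape function in the sense of \citet{blanchard2008two}, i.e.\ that $\beta(k)=\int_0^k x\,\nd\nu(x)$ for some probability measure $\nu$ on $(0,\infty)$. The discrete measure $\nu=\sum_{i=1}^K \frac{1}{i\,\ell_K}\,\delta_i$ works: its total mass is $\ell_K^{-1}\sum_{i=1}^K i^{-1}=1$, and $\int_0^k x\,\nd\nu(x)=\ell_K^{-1}\sum_{i=1}^k 1 = k/\ell_K$. This is exactly the measure that underlies the BY procedure, so no new construction is required. With a valid shape function in hand, Remark~\ref{remark:online_BH} (and the Supplementary Material~\ref{sec:appn_supfdr}) tells me that online BR at level $\alpha\ell_K$ satisfies $\supFDR\le\alpha\ell_K$ under arbitrary dependence, via its representation as an online e-BH procedure meeting condition~\eqref{eq:condition_p}.

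Next I would establish the pathwise identity of the two procedures. In the online BR definition of $k_t^*$ at level $\alpha\ell_K$, each indicator reads $\mathbbm{1}\{P_j\le \beta(k)\,(\alpha\ell_K)\,\gamma_j\}$, and since $\beta(k)\,(\alpha\ell_K)=(k/\ell_K)(\alpha\ell_K)=k\alpha$, this equals $\mathbbm{1}\{P_j\le k\alpha\gamma_j\}$ --- precisely the indicator in the online BH definition of $k_t^*$. Hence the two rules compute the same $k_t^*$ at every time $t$, and the same per-hypothesis threshold $\beta(k_t^*)(\alpha\ell_K)\gamma_i=k_t^*\alpha\gamma_i$, so $R_t^{\mathrm{oBR}}=R_t^{\oBH}$ for all $t$ on every sample path. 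Identical rejection sets yield an identical $\FDP_t$ process and hence an identical value of $\sup_{1\le t\le K}\FDP_t$; taking expectations, $\supFDR^K$ for online BH equals $\supFDR^K$ for online BR, which is at most $\supFDR\le\alpha\ell_K$.

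The only genuinely delicate point --- and the one I would be most careful about --- is that the shape function $\beta(k)=k/\ell_K$ depends on $K$. This is harmless because $\beta$ is fixed before the data are seen and only the active indices $k\in\{1,\dots,K\}$ ever arise (as $k_t^*\le t\le K$), so the reduction of online BR to online e-BH through \eqref{eq:condition_p} proceeds verbatim. I expect this bookkeeping, rather than any new probabilistic estimate, to be the main obstacle; the core of the argument is the elementary threshold identity $\beta(k)\,\alpha\ell_K=k\alpha$.
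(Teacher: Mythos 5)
Your proposal is correct and follows essentially the same route as the paper: the paper's proof likewise instantiates the online BR procedure with the BY shape function $\beta(k)=k/\ell_K$ at level $\alpha\ell_K$, observes that it rejects exactly the same hypotheses as online BH at level $\alpha$ (which your threshold identity $\beta(k)\alpha\ell_K=k\alpha$ makes explicit), and then invokes the SupFDR control of online BR from the Supplementary Material. Your added verification that $\nu=\sum_{i=1}^K \frac{1}{i\,\ell_K}\,\delta_i$ is a probability measure is a detail the paper leaves implicit, and your restriction of the pathwise identity to $t\leq K$ (where $k_t^*\leq K$, so the cap in $\beta$ never binds) is exactly what makes the $\supFDR^K$ conclusion go through.
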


Online BH does not control the SupFDR under arbitrary dependence if the number of hypotheses is infinite. To see this, note that \citet[Theorem 5.1 (iii)]{guo2008control} showed that there exist p-value configurations such that the FDR of the BH procedure equals exactly $\min(\alpha \ell_K ,1)$, where $K$ is the number of total hypotheses. Hence, for every $\alpha$ there is a $K$ such that the FDR of the BH procedure equals $1$. Since the BH procedure is a special case of online BH for every $K$, the online BH procedure can neither control the OnlineFDR, the StopFDR nor the SupFDR at a level below $1$ under arbitrary dependence.

\section{Related literature}

In the following, we give a brief overview of the existing online multiple testing literature divided according to the different settings and assumptions considered in this paper.

\textbf{Independence.} \hspace{0.3cm }The online multiple testing framework was introduced by \citet{foster2008alpha} and initially focused on online FDR control at fixed times and with independent p-values \citep{javanmard2018online, ramdas2018saffron, ramdas2017online}. The most popular procedures in this setting are the LORD algorithm, which was introduced by \citet{javanmard2018online} and later generalized and improved by \citet{ramdas2017online}, and the SAFFRON algorithm, which was introduced by \citet{ramdas2018saffron}. We compare the (improved) LORD algorithm to online BH in Appendix~\ref{sec:LORD} and SAFFRON to online Storey-BH in Appendix~\ref{appn:SBH}. We conclude that the existing methods and our online ARC procedures lead to similar power, however, our online ARC procedures provide a more balanced and better interpretable weighting. In addition, online BH even controls the OnlineFDR under PRDS and allows to control the SupFDR at a slightly inflated level. Such results are not known for LORD nor SAFFRON. 

\textbf{PRDS.} \hspace{0.3cm }
The LOND algorithm was introduced by \citet{javanmard2015online} and proven to control the OnlineFDR under PRDS by \citet{zrnic2021asynchronous}. The TOAD algorithm (for PRDS p-values) by \citet{fisher2022online} generalizes LOND by allowing delayed decisions. We proved that LOND and TOAD also control the SupFDR at a slightly inflated level (under PRDN and WNDN), allowing them to stop data-adaptively at any time (Section~\ref{sec:neg_dep}).

\textbf{Arbitrary dependence.} \hspace{0.3cm }
\citet{javanmard2015online} introduced a modified version of the LOND algorithm that provides OnlineFDR control under arbitrary dependence, which was later generalized by \citet{zrnic2021asynchronous} to the r-LOND algorithm. With TOAD (for arbitrarily dependent p-values), \citet{fisher2022online} introduced a modification of r-LOND that allows to incorporate decision deadlines.
Recently, \citet{xu2024online} introduced e-LOND, providing OnlineFDR control with arbitrarily dependent e-values. In this paper we showed that r-LOND (Appendix~\ref{sec:appn_supfdr}), TOAD (Appendix~\ref{sec:e-TOAD}) and e-LOND (Section~\ref{sec:e-LOND}) even provide SupFDR control without any additional assumptions and therefore can be stopped at arbitrary stopping times. 

\textbf{FDR control at stopping times.} \hspace{0.3cm }
\citet{xu2022dynamic} were the first considering FDR control at arbitrary stopping times. They introduced SupLORD and showed that it provides valid SupFDR control. However, they rely on the assumption that the null p-values are valid conditional on all past p-values, which is a strong assumption close to independence \citep{fisher2024online}. Other works also considered FDR control at particular data-adaptive stopping times \citep{zrnic2021asynchronous, fisher2022online}, all relying on some assumption of independence or conditional validity. Our paper is the first bringing together arbitrary/positive/negative dependence and stopping times.

\textbf{Batching and delayed decisions.} \hspace{0.3cm }
\citet{zrnic2020power} and \citet{fisher2022online} explored the possibility of increasing power of online procedures by batching multiple hypotheses and allowing delayed decisions, respectively. They introduced online frameworks where decisions, both acceptances and rejections, can be reversed until some prespecified decision deadlines. Although \citet{fisher2022online} noted that his TOAD algorithm will never turn a rejected hypothesis into an accepted one, the online ARC framework was not explicitly mentioned before. 



\textbf{E-values.} \hspace{0.3cm }
Multiple testing with e-values was initially considered in the offline setting. \citet{wang2022false} provided with e-BH an analog of the popular Benjamini-Hochberg (BH) procedure \citep{benjamini1995controlling}. \citet{vovk2023confidence, vovk_e-values_calibration_2021} considered merging functions for e-values and applying the Closure Principle with these. Recently, e-values were also brought up in online multiple testing. \citet{xu2024online} introduced the aforementioned e-LOND and \citet{fischer2024online-e} considered online closed testing with e-values, proving that e-values are essential for an online true discovery guarantee.

\section{Conclusion}
Online ARC multiple testing is a flexible and powerful concept. It allows to test hypotheses one at a time, while ensuring that a rejection remains, regardless of the future. However, by changing the decision for accepted hypotheses based on future information online ARC procedures are more powerful than classical online procedures.
We showed that the e-BH and BH procedure possess natural online ARC generalizations.

The online e-BH method controls the SupFDR under arbitrary dependence between the e-values. This allows to apply the online e-BH procedure at any e-values and to stop data-adaptively at any time while ensuring valid FDR control. This offers a much higher flexibility than classical online FDR control at fixed times. We proved that the same flexibility is provided by the existing online procedures r-LOND \citep{zrnic2021asynchronous}, e-LOND \citep{xu2024online}, online BR \citep{blanchard2008two, fisher2022online} and TOAD for arbitrarily dependent p-values \citep{fisher2022online}. Previously, data-adaptive stopping was only proven under some independence assumption about the joint distribution of the test statistics.

The p-value based online BH procedure controls the SupFDR at a slightly inflated level when the p-values are PRDN or WNDN. The same holds for the LOND \citep{javanmard2018online} and the TOAD (for positively dependent p-values) \citep{fisher2022online} algorithm.

In Table~\ref{tab:summary} we list all online procedures with proven SupFDR control under different dependence assumptions between the p-values or e-values.

\ifarxiv
\begin{table*}[h!]
    \centering
    \resizebox{\textwidth}{!}{%
    \begin{tabular}{c c c c c c}
      & & Arb. dep. & Pos. dep. & Neg. dep. & Cond. val. \\ \hline
        \multirow{3}{*}{\rotatebox[origin=c]{90}{e-value}} & Online e-BH & Section~\ref{sec:supFDR_eBH} & $\implies$ & $\implies$ & $\implies$ \\
       & e-LOND \citep{xu2024online} & Section~\ref{sec:e-LOND} & $\implies$ & $\implies$ & $\implies$ \\
      &  e-TOAD & Section~\ref{sec:e-TOAD} & $\implies$ & $\implies$ & $\implies$ \\
         \hline
        \multirow{7}{*}{\rotatebox[origin=c]{90}{p-value}} &  Online BH &  - & $\text{Section~\ref{sec:supFDR_PRDN}}^*$ & $\text{Section~\ref{sec:neg_dep}}^*$ & - \\
       &  Online BR & Section~\ref{sec:appn_supfdr} & $\implies$ &  $\implies$ & $\implies$ \\
       &  r-LOND \citep{zrnic2021asynchronous} & Section~\ref{sec:appn_supfdr} & $\implies$ & $\implies$ & $\implies$\\
       &  TOAD (arb. dep.) \citep{fisher2022online} & Section~\ref{sec:e-TOAD} & $\implies$ & $\implies$ & $\implies$ \\
       &  TOAD (pos. dep.) \citep{fisher2022online} & - & $\text{Section~\ref{sec:neg_dep}}^*$ & $\text{Section~\ref{sec:neg_dep}}^*$ & - \\
       &  LOND \citep{javanmard2018online} & - & $\text{Section~\ref{sec:neg_dep}}^*$ & $\text{Section~\ref{sec:neg_dep}}^*$ & - \\
       &  SupLORD \citep{xu2022dynamic} & - & - & - & \citet{xu2022dynamic} \\
        \hline
    \end{tabular}}
    \caption{Procedures with proven SupFDR control under different dependence assumptions. The \enquote{implies} symbol indicates that the guarantee is implied by control under arbitrary dependence and the asterisk that the SupFDR guarantee is provided at a slightly inflated level.}
    \label{tab:summary}
\end{table*}
\else
\begin{table*}[h!]
    \centering
    \resizebox{14.5cm}{!}{%
    \begin{tabular}{c c c c c c}
      & & Arb. dep. & Pos. dep. & Neg. dep. & Cond. val. \\ \hline
        \multirow{3}{*}{\rotatebox[origin=c]{90}{e-value}} & Online e-BH & Section~\ref{sec:supFDR_eBH} & $\implies$ & $\implies$ & $\implies$ \\
       & e-LOND \citep{xu2024online} & Section~\ref{sec:e-LOND} & $\implies$ & $\implies$ & $\implies$ \\
      &  e-TOAD & Section~\ref{sec:e-TOAD} & $\implies$ & $\implies$ & $\implies$ \\
         \hline
        \multirow{7}{*}{\rotatebox[origin=c]{90}{p-value}} &  Online BH &  - & $\text{Section~\ref{sec:supFDR_PRDN}}^*$ & $\text{Section~\ref{sec:neg_dep}}^*$ & - \\
       &  Online BR & Section~\ref{sec:appn_supfdr} & $\implies$ &  $\implies$ & $\implies$ \\
       &  r-LOND \citep{zrnic2021asynchronous} & Section~\ref{sec:appn_supfdr} & $\implies$ & $\implies$ & $\implies$\\
       &  TOAD (arb. dep.) \citep{fisher2022online} & Section~\ref{sec:e-TOAD} & $\implies$ & $\implies$ & $\implies$ \\
       &  TOAD (pos. dep.) \citep{fisher2022online} & - & $\text{Section~\ref{sec:neg_dep}}^*$ & $\text{Section~\ref{sec:neg_dep}}^*$ & - \\
       &  LOND \citep{javanmard2018online} & - & $\text{Section~\ref{sec:neg_dep}}^*$ & $\text{Section~\ref{sec:neg_dep}}^*$ & - \\
       &  SupLORD \citep{xu2022dynamic} & - & - & - & \citet{xu2022dynamic} \\
        \hline
    \end{tabular}}
    \caption{Procedures with proven SupFDR control under different dependence assumptions. The \enquote{implies} symbol indicates that the guarantee is implied by control under arbitrary dependence and the asterisk that the SupFDR guarantee is provided at a slightly inflated level.}
    \label{tab:summary}
\end{table*}
\fi

\section*{Acknowledgments}
 LF acknowledges funding by the Deutsche Forschungsgemeinschaft (DFG, German Research Foundation) – Project number 281474342/GRK2224/2. AR was funded by NSF grant DMS-2310718.

\section*{Supplementary Material}
 The \texttt{R} code to reproduce all simulations is available at \url{https://github.com/fischer23/online_e-BH}.

\bibliography{main}

@article{benjamini2001control,
  title={The control of the false discovery rate in multiple testing under dependency},
  author={Benjamini, Yoav and Yekutieli, Daniel},
  journal={The Annals of Statistics},
  volume={29},
  number={4},
  pages={1165--1188},
  year={2001},
  publisher={JSTOR}
}

@article{wang2022false,
  title={False discovery rate control with e-values},
  author={Wang, Ruodu and Ramdas, Aaditya},
  journal={Journal of the Royal Statistical Society Series B: Statistical Methodology},
  volume={84},
  number={3},
  pages={822--852},
  year={2022},
  publisher={Oxford University Press}
}

@misc{xu2023more,
  title={More powerful multiple testing under dependence via randomization},
  author={Xu, Ziyu and Ramdas, Aaditya},
  howpublished={arXiv:2305.11126},
  year={2023}
}

@article{benjamini1995controlling,
  title={Controlling the false discovery rate: a practical and powerful approach to multiple testing},
  author={Benjamini, Yoav and Hochberg, Yosef},
  journal={Journal of the Royal Statistical Society Series B: Statistical Methodology},
  volume={57},
  number={1},
  pages={289--300},
  year={1995},
  publisher={Wiley Online Library}
}

@article{fischer2024online,
  title={The online closure principle},
  author={Fischer, Lasse and Bofill Roig, Marta and Brannath, Werner},
  journal={The Annals of Statistics},
  volume={52},
  number={2},
  pages={817--841},
  year={2024},
  publisher={Institute of Mathematical Statistics}
}

@inproceedings{xu2024online,
  title={Online multiple testing with e-values},
  author={Xu, Ziyu and Ramdas, Aaditya},
  booktitle={International Conference on Artificial Intelligence and Statistics},
  year={2024}
}

@article{10002015global,
  title={A global reference for human genetic variation},
  author={{1000 Genomes Project Consortium}},
  journal={Nature},
  volume={526},
  number={7571},
  pages={68},
  year={2015},
  publisher={Nature Publishing Group}
}

@inproceedings{kohavi2013online,
  title={Online controlled experiments at large scale},
  author={Kohavi, Ron and Deng, Alex and Frasca, Brian and Walker, Toby and Xu, Ya and Pohlmann, Nils},
  booktitle={ACM SIGKDD International Conference on Knowledge Discovery and Data Mining},
  year={2013}
}

@article{foster2008alpha,
  title={$\alpha$-investing: a procedure for sequential control of expected false discoveries},
  author={Foster, Dean P and Stine, Robert A},
  journal={Journal of the Royal Statistical Society Series B: Statistical Methodology},
  volume={70},
  number={2},
  pages={429--444},
  year={2008},
  publisher={Oxford University Press}
}

@article{javanmard2018online,
  title={Online rules for control of false discovery rate and false discovery exceedance},
  author={Javanmard, Adel and Montanari, Andrea},
  journal={The Annals of Statistics},
  volume={46},
  number={2},
  pages={526--554},
  year={2018},
  publisher={JSTOR}
}

@article{ramdas2017online,
  title={Online control of the false discovery rate with decaying memory},
  author={Ramdas, Aaditya and Yang, Fanny and Wainwright, Martin J and Jordan, Michael I},
  journal={Neural Information Processing Systems},
  volume={30},
  year={2017}
}

@inproceedings{zrnic2020power,
  title={The power of batching in multiple hypothesis testing},
  author={Zrnic, Tijana and Jiang, Daniel and Ramdas, Aaditya and Jordan, Michael},
  booktitle={International Conference on Artificial Intelligence and Statistics},
  year={2020}
}

@article{aharoni2014generalized,
  title={Generalized $\alpha$-investing: definitions, optimality results and application to public databases},
  author={Aharoni, Ehud and Rosset, Saharon},
  journal={Journal of the Royal Statistical Society Series B: Statistical Methodology},
  volume={76},
  number={4},
  pages={771--794},
  year={2014},
  publisher={Oxford University Press}
}

@article{zrnic2021asynchronous,
  title={Asynchronous online testing of multiple hypotheses},
  author={Zrnic, Tijana and Ramdas, Aaditya and Jordan, Michael I},
  journal={Journal of Machine Learning Research},
  volume={22},
  number={33},
  pages={1--39},
  year={2021}
}

@inproceedings{xu2022dynamic,
  title={Dynamic algorithms for online multiple testing},
  author={Xu, Ziyu and Ramdas, Aaditya},
  booktitle={Mathematical and Scientific Machine Learning},
  year={2021}
}

@inproceedings{fisher2022online,
  title={Online Control of the False Discovery Rate under \enquote{Decision Deadlines}},
  author={Fisher, Aaron J},
  booktitle={International Conference on Artificial Intelligence and Statistics},
  year={2022}
}

@article{blanchard2008two,
  title={Two simple sufficient conditions for {FDR} control},
  author={Blanchard, Gilles and Roquain, Etienne},
  year={2008},
  journal={Electronic Journal of Statistics},
  volume={2},
  pages={963--992},
  publisher={The Institute of Mathematical Statistics and the Bernoulli Society}
}

@article{vovk2023confidence,
  title={Confidence and discoveries with e-values},
  author={Vovk, Vladimir and Wang, Ruodu},
  journal={Statistical Science},
  volume={38},
  number={2},
  pages={329--354},
  year={2023},
  publisher={Institute of Mathematical Statistics}
}

@article{robertson2023platform,
  title={Online error rate control for platform trials},
  author={Robertson, David S and Wason, James MS and K{\"o}nig, Franz and Posch, Martin and Jaki, Thomas},
  journal={Statistics in Medicine},
  volume={42},
  number={14},
  pages={2475--2495},
  year={2023},
  publisher={Wiley Online Library}
}

@article{zehetmayer2022online,
  title={Online control of the False Discovery Rate in group-sequential platform trials},
  author={Zehetmayer, Sonja and Posch, Martin and Koenig, Franz},
  journal={Statistical Methods in Medical Research},
  volume={31},
  number={12},
  pages={2470--2485},
  year={2022},
  publisher={SAGE Publications Sage UK: London, England}
}

@article{chi_multiple_testing_2024,
  title={Multiple testing under negative dependence},
  author={Chi, Ziyu and Ramdas, Aaditya and Wang, Ruodu},
  journal={Bernoulli},
  volume={31},
  number={2},
  pages={1230--1255},
  year={2025},
  publisher={Bernoulli Society for Mathematical Statistics and Probability}
}

@article{fisher2024online,
  title={Online false discovery rate control for {LORD++} and {SAFFRON} under positive, local dependence},
  author={Fisher, Aaron},
  journal={Biometrical Journal},
  volume={66},
  number={1},
  pages={2300177},
  year={2024},
  publisher={Wiley Online Library}
}

@misc{fischer2024online-e,
  title={Admissible online closed testing must employ e-values},
  author={Fischer, Lasse and Ramdas, Aaditya},
  howpublished={arXiv:2407.15733},
  year={2024}
}

@inproceedings{ramdas2018saffron,
  title={{S}{A}{F}{F}{R}{O}{N}: an adaptive algorithm for online control of the false discovery rate},
  author={Ramdas, Aaditya and Zrnic, Tijana and Wainwright, Martin and Jordan, Michael},
  booktitle={International Conference on Machine Learning},
  year={2018}
}

@article{ramdas2019unified,
  title={A unified treatment of multiple testing with prior knowledge using the p-filter},
  author={Ramdas, Aaditya K and Barber, Rina F and Wainwright, Martin J and Jordan, Michael I},
  year={2019},
  journal={The Annals of Statistics},
  volume={47},
  number={5},
  pages={2790--2821},
  publisher={Institute of Mathematical Statistics}
}

@misc{su_fdr-linking_theorem_2018,
  title = {The {{FDR-Linking Theorem}}},
  author = {Su, Weijie J.},
  year = {2018},
  howpublished = {arXiv:1812.08965},
  eprint = {1812.08965},
  urldate = {2020-12-06}
}

@misc{javanmard2015online,
  title={On online control of false discovery rate},
  author={Javanmard, Adel and Montanari, Andrea},
  howpublished={arXiv:1502.06197},
  year={2015}
}

@article{guo2008control,
  title={On control of the false discovery rate under no assumption of dependency},
  author={Guo, Wenge and Rao, M Bhaskara},
  journal={Journal of Statistical Planning and Inference},
  volume={138},
  number={10},
  pages={3176--3188},
  year={2008},
  publisher={Elsevier}
}

@article{shafer_test_martingales_2011,
  title = {Test {{Martingales}}, {{Bayes Factors}} and $p$-{{Values}}},
  author = {Shafer, Glenn and Shen, Alexander and Vereshchagin, Nikolai and Vovk, Vladimir},
  year = {2011},
  journal = {Statistical Science},
  volume = {26},
  number = {1},
  eprint = {0912.4269},
  urldate = {2022-02-08}
}

@article{dawid_insuring_loss_2011,
  title = {Insuring against Loss of Evidence in Game-Theoretic Probability},
  author = {Dawid, A. Philip and {de Rooij}, Steven and Shafer, Glenn and Shen, Alexander and Vereshchagin, Nikolai and Vovk, Vladimir},
  year = {2011},
  journal = {Statistics \& Probability Letters},
  volume = {81},
  number = {1},
  pages = {157--162},
  urldate = {2024-09-24}
}

@misc{dawid_probability-free_pricing_2011,
  title = {Probability-Free Pricing of Adjusted {{American}} Lookbacks},
  author = {Dawid, A. Philip and {de Rooij}, Steven and Grunwald, Peter and Koolen, Wouter M. and Shafer, Glenn and Shen, Alexander and Vereshchagin, Nikolai and Vovk, Vladimir},
  year = {2011},
  howpublished = {arXiv:1108.4113},
  eprint = {1108.4113},
  urldate = {2024-09-24}
}

@misc{choe_combining_evidence_2024,
  title = {Combining {{Evidence Across Filtrations}}},
  author = {Choe, Yo Joong and Ramdas, Aaditya},
  year = {2024},
  howpublished = {arXiv:2402.09698},
  eprint = {2402.09698},
  urldate = {2024-03-15}
}

@article{vovk_e-values_calibration_2021,
  title = {E-Values: {{Calibration}}, Combination and Applications},
  shorttitle = {E-Values},
  author = {Vovk, Vladimir and Wang, Ruodu},
  year = {2021},
  journal = {The Annals of Statistics},
  volume = {49},
  number = {3},
  pages = {1736--1754},
  urldate = {2022-02-08}
}

@article{vovk_logic_probability_1993,
  title = {A {{Logic}} of {{Probability}}, with {{Application}} to the {{Foundations}} of {{Statistics}}},
  author = {Vovk, V. G.},
  year = {1993},
  journal = {Journal of the Royal Statistical Society. Series B (Methodological)},
  volume = {55},
  number = {2},
  eprint = {2346196},
  pages = {317--351},
  urldate = {2022-02-08}
}

@article{storey2002direct,
  title={A direct approach to false discovery rates},
  author={Storey, John D},
  journal={Journal of the Royal Statistical Society Series B: Statistical Methodology},
  volume={64},
  number={3},
  pages={479--498},
  year={2002},
  publisher={Oxford University Press}
}

@article{storey2004strong,
  title={Strong control, conservative point estimation and simultaneous conservative consistency of false discovery rates: a unified approach},
  author={Storey, John D and Taylor, Jonathan E and Siegmund, David},
  journal={Journal of the Royal Statistical Society Series B: Statistical Methodology},
  volume={66},
  number={1},
  pages={187--205},
  year={2004},
  publisher={Oxford University Press}
}

@article{robertson2019onlinefdr,
  title={{onlineFDR}: an {R} package to control the false discovery rate for growing data repositories},
  author={Robertson, David S and Wildenhain, Jan and Javanmard, Adel and Karp, Natasha A},
  journal={Bioinformatics},
  volume={35},
  number={20},
  pages={4196--4199},
  year={2019},
  publisher={Oxford University Press}
}
\bibliographystyle{plainnat}

\newpage

\begin{appendix}






Here we give a brief overview of the appendix. We begin with a detailed explanation of boosting under local dependence and PRDS e-values in Section~\ref{sec:appn_boosting}.
In Section~\ref{sec:e-TOAD}, we introduce e-TOAD, an e-value version of the TOAD procedure~\citep{fisher2022online}, which can be applied in online settings with decision deadlines. We also show that e-TOAD and TOAD (for arbitrarily dependent p-values) provide SupFDR control under arbitrary dependence. 
In Section~\ref{sec:LORD}, we compare the LORD algorithm \citep{javanmard2018online, ramdas2017online} with the online BH procedure. Afterwards, we introduce an online version of the Storey-BH procedure \citep{storey2002direct, storey2004strong} and compare it to the SAFFRON algorithm \citep{ramdas2018saffron} (Section~\ref{appn:SBH}). 
In Section~\ref{sec:PRDS}, we discuss the considered notions of positive and negative dependence. In Section~\ref{sec:sim_setup}, we describe the simulation setup used in the paper. In Section~\ref{sec:omitted_proofs}, we provide omitted proofs and derivations for results stated in the main paper. 

\section{Details about boosting with online e-BH\label{sec:appn_boosting}}

In this section, we provide several details regarding the boosting techniques briefly described in Section~\ref{sec:boosting}. In Section~\ref{sec:boosting_arb}, we demonstrate how the infinite sum in \eqref{eq:truncation_online} can be handled. In Section~\ref{sec:local_dep}, we show how the boosting approach can be further improved when information about a local dependence structure is available while maintaining SupFDR control. We show in Section~\ref{sec:boosting_PRDS} how OnlineFDR control can be guaranteed while employing an improved boosting technique for PRDS e-values.

\subsection{Handling the infinite sum in the truncation function used for boosting\label{sec:boosting_arb}}


In this section we propose two approaches to handle the infinite sum in \eqref{eq:truncation_online} and demonstrate their application in a Gaussian testing problem. The first approach is a conservative one. Here, for each $t\in \mathbb{N}$ we define some natural number $s\in \mathbb{N}$ and set
\ifarxiv
\begin{align}
\leftindex^+{T}_t^{s}(x)\coloneqq x\mathbbm{1}\left\{x< \frac{1}{s\alpha \gamma_t} \right\} + \sum_{k=1}^s \mathbbm{1}\left\{ \frac{1}{k\alpha\gamma_t} \leq x < \frac{1}{(k-1)\alpha\gamma_t}\right\} \frac{1}{k\alpha \gamma_t}. \label{eq:boosting_online_+}
\end{align}
\else 
\begin{align}
\leftindex^+{T}_t^{s}(x)&\coloneqq x\mathbbm{1}\left\{x< \frac{1}{s\alpha \gamma_t} \right\} \\ &+ \sum_{k=1}^s \mathbbm{1}\left\{ \frac{1}{k\alpha\gamma_t} \leq x < \frac{1}{(k-1)\alpha\gamma_t}\right\} \frac{1}{k\alpha \gamma_t}. \label{eq:boosting_online_+}
\end{align}
\fi
Obviously, we have $T_t(x)\leq \leftindex^+{T}_t^{s}(x)$ for all $s\in \mathbb{N}$ and therefore $\mathbb{E}_{H_t}[\leftindex^+{T}_t^{s}(X_t)]\leq 1$ implies that $X_t$ is a boosted e-value. We still have that $\leftindex^+{T}_t^{s}(x)\leq x$ such that boosting with $\leftindex^+{T}_t^{s}$ can only increase power compared to no boosting.

Another approach is to set
\ifarxiv
\begin{align}
\leftindex^{-}{T}_t^{s}(x)\coloneqq  \sum_{k=1}^s \mathbbm{1}\left\{ \frac{1}{k\alpha\gamma_t} \leq x < \frac{1}{(k-1)\alpha\gamma_t}\right\} \frac{1}{k\alpha \gamma_t}=\leftindex^+{T}_t^{s}(x)- x\mathbbm{1}\left\{x< \frac{1}{s\alpha \gamma_t} \right\}\label{eq:boosting_online_-}
\end{align}
\else 
\begin{align}
\leftindex^{-}{T}_t^{s}(x)&\coloneqq  \sum_{k=1}^s \mathbbm{1}\left\{ \frac{1}{k\alpha\gamma_t} \leq x < \frac{1}{(k-1)\alpha\gamma_t}\right\} \frac{1}{k\alpha \gamma_t} \\ &=\leftindex^+{T}_t^{s}(x)- x\mathbbm{1}\left\{x< \frac{1}{s\alpha \gamma_t} \right\}\label{eq:boosting_online_-}
\end{align}
\fi
for some natural number $s\in \mathbb{N}$. Note that we have $\leftindex^{-}{T}_t^{s}(x)= T_t(x)$ for all $x\geq 1/(s\alpha \gamma_t)$ and $\leftindex^{-}{T}_t^{s}(x)< T_t(x)$ for all $x<1/(s\alpha \gamma_t)$. Therefore, applying online e-BH to $X_t$ with $\mathbb{E}_{H_t}[\leftindex^-{T}_t^{s}(X_t)]\leq 1$ does not necessarily control SupFDR. However, we can apply online e-BH to $\leftindex^-{T}_t^{s}(X_t)$ instead, which provides SupFDR guarantee since $\leftindex^-{T}_t^{s}(X_t)$ is a valid e-value. Furthermore, note that $T_t[\leftindex^-{T}_t^{s}(x)]=\leftindex^-{T}_t^{s}(x)$ for all $x$, implying that the e-value $\leftindex^-{T}_t^{s}(X_t)$ cannot be further improved by boosting if $\mathbb{E}_{H_t}[\leftindex^-{T}_t^{s}(X_t)]=1$. With this approach we do not necessarily improve non-boosted e-values, since $E_t\leq X_t$ does not imply that  $E_t\leq \leftindex^-{T}_t^{s}(X_t)$, where $\leq$ is to be read in an almost sure sense. However, in some situations this is even more powerful than boosting with \eqref{eq:truncation_online}. For example, suppose we know in advance that the total number of rejections won't be larger than $r$. Then boosting with $\leftindex^-{T}_t^{r}$ is uniformly more powerful than boosting with $T_t$, since e-values smaller than $1/(r\alpha \gamma_t)$ cannot lead to a rejection anyway. If such prior information is not available one could also adapt $s$ to $t$, for example, by setting $s=\lceil 1/\gamma_t \rceil$. In this case we could not reject hypotheses with $X_t<\alpha^{-1}$, however, all the mass of  $\leftindex^-{T}_t^{s}(X_t)$ is put to values greater than $\alpha^{-1}$, increasing the probability for a rejection at larger levels.

\citet{wang2022false} have introduced several concrete examples how the boosted e-values $X_1, X_2,\ldots$ could be chosen in specific testing problems. In the following we extend one of them to the online case.

\begin{example}\label{example:boosting}
    Consider an e-value $E_t$ that is obtained from a likelihood ratio between two normal distributions with variance $1$ but a difference $\delta>0$ in means
    $$
    E_t=\exp(\delta Z_t-\delta^2/2),
    $$
    where $Z_t$ follows a standard normal distribution under the null hypothesis $H_t$. Hence, $E_t$ is a log-normally distributed random variable with parameters $(-\delta^2/2,\delta)$ if $H_t$ is true. Suppose we are looking for a boosted e-value of the type $X_t=b_t E_t$ for some boosting factor $b_t\geq 1$. Using \eqref{eq:boosting_online_+}, we can calculate
    \ifarxiv
    \begin{align}
        \mathbb{E}_{H_t}[\leftindex^+{T}_t^{s}(b_tE_t)]&=b_t\left[1-\Phi\left(\frac{\delta}{2}+\log(s\alpha \gamma_tb_t)/{\delta}\right)\right] \\
        &+\sum_{k=1}^{s} \left[\Phi\left(\delta/2-\log([k-1]\alpha \gamma_t b_t)/\delta\right)-\Phi\left(\delta/2-\log(k\alpha \gamma_t b_t)/{\delta}\right)\right]/({k\alpha \gamma_t}),
    \end{align}
    \else 
    \begin{align}
        &\mathbb{E}_{H_t}[\leftindex^+{T}_t^{s}(b_tE_t)]\\ &=b_t\left[1-\Phi\left(\frac{\delta}{2}+\log(s\alpha \gamma_tb_t)/{\delta}\right)\right] \\
        &+\sum_{k=1}^{s} \{\Phi\left(\delta/2-\log([k-1]\alpha \gamma_t b_t)/\delta\right)\\ &-\Phi\left(\delta/2-\log(k\alpha \gamma_t b_t)/{\delta}\right)\}/({k\alpha \gamma_t}),
    \end{align}
    \fi
    where $\Phi$ is the CDF of a standard normal distribution. Thus, we can set this equation to $1$ and solve for $b_t$ numerically.
    For $\alpha=0.05$, $\gamma_t=0.01$, $\delta=3$ and $s=10$, we obtain a boosting factor of $b_t=1.165$. For increasing $s$, the method becomes less conservative and the boosting factor increases. For $s=100$ we obtain $b_t=1.174$, but increasing $s$ further only leads to marginal improvements. Using \eqref{eq:boosting_online_-}, we obtain
    \ifarxiv
    \begin{align}
        \mathbb{E}_{H_t}[\leftindex^-{T}_t^{s}(b_tE_t)]=\sum_{k=1}^{s} \left[\Phi\left(\delta/2-\log([k-1]\alpha \gamma_t b_t)/{\delta}\right)-\Phi\left(\delta/2-\log(k\alpha \gamma_t b_t)/{\delta}\right)\right]/({k\alpha \gamma_t}).
    \end{align}
    \else 
 \begin{align}
        \mathbb{E}_{H_t}[\leftindex^-{T}_t^{s}(b_tE_t)]&=\sum_{k=1}^{s} \{\Phi\left(\delta/2-\log([k-1]\alpha \gamma_t b_t)/{\delta}\right)\\ &-\Phi\left(\delta/2-\log(k\alpha \gamma_t b_t)/{\delta}\right)\}/({k\alpha \gamma_t}).
    \end{align}
    \fi
    In this case, the same parameters as before yield a boosting factor of $b_t=3.071$. This boosting factor decreases if $s$ increases. For $s=100$, we obtain $b_t=1.73$. The boosting factors obtained by these two approaches converge to the same limit for $s\to \infty$. Note that with the latter approach we need to plugin the e-value $\leftindex^-{T}_t^{s}(b_tE_t)$ instead of $b_t E_t$ into the online e-BH algorithm. However, if we expect the number of rejections to be small, we can increase power substantially using $\leftindex^-{T}_t^{s}$.
\end{example}

In the following section, we show how the boosting approach can be further improved if additional information about the joint distribution of the e-values is available.


\subsection{Boosting under local dependence\label{sec:local_dep}}

In many online applications not all e-values are arbitrarily dependent. Indeed, e-values that lie far in the past often have no influence on the current testing process. For this reason, \citet{zrnic2021asynchronous} introduced a \emph{local dependence} structure that allows arbitrary dependence for e-values close together in time while e-values in the distant past should be independent. Precisely, $(E_t)_{t\in \mathbb{N}}$ are called locally dependent with lags $(L_t)_{t\in \mathbb{N}}$, if for all $t\in I_0$ it holds that
$$
E_t \independent E_{t-L_t-1},\ldots, E_1.
$$
 If $L_t=t-1$, the e-values are arbitrarily dependent. If $L_t<t-1$, the e-value $E_t$ is independent of some of the past e-values.

 In the following, we will show that such local dependence information can be exploited to further boost the e-values and increase power. For this, we define a lag-dependent truncation function.
\begin{align}
T_t^{L_t}(x)&=\frac{1}{(k_{t-L_t-1}^*+1)\alpha\gamma_t} \mathbbm{1}\left\{x\geq \frac{1}{(k_{t-L_t-1}^*+1)\alpha\gamma_t} \right\} \nonumber \\ &+ \sum_{k=k_{t-L_t-1}^*+2}^\infty \mathbbm{1}\left\{ \frac{1}{k\alpha\gamma_t} \leq x < \frac{1}{(k-1)\alpha\gamma_t}\right\} \frac{1}{k\alpha \gamma_t}\\
&= \min\left(T_t(x), \frac{1}{(k_{t-L_t-1}^*+1)\alpha\gamma_t} \right), \label{eq:truncation_local}
\end{align}
 where $T_t$ is the truncation function in \eqref{eq:truncation_online}. Note that $T_t^{L_t}(x)\leq T_t(x)$ for all $x$ and $L_t$. Hence, the boosted e-values using $T_t^{L_t}$ can potentially be larger than the ones obtained by $T_t$. The idea of $T_t^{L_t}$ is that if we know that $k_{t-L_t-1}^*$ of the independent past hypotheses were rejected, $H_t$ can be rejected if $E_t\geq [(k_{t-L_t-1}^*+1)\alpha\gamma_t]^{-1}$. Hence, there is no additional gain if $E_t$ is strictly larger than  $ [(k_{t-L_t-1}^*+1)\alpha\gamma_t]^{-1}$. We can then use this information for improved boosting.

\begin{proposition}\label{prop:boosted_seq}
    Let $X_1, X_2, \ldots$ be a sequence of nonnegative random variables such that $\mathbb{E}_{H_t}[T_t^{L_t}(X_t)]\leq 1$ for all $t\in \mathbb{N}$. Then the online e-BH procedure applied with $X_1, X_2, \ldots$ controls the SupFDR at level $\alpha$.
\end{proposition}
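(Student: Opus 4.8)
The plan is to mirror the proof of the basic boosting result (Proposition~\ref{prop:boosted}), but with two modifications forced by the extra cap in the lag-dependent truncation \eqref{eq:truncation_local}. Write $Y_t \coloneqq T_t^{L_t}(X_t)$, and let $k_s^*$ and $R_s^{\oeBH}$ denote the self-consistent count and rejection set produced by online e-BH run on the original boosted sequence $X_1, X_2, \ldots$; recall the standard e-BH fact $|R_s^{\oeBH}| = k_s^*$ and that $k_s^*$ is nondecreasing in $s$. The goal reduces to two claims: (a) every $R_s^{\oeBH}$ is self-consistent when self-consistency is read off the sequence $(Y_t)$ rather than $(X_t)$; and (b) $\mathbb{E}[Y_i] \leq 1$ for every null index $i \in I_0$. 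Given (a) and (b), the proof of Proposition~\ref{prop:e-self-consist} applies verbatim with $(Y_t)$ playing the role of the e-values — that proof only uses $\mathbb{E}[Y_i]\le 1$ for $i\in I_0$, and never that the $Y_i$ are genuine e-values — so $\mathbb{E}\big[\sup_{R\in\Rcal(\alpha)}\FDP(R)\big]\le\alpha$, and since each $R_s^{\oeBH}$ lies in this class, $\supFDR = \mathbb{E}[\sup_s \FDP(R_s^{\oeBH})] \le \alpha$.

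For claim (a), fix $s$ and $t \in R_s^{\oeBH}$. Self-consistency w.r.t.\ $(Y_t)$ requires $Y_t \geq 1/(\alpha\gamma_t|R_s^{\oeBH}|) = 1/(\alpha\gamma_t k_s^*)$. Since $Y_t = \min\!\big(T_t(X_t),\, 1/((k_{t-L_t-1}^*+1)\alpha\gamma_t)\big)$, it suffices to bound each argument below by $1/(\alpha\gamma_t k_s^*)$. The first bound is immediate: $t \in R_s^{\oeBH}$ gives $X_t \geq 1/(\alpha\gamma_t k_s^*)$, and because $T_t$ merely rounds down to the nearest admissible threshold $1/(k\alpha\gamma_t)$ (cf.\ \eqref{eq:truncation_online}), this forces $T_t(X_t)\geq 1/(\alpha\gamma_t k_s^*)$. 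For the cap, I claim $k_s^* \geq k_{t-L_t-1}^* + 1$, which is precisely what makes $1/((k_{t-L_t-1}^*+1)\alpha\gamma_t) \geq 1/(k_s^*\alpha\gamma_t)$. To see the claim, note $t - L_t - 1 < t \le s$, so nestedness gives $R_{t-L_t-1}^{\oeBH} \subseteq R_s^{\oeBH}$; moreover $t \notin R_{t-L_t-1}^{\oeBH}$ (that set is contained in $[t-L_t-1]$) while $t \in R_s^{\oeBH}$, so $t$ is a genuinely new rejection and $|R_s^{\oeBH}| \geq |R_{t-L_t-1}^{\oeBH}| + 1$, i.e.\ $k_s^* \geq k_{t-L_t-1}^* + 1$. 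This establishes (a).

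For claim (b), set $\mathcal{G}_t \coloneqq \sigma(E_1, \ldots, E_{t-L_t-1})$. By construction $k_{t-L_t-1}^*$ (and the boosting factor defining $X_t$) are $\mathcal{G}_t$-measurable, while local dependence gives $E_t \independent \mathcal{G}_t$ under $H_t$ for $t \in I_0$. Conditioning on $\mathcal{G}_t$ therefore freezes the truncation level $k_{t-L_t-1}^*$, so $\mathbb{E}[Y_t \mid \mathcal{G}_t] = \mathbb{E}_{H_t}[T_t^{L_t}(X_t)] \leq 1$ by the hypothesis (read at the realized value of $k_{t-L_t-1}^*$); taking expectations yields $\mathbb{E}[Y_t] \leq 1$. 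The main obstacle is claim (a), and specifically the strict inequality $k_s^* \ge k_{t-L_t-1}^*+1$: the whole purpose of the cap is to truncate at the $(k_{t-L_t-1}^*+1)$-th threshold, so self-consistency w.r.t.\ $(Y_t)$ would fail unless every rejection occurring by time $s$ is accompanied by the count having risen to at least $k_{t-L_t-1}^*+1$ — the nestedness argument is exactly what guarantees this. Finally, the passage from finitely many to infinitely many hypotheses is handled as in Proposition~\ref{prop:e-self-consist}, via the Fatou--Lebesgue limiting argument.
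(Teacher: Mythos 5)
Your proof is correct. The paper states Proposition~\ref{prop:boosted_seq} without an explicit proof --- it offers only the heuristic that there is ``no additional gain'' from e-values exceeding the cap --- so your write-up fills in precisely the argument the paper leaves implicit, and it does so by the intended mechanism: show that each $R_s^{\oeBH}$ (computed from the $X_t$) is self-consistent with respect to $Y_t \coloneqq T_t^{L_t}(X_t)$, then run the proof of Proposition~\ref{prop:e-self-consist}, which indeed uses only $\mathbb{E}[Y_i]\leq 1$ for $i\in I_0$ and never that the $Y_i$ are genuine e-values. The one genuinely new ingredient, which you identify and prove correctly, is the nestedness step: $t\in R_s^{\oeBH}$ while $t\notin R_{t-L_t-1}^{\oeBH}\subseteq[t-L_t-1]$, hence $k_s^*=|R_s^{\oeBH}|\geq |R_{t-L_t-1}^{\oeBH}|+1=k_{t-L_t-1}^*+1$, which is exactly what keeps the cap $1/((k_{t-L_t-1}^*+1)\alpha\gamma_t)$ at or above the self-consistency threshold $1/(k_s^*\alpha\gamma_t)$. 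Two cosmetic points, neither affecting correctness. First, your claim (b) is simply the proposition's hypothesis restated: since $T_t^{L_t}$ is a random function, $\mathbb{E}_{H_t}[T_t^{L_t}(X_t)]\leq 1$ already asserts $\mathbb{E}[Y_t]\leq 1$ for null $t$, so the conditioning detour is unnecessary for the proof itself; it is only how one verifies the hypothesis in concrete examples, as the paper notes immediately after the proposition. Second, in that detour you condition on $\sigma(E_1,\ldots,E_{t-L_t-1})$ and invoke local dependence of underlying e-values $E_t$, but no such e-values (and no local-dependence assumption) appear in the proposition's hypotheses, which concern only the $X_t$; the natural object would be $\sigma(X_1,\ldots,X_{t-L_t-1})$, and even that is not needed.
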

Finding such $X_1, X_2, \ldots$ is particularly simple if a sequence of locally dependent e-values $(E_t)_{t\in \mathbb{N}}$ with lags $(L_t)_{t\in \mathbb{N}}$ and known marginal distributions is available. In this case one can simply calculate $\mathbb{E}_{H_t}[T_t^{L_t}(b_tE_t)|k_{t-L_t-1}^*]$, since $E_t$ is independent of $k_{t-L_t-1}^*$.

Analogously to \eqref{eq:boosting_online_+}  and \eqref{eq:boosting_online_-}, we can also define $\leftindex^{+}{T}_t^{L_t,s}$ and $\leftindex^{-}{T}_t^{L_t,s}$, respectively. In the following we continue Example \ref{example:boosting} to quantify the improvement gain due to local dependence.

\begin{example}\label{example:boosting_local}
Consider Example \ref{example:boosting} but suppose the e-value $E_t$ is independent of the e-values $E_1,\ldots, E_{t-L_t-1}$ for some $L_t\in \{1,\ldots, t-1\}$.  We obtain
\ifarxiv
\begin{align}
        &\mathbb{E}_{H_t}[\leftindex^+{T}_t^{L_t,s}(b_tE_t)|k_{t-L_t-1}^*]\\ &=b_t\left[1-\Phi\left(\delta/{2}+\log(s\alpha \gamma_tb_t)/{\delta}\right)\right] \\
        &+ \left[1-\Phi\left(\delta/2-\log((k_{t-L_t-1}^*+1)\alpha \gamma_t b_t)/\delta\right)\right]/{[(k_{t-L_t-1}^*+1)\alpha \gamma_t]} \\
        &+\sum_{k=k_{t-L_t-1}^*+2}^{s} \left[\Phi\left(\delta/2-\log([k-1]\alpha \gamma_t b_t)/\delta\right)-\Phi\left(\delta/2-\log(k\alpha \gamma_t b_t)/\delta\right)\right]/({k\alpha \gamma_t}).
    \end{align}
\else 
\begin{align}
        &\mathbb{E}_{H_t}[\leftindex^+{T}_t^{L_t,s}(b_tE_t)|k_{t-L_t-1}^*]\\ &=b_t\left[1-\Phi\left(\delta/{2}+\log(s\alpha \gamma_tb_t)/{\delta}\right)\right] \\
        &+ \frac{1-\Phi\left(\delta/2-\log([k_{t-L_t-1}^*+1]\alpha \gamma_t b_t)/\delta\right)}{(k_{t-L_t-1}^*+1)\alpha \gamma_t} \\
        &+\sum_{k=k_{t-L_t-1}^*+2}^{s} \{\Phi\left(\delta/2-\log([k-1]\alpha \gamma_t b_t)/\delta\right) \\ &-\Phi\left(\delta/2-\log(k\alpha \gamma_t b_t)/\delta\right)\}/({k\alpha \gamma_t}).
    \end{align}
\fi
For the parameters $\alpha=0.05$, $\gamma_t=0.01$, $\delta=3$, $s=100$ and $k_{t-L_t-1}^*=2$, this leads to the boosting factor $b_t=1.265$. For a larger number of previous (independent) rejections $k_{t-L_t-1}^*=10$, the boosting factor also increases to $b_t=1.541$. Furthermore, we have
\ifarxiv
\begin{align}
        \mathbb{E}_{H_t}[\leftindex^-{T}_t^{L_t,s}(b_tE_t)|k_{t-L_t-1}^*]&= \mathbb{E}_{H_t}[\leftindex^+{T}_t^{L_t,s}(b_tE_t)|k_{t-L_t-1}^*] - b_t\left[1-\Phi\left(\delta/{2}+\log(s\alpha \gamma_tb_t)/{\delta}\right)\right].
    \end{align}
    \else 
\begin{align}
        &\mathbb{E}_{H_t}[\leftindex^-{T}_t^{L_t,s}(b_tE_t)|k_{t-L_t-1}^*] \\ &= \mathbb{E}_{H_t}[\leftindex^+{T}_t^{L_t,s}(b_tE_t)|k_{t-L_t-1}^*] \\ &- b_t\left[1-\Phi\left(\delta/{2}+\log(s\alpha \gamma_tb_t)/{\delta}\right)\right].
    \end{align}
    \fi
    In this case we obtain a boosting factor of $b_t=1.940$ for $k_{t-L_t-1}^*=2$ and $b_t=2.639$ for $k_{t-L_t-1}^*=10$.
\end{example}

The gain in power obtained by exploiting information about the local dependence structure to improve boosting is illustrated in Figure  \ref{fig:sim_boosted}. It is seen that using information about locally dependent e-values improves power significantly.

\ifarxiv
\begin{figure*}[h!]
\centering
\includegraphics[width=0.8\textwidth]{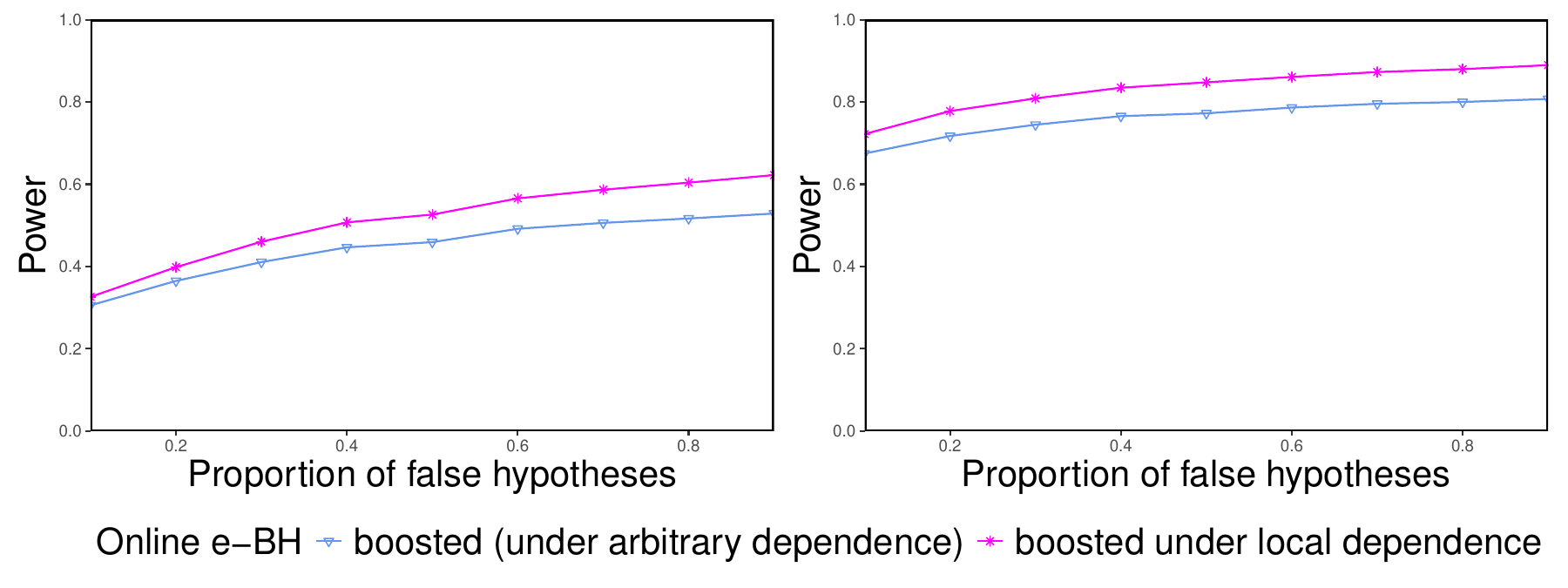}
\caption{Power comparison of online e-BH with boosted and boosted e-values under local dependence for different proportions of false hypotheses. In the left (right) plot the signal of the alternative is weak (strong). The simulation setup is described in Section \ref{sec:sim_setup}. \label{fig:sim_boosted_local} }\end{figure*}
\else 
\begin{figure*}[h!]
\centering
\includegraphics[width=14cm]{Plot_boosted_local.pdf}
\caption{Power comparison of online e-BH with boosted and boosted e-values under local dependence for different proportions of false hypotheses. In the left (right) plot the signal of the alternative is weak (strong). The simulation setup is described in Section \ref{sec:sim_setup}. \label{fig:sim_boosted_local} }\end{figure*}
\fi

In the p-value literature, it is usually started with a procedure under independence which is made conservative when local dependence is present. Interestingly, we take the other direction and start with a procedure that works under arbitrary dependence and improve it when information about a local dependence structure is available. A similar behavior for the BH and e-BH procedure was noted by \citet{wang2022false}.

\begin{remark}
    Another advantage of local dependence is that $\gamma_t$ can be measurable with respect to $\mathcal{F}_{t-L_t-1}=\sigma(E_1,\ldots,E_{t-L_t-1})$ and thus depend on the previous data. In fact, to choose $\gamma_t$ predictable it is even sufficient that $E_t$ is valid conditional on $\mathcal{F}_{t-L_t-1}$, which is slightly weaker than independence.
    To see this, just note that for all $t\in I_0$ it holds that \ifarxiv \begin{align}\mathbb{E}[\gamma_t E_t]=\EE[\mathbb{E}[\gamma_t E_t|\mathcal{F}_{t-L_t-1}]] =\EE[\gamma_t\mathbb{E}[ E_t|\mathcal{F}_{t-L_t-1}]]\leq \EE[\gamma_t].\end{align}
    \else 
\begin{align}\mathbb{E}[\gamma_t E_t]&=\EE[\mathbb{E}[\gamma_t E_t|\mathcal{F}_{t-L_t-1}]]\\ &=\EE[\gamma_t\mathbb{E}[ E_t|\mathcal{F}_{t-L_t-1}]]\leq \EE[\gamma_t].\end{align}
    \fi
\end{remark}

\subsection{Boosting with PRDS e-values\label{sec:boosting_PRDS}}

\citet{wang2022false} also introduced a more powerful boosting technique than the one we described in Section \ref{sec:boosting_arb} which works under positive regression dependence on a subset (PRDS) \citep{benjamini2001control} (see Section \ref{sec:PRDS} for a formal definition of PRDS for e-values). In the following we extend this method to the online setting. However, the proof technique of FDR control is different in this case such that SupFDR control can no longer be guaranteed and we only show OnlineFDR control.

Let the truncation function $T_t$ be as in \eqref{eq:truncation_online}. Instead of bounding the expected value of the truncated e-value $T_t(X_t)$, \citet{wang2022false} proposed to bound the probability
\ifarxiv
\begin{align}
\sup_{k\in \mathbb{N}} \frac{1}{k \alpha \gamma_t}\mathbb{P}_{H_t}(X_t\geq 1/(k \alpha \gamma_t)) =\sup_{x\geq 0} x\mathbb{P}_{H_t}(T_t(X_t)\geq x)\leq 1, \label{eq:boost_PRDS}
\end{align}
\else 
\begin{align}
&\sup_{k\in \mathbb{N}} \frac{1}{k \alpha \gamma_t}\mathbb{P}_{H_t}(X_t\geq 1/(k \alpha \gamma_t)) \\&=\sup_{x\geq 0} x\mathbb{P}_{H_t}(T_t(X_t)\geq x)\leq 1, \label{eq:boost_PRDS}
\end{align}
\fi
if the $X_t$, $t\in \mathbb{N}$, are PRDS. In the following, we show that this condition can also be used for boosting in the online case.

\begin{proposition}\label{prop:boost_PRDS}
    Let $X_1,X_2,\ldots$ be a sequence of nonnegative random variables such that \eqref{eq:boost_PRDS} holds for all $t\in \mathbb{N}$. Then the online e-BH procedure applied with $X_1,X_2,\ldots$ controls the OnlineFDR at level $\alpha$, if the random variables $X_1,X_2,\ldots$ are PRDS.
\end{proposition}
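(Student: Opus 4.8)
The plan is to control the OnlineFDR by bounding each $\FDR_t$ separately. Since $\FDR_t$ only involves the finitely many hypotheses $H_1,\dots,H_t$, no limiting (Fatou) argument is needed here, in contrast to the SupFDR proofs. I would first record the structural identity $|R_t^{\oeBH}| = k_t^*$ (which follows from the maximality of $k_t^*$), and then decompose
\begin{align}
\FDR_t = \sum_{i \le t,\, i \in I_0} \mathbb{E}\!\left[\frac{\ind{X_i \ge 1/(k_t^* \alpha \gamma_i)}}{k_t^* \vee 1}\right].
\end{align}
The goal then reduces to showing that each summand is at most $\alpha \gamma_i$, so the total is at most $\alpha \sum_{i \in I_0}\gamma_i = \pi_0\alpha \le \alpha$, uniformly in $t$.

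Second, I would reduce the whole statement to the already-established OnlineFDR control of online BH under PRDS (Section~\ref{sec:ofdr-prds}) via a reciprocal change of variables. Define $\tilde P_i \coloneqq 1/X_i$ (with $1/0 = \infty$ and $1/\infty = 0$). Because $X_i \ge 1/(k\alpha\gamma_i) \iff \tilde P_i \le k\alpha\gamma_i$, the online e-BH procedure run on $X_1,X_2,\dots$ makes exactly the same rejections as the online BH procedure run on $\tilde P_1,\tilde P_2,\dots$; in particular the two procedures share the same $k_t^*$ and the same rejection sets at every $t$. Two properties then need to be checked for $(\tilde P_i)$. The first is grid-level super-uniformity of the nulls: for $i\in I_0$, condition~\eqref{eq:boost_PRDS} yields $\mathbb{P}_{H_i}(\tilde P_i \le k\alpha\gamma_i) = \mathbb{P}_{H_i}(X_i \ge 1/(k\alpha\gamma_i)) \le k\alpha\gamma_i$ for every $k$. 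The second is PRDS of $(\tilde P_i)$: since $x \mapsto 1/x$ is coordinatewise order-reversing, an upward-closed event in $\tilde P$-space pulls back to a downward-closed event in $X$-space, so the PRDS assumption on $(X_i)$ translates into the (p-value) PRDS property for $(\tilde P_i)$ under this paper's conventions.

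Third, with these two properties in hand I would invoke Lemma 1(b) of \citet{ramdas2019unified} exactly as in the proof in Section~\ref{sec:ofdr-prds}, to conclude that $\mathbb{E}[\ind{\tilde P_i \le k_t^*\alpha\gamma_i}/(k_t^*\vee 1)] \le \alpha\gamma_i$ for each null $i$, and then sum. This establishes $\FDR_t \le \alpha$ for all $t$ and hence $\textnormal{OnlineFDR}\le\alpha$.

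The main obstacle, and the conceptual crux, is justifying that the \emph{probability}-level hypothesis~\eqref{eq:boost_PRDS} is exactly the right condition here, as opposed to the \emph{expectation}-level condition $\mathbb{E}_{H_t}[T_t(X_t)]\le 1$ that drives the arbitrary-dependence results (via $\ind{x\ge 1}\le x$) in Propositions~\ref{prop:e-self-consist} and~\ref{prop:boosted}. The point is that online BH only ever compares $\tilde P_i$ against the discrete grid $\{k\alpha\gamma_i\}$, so the PRDS argument of Lemma 1(b) consults only the super-uniformity of the null $\tilde P_i$ at those thresholds; this grid-only validity is precisely what~\eqref{eq:boost_PRDS} supplies, even though $\tilde P_i = 1/X_i$ need not be a genuine super-uniform p-value off the grid. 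I expect the delicate bookkeeping to be twofold: confirming that the proof of Lemma 1(b) genuinely uses only threshold-level validity (so that boosting beyond an honest e-value is permitted), and verifying that the reciprocal map preserves PRDS between the e-value and p-value conventions adopted in Section~\ref{sec:PRDS}.
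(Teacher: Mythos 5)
Your reduction --- pass to the reciprocals $\tilde P_t = 1/X_t$, observe that online e-BH on $(X_t)$ makes the same rejections as online BH on $(\tilde P_t)$, check that e-value PRDS becomes p-value PRDS under the order-reversing map, and then reuse the Section~\ref{sec:ofdr-prds} argument --- is a genuinely different route from the paper's. The paper never leaves e-value space: it fixes $t$, notes that $f(\boldsymbol{X}) = 1/(k_t^*\vee 1)$ is a nonincreasing function of $(X_1\alpha\gamma_1,\dots,X_t\alpha\gamma_t)$ with discrete range, and invokes Lemma 1(ii) of \citet{wang2022false}, whose conclusion is $\expect[f(\boldsymbol{X})\ind{X_i \alpha \gamma_i \geq f(\boldsymbol{X})}]\leq \sup_{x\geq 0} x\, \prob{T_i(X_i)\alpha \gamma_i\geq x}$. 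The decisive feature of that lemma is that its right-hand side is phrased in terms of the truncated variable $T_i(X_i)$ from \eqref{eq:truncation_online}, so the grid-level hypothesis \eqref{eq:boost_PRDS} is exactly what it consumes; no super-uniformity of $1/X_i$ off the grid is ever needed. Your structural steps (the identity $|R_t^{\oeBH}| = k_t^*$, the FDR decomposition, and the PRDS translation under reciprocals, which matches the remark after Definition~\ref{def:PRDS_e-values}) are all correct.

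The gap is the one you flagged but did not resolve: Lemma 1(b) of \citet{ramdas2019unified} is stated for null p-values that are super-uniform at \emph{every} threshold, whereas \eqref{eq:boost_PRDS} only gives $\prob{\tilde P_i \leq k\alpha\gamma_i} \leq k\alpha\gamma_i$ on the grid, and ``I expect the proof of Lemma 1(b) to use only threshold-level validity'' is not an argument --- as written you cannot invoke that lemma for a variable that is not a p-value. Fortunately the fix is short, and it is precisely the paper's truncation function: run online BH on $\tilde P_i' \coloneqq 1/T_i(X_i)$, i.e., $\tilde P_i$ rounded up to the grid $\{k\alpha\gamma_i: k \in \mathbb{N}\}$ (note $T_i(\infty)=1/(\alpha\gamma_i)$, so $\tilde P_i' \geq \alpha\gamma_i > 0$). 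Then (a) all rejection decisions are unchanged, since online BH only compares against grid thresholds; (b) each null $\tilde P_i'$ is genuinely super-uniform, because for $u \geq \alpha\gamma_i$ one has $\prob{\tilde P_i' \leq u} = \prob{X_i \geq 1/(k_u\alpha\gamma_i)} \leq k_u\alpha\gamma_i \leq u$ with $k_u = \lfloor u/(\alpha\gamma_i)\rfloor$, and $\prob{\tilde P_i' \leq u}=0$ for $u < \alpha\gamma_i$; and (c) the vector $(\tilde P_j')_j$ inherits p-value PRDS from e-value PRDS of $(X_j)_j$, since each map $x \mapsto 1/T_j(x)$ is nonincreasing and the conditioning events $\{\tilde P_i' \leq s\}$ are of the form $\{X_i \geq c(s)\}$ with $c(s)$ nonincreasing in $s$. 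With this preprocessing, Lemma 1(b) applies verbatim and your computation closes. So your route, once completed, is the same mechanism as the paper's; the difference is whether the grid discretization is absorbed into the key lemma (Wang--Ramdas) or into a rounding of the p-values before applying the standard PRDS lemma.
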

\begin{proof}
    Let $t\in \mathbb{N}$ be fixed and $\boldsymbol{X}=(X_1\alpha \gamma_1,\ldots,X_t\alpha \gamma_t)$. Then $f(\boldsymbol{X})=1/(k_t^*\lor 1)$, where $$k_t^*=\max\left\{k\in \{1,\ldots,t\}: \sum_{j=1}^t \mathbbm{1}\{X_j\geq 1/(k \alpha  \gamma_j) \} \geq k\right\},$$ is a nonincreasing function. Furthermore, the range of $f$ is given by $I_f=\{1,1/2, 1/3,\ldots\}$ and
    $$
    \sup\{x\in I_f: x \leq X_i\alpha \gamma_i\}=T_i(X_i)\alpha \gamma_i.
    $$
    With this,  Lemma 1 (ii) in \citep{wang2022false} implies that
    $$
    \mathbb{E}[f(\boldsymbol{X})\mathbbm{1}\{X_i \alpha \gamma_i \geq f(\boldsymbol{X})\}]\leq \sup_{x\geq 0} x \mathbb{P}(T_i(X_i)\alpha \gamma_i\geq x)
    $$
    for all $i\in I_0$, $i\leq t$. Since
    $$
    \sup_{x\geq 0} x \mathbb{P}(T_i(X_i)\alpha \gamma_i\geq x)\leq \alpha \gamma_i \Leftrightarrow \sup_{x\geq 0} x \mathbb{P}(T_i(X_i)\geq x) \leq 1,
    $$
    we obtain
    \begin{align}
    \text{FDR}_t&=\sum_{i\in I_0, i\leq t} \mathbb{E}\left[ \frac{\mathbbm{1}\{ X_i\geq 1/(\alpha \gamma_i (k_t^*\lor 1))}{(k_t^*\lor 1)} \right]
    \\ &\leq \sum_{i\in I_0, i\leq t} \sup_{x\geq 0} x \mathbb{P}(T_i(X_i)\alpha \gamma_i\geq x) \leq \alpha \sum_{i\in I_0, i\leq t} \gamma_i \leq \alpha.
    \end{align}

\end{proof}

By Markov's inequality, $\mathbb{E}_{H_t}[T_t(X_t)]\leq 1$ implies that \eqref{eq:boost_PRDS} is satisfied. Hence, boosting with \eqref{eq:boost_PRDS} is more powerful than boosting with  $\mathbb{E}_{H_t}[T_t(X_t)]\leq 1$.

\section{The e-TOAD procedure\label{sec:e-TOAD}}

First, we review the TOAD procedure by \citet{fisher2022online} under arbitrary dependence. Afterwards, we introduce e-TOAD, an e-value analog of the TOAD procedure and show that e-TOAD provides SupFDR control. By showing that TOAD is a special case of e-TOAD we also conclude SupFDR control for the TOAD procedure.

In the online setup with \enquote{decision deadlines} by \citet{fisher2022online} each $H_t$ comes with a decision deadline $d_t\geq t$, which means that the decision for hypothesis $H_t$ can be changed until time $d_t$. For example, if $d_t=t$ for all $t\in \mathbb{N}$, the classical online setup is obtained.  Furthermore, let $C_t=\{i\leq t: d_i\geq t\}$ be the set of currently \enquote{active} hypotheses.  We will see that TOAD and e-TOAD never turn a rejected hypothesis into an accepted hypothesis and therefore are online ARC procedures.

Initialize $R_0=\emptyset$. The TOAD algorithm under arbitrary dependence \citep{fisher2022online} is recursively defined as follows.
\ifarxiv
\begin{align}
    & R_t^{\text{TOAD}}\coloneqq\left\{i\in \{1,\ldots,t\}:P_i\leq \alpha \gamma_i \beta_i(k^*_{\min(d_i,t)})\right\}, t\in \mathbb{N}, \text{ where } \label{eq:toad} \\
     & k_t^*=|R_{t-1}^{\text{TOAD}}\setminus C_t| + \max\left\{k\leq |C_t|: \sum_{j\leq t, j\in C_t} \mathbbm{1}\{P_j\leq \alpha \gamma_j \beta_j(|R_{t-1}^{\text{TOAD}}\setminus C_t|+k)\} \geq k\right\}. \nonumber
\end{align}
\else 
\begin{align}
    & R_t^{\text{TOAD}}\coloneqq\left\{i\in \{1,\ldots,t\}:P_i\leq \alpha \gamma_i \beta_i(k^*_{\min(d_i,t)})\right\}, \label{eq:toad} \\
    \text{where } & k_t^*=|R_{t-1}^{\text{TOAD}}\setminus C_t| + \max\left\{k\leq |C_t|: \Sigma_{k,t}^{\text{TOAD}} \geq k\right\} \\
    \text{and } &\Sigma_{k,t}^{\text{TOAD}}=\sum_{j\leq t, j\in C_t} \mathbbm{1}\{P_j\leq \alpha \gamma_j \beta_j(|R_{t-1}^{\text{TOAD}}\setminus C_t|+k)\}. \nonumber
\end{align}
\fi
The shape functions $\beta_i$ are defined the same as in Section \ref{sec:BY}. The TOAD procedure under arbitrary dependence is an online generalization of the BR procedure \citep{blanchard2008two}. Note that \citet{fisher2022online} only allowed to use the same shape function $\beta=\beta_i$ for all $i\in \mathbb{N}$. This can cost a lot of power, particularly if $d_t<\infty$. For example, suppose $d_1=1$, which means that the decision for the first hypothesis must be made immediately. Then we could choose $\beta_1(1)=1$, meaning to put all mass on $k_1^*=1$ since this is the only relevant value. This would not be possible if the same shape function must be used for all hypotheses, which usually implies $\beta_1(1)<1$. We will show that TOAD even controls the SupFDR for individual shape functions by proving that it is a special case of e-TOAD.

The e-TOAD algorithm is defined by
\ifarxiv
\begin{align}
    & R_t^{\text{e-TOAD}}\coloneqq\left\{i\in \{1,\ldots,t\}:E_i\geq (\alpha \gamma_i k^*_{\min(d_i,t)})^{-1}\right\}, t\in \mathbb{N}, \text{ where } \\
     & k_t^*=|R_{t-1}^{\text{e-TOAD}}\setminus C_t| + \max\left\{k\leq |C_t|: \sum_{j\leq t, j\in C_t} \mathbbm{1}\{E_j\geq [\alpha \gamma_j (|R_{t-1}^{\text{e-TOAD}}\setminus C_t|+k)]^{-1}\} \geq k\right\}. \nonumber
\end{align}
\else 
\begin{align}
    & R_t^{\text{e-TOAD}}\coloneqq\left\{i\in \{1,\ldots,t\}:E_i\geq (\alpha \gamma_i k^*_{\min(d_i,t)})^{-1}\right\}, \\
    &\text{where }  k_t^*=|R_{t-1}^{\text{e-TOAD}}\setminus C_t| + \max\left\{k\leq |C_t|: \Sigma_{k,t}^{\text{e-TOAD}} \geq k\right\}\\
    &\text{and } \Sigma_{k,t}^{\text{e-TOAD}}=\sum_{j\leq t, j\in C_t} \mathbbm{1}\{E_j\geq [\alpha \gamma_j (|R_{t-1}^{\text{e-TOAD}}\setminus C_t|+k)]^{-1}\}. \nonumber
\end{align}
\fi
Note that e-TOAD becomes online e-BH in case of $d_t=\infty$ for all $t$ and e-LOND if $d_t=t$ for all $t$.

Analogously to Section \ref{sec:online_eBH}, the SupFDR control of the e-TOAD algorithm follows because the e-value $E_i$, $i\in \{1,\ldots,t\}$, of every rejected hypothesis at time $t$ satisfies
$$
E_i\geq 1/(\alpha \gamma_i |R_t^{\text{e-TOAD}}|).
$$
\begin{proposition}
    The e-TOAD procedure provides SupFDR control at level $\alpha$ under arbitrary dependence between the e-values.
\end{proposition}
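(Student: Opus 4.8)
The plan is to reduce the statement to Proposition~\ref{prop:e-self-consist} by showing that each e-TOAD rejection set is self-consistent, i.e.\ $R_t^{\text{e-TOAD}}\in\Rcal(\alpha)$ for every $t\in\mathbb{N}$. Since $R_t^{\text{e-TOAD}}\subseteq\{1,\dots,t\}$ has finite cardinality, it suffices to verify the inequality $E_i\geq 1/(\alpha\gamma_i|R_t^{\text{e-TOAD}}|)$ for every $i\in R_t^{\text{e-TOAD}}$, which is exactly the self-consistency condition highlighted just before the proposition. Once this holds for all $t$, we get $\sup_{t\in\mathbb{N}}\FDP(R_t^{\text{e-TOAD}})\leq\sup_{R\in\Rcal(\alpha)}\FDP(R)$ pointwise, and taking expectations together with Proposition~\ref{prop:e-self-consist} immediately yields $\supFDR\leq\alpha$ under arbitrary dependence.

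To establish self-consistency at a fixed $t$, I would first prove the local relation $k_t^*\leq|R_t^{\text{e-TOAD}}|$, mirroring the argument for online e-BH. Writing $m=|R_{t-1}^{\text{e-TOAD}}\setminus C_t|$ for the number of already-locked (inactive) rejected hypotheses and $k^{**}$ for the value of the maximum in the definition of $k_t^*$, so that $k_t^*=m+k^{**}$, the defining inequality of that maximum gives $\sum_{j\leq t, j\in C_t}\mathbbm{1}\{E_j\geq(\alpha\gamma_j k_t^*)^{-1}\}\geq k^{**}$; that is, at least $k^{**}$ currently active hypotheses clear the threshold $(\alpha\gamma_j k_t^*)^{-1}$ and hence lie in $R_t^{\text{e-TOAD}}\cap C_t$. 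Combined with the observation that the $m$ inactive rejected hypotheses persist (their membership is frozen through the $\min(d_i,t)$ appearing in the definition), this gives $|R_t^{\text{e-TOAD}}|=m+|R_t^{\text{e-TOAD}}\cap C_t|\geq m+k^{**}=k_t^*$.

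Next, for an arbitrary $i\in R_t^{\text{e-TOAD}}$, its rejection is decided at time $s:=\min(d_i,t)\leq t$ against the threshold $(\alpha\gamma_i k_s^*)^{-1}$, so $E_i\geq(\alpha\gamma_i k_s^*)^{-1}$. Invoking the online ARC (nestedness) property $R_s^{\text{e-TOAD}}\subseteq R_t^{\text{e-TOAD}}$, which holds because e-TOAD never reverses a rejection, together with the local relation applied at time $s$, I obtain $k_s^*\leq|R_s^{\text{e-TOAD}}|\leq|R_t^{\text{e-TOAD}}|$. Therefore $E_i\geq(\alpha\gamma_i k_s^*)^{-1}\geq(\alpha\gamma_i|R_t^{\text{e-TOAD}}|)^{-1}$, which is precisely the self-consistency inequality.

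The main obstacle is the bookkeeping forced by the decision deadlines. Unlike online e-BH, where every rejected hypothesis is scored against the current $k_t^*$, here different hypotheses freeze their thresholds at different times $\min(d_i,t)$, so I must carefully (i) confirm that the inactive rejected hypotheses are exactly those counted by $m=|R_{t-1}^{\text{e-TOAD}}\setminus C_t|$ and that they remain rejected at time $t$, and (ii) justify the nestedness $R_s^{\text{e-TOAD}}\subseteq R_t^{\text{e-TOAD}}$, which ultimately rests on the monotonicity of $k_t^*$ that makes a rejection at time $s$ survive at all later times. Given these two structural facts, the remaining steps are elementary threshold comparisons.
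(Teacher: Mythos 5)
Your proposal is correct and takes essentially the same approach as the paper: the paper's entire argument is the one-sentence observation that every hypothesis rejected by e-TOAD at time $t$ satisfies $E_i \geq 1/(\alpha\gamma_i|R_t^{\text{e-TOAD}}|)$ (i.e., the rejection sets are self-consistent), after which SupFDR control follows from Proposition~\ref{prop:e-self-consist}. Your write-up actually supplies more detail than the paper — the local relation $k_t^*\leq|R_t^{\text{e-TOAD}}|$, the threshold-freezing bookkeeping for inactive hypotheses, and the monotonicity of $k_t^*$ underlying nestedness are all left implicit there — and the two structural facts you defer are true and verifiable exactly along the lines you indicate.
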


Define the truncation function as
\ifarxiv
\begin{align}
T_t^{\text{TOAD}}(x)=\sum_{k=1}^{d_t} \mathbbm{1}\left\{ \frac{1}{k\alpha\gamma_t} \leq x < \frac{1}{(k-1)\alpha\gamma_t}\right\} \frac{1}{k\alpha \gamma_t} \quad \text{with } T_t(\infty)=\frac{1}{\alpha\gamma_t}. \label{eq:truncation_toad}
\end{align}
\else 
\begin{align}
T_t^{\text{TOAD}}(x)=\sum_{k=1}^{d_t} \mathbbm{1}\left\{ \frac{1}{k\alpha\gamma_t} \leq x < \frac{1}{(k-1)\alpha\gamma_t}\right\} \frac{1}{k\alpha \gamma_t} \label{eq:truncation_toad}
\end{align}
with $T_t(\infty)=\frac{1}{\alpha\gamma_t}$. 
\fi
\sloppy Then e-TOAD applied to $X_1, X_2,\ldots$ rejects the same hypotheses as if applied to $T_1^{\text{TOAD}}(X_1),T_2^{\text{TOAD}}(X_2),\ldots$~. Therefore, applying e-TOAD to $X_1, X_2,\ldots$ ensures SupFDR control, if $\EE[T_t^{\text{TOAD}}(X_t)]\leq 1$ for all $t\in I_0$. With this, we can prove the following proposition in the exact same manner as we proved Proposition \ref{prop:p-values}.

\begin{proposition}\label{prop:transform_TOAD}
    Let $\psi_1, \psi_2,\ldots$ be nonincreasing transforms  such that for every $t$ it holds that
    \begin{align}
        \sum_{k=1}^{d_t} \frac{1}{k\alpha \gamma_t}\left(\psi_t^{-1}\left(\frac{1}{k \alpha \gamma_t}\right)- \psi_t^{-1}\left(\frac{1}{(k-1) \alpha \gamma_t}\right)\right)\leq 1. \label{eq:condition_p_toad}
    \end{align}
    Then the e-TOAD procedure applied to the random variables $\psi_1(P_1),\psi_2(P_2), \ldots$ controls the SupFDR at level $\alpha$.
\end{proposition}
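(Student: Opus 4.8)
The plan is to lean on the boosting reduction stated immediately before the proposition: e-TOAD applied to $X_1, X_2, \ldots$ produces exactly the same rejections as e-TOAD applied to $T_1^{\text{TOAD}}(X_1), T_2^{\text{TOAD}}(X_2), \ldots$, and the latter variables are genuine e-values whenever $\mathbb{E}_{H_t}[T_t^{\text{TOAD}}(X_t)] \leq 1$ for every null $t$, in which case e-TOAD is self-consistent and SupFDR control follows from Proposition~\ref{prop:e-self-consist}. Consequently the whole task collapses to verifying that, with the choice $X_t \coloneqq \psi_t(P_t)$, condition~\eqref{eq:condition_p_toad} forces $\mathbb{E}_{H_t}[T_t^{\text{TOAD}}(\psi_t(P_t))] \leq 1$ for each $t \in I_0$. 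This is precisely the computation underlying Proposition~\ref{prop:p-values}, the only difference being that the grid of thresholds is truncated at $k = d_t$ rather than run out to infinity, which matches the truncation function $T_t^{\text{TOAD}}$ in~\eqref{eq:truncation_toad}.

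First I would translate the level sets of the transformed variable back into events on $P_t$. Writing $c_k \coloneqq 1/(k\alpha\gamma_t)$ and $a_k \coloneqq \psi_t^{-1}(c_k)$, the fact that $\psi_t$ is nonincreasing and left-continuous together with the generalized-inverse definition $\psi_t^{-1}(x) = \max\{u \in [0,1] : \psi_t(u) \geq x\}$ gives the clean equivalence $\{\psi_t(P_t) \geq c_k\} = \{P_t \leq a_k\}$, and hence $\{T_t^{\text{TOAD}}(\psi_t(P_t)) \geq c_k\} = \{P_t \leq a_k\}$. Here the $c_k$ are strictly decreasing while the thresholds satisfy $a_0 \leq a_1 \leq \cdots \leq a_{d_t}$. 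I would then evaluate the expectation by a layer-cake (summation-by-parts) expansion, $\mathbb{E}_{H_t}[T_t^{\text{TOAD}}(\psi_t(P_t))] = \sum_{k=1}^{d_t}(c_k - c_{k+1})\,\mathbb{P}_{H_t}(P_t \leq a_k)$ with $c_{d_t+1}\coloneqq 0$, apply the null super-uniformity $\mathbb{P}_{H_t}(P_t \leq a_k) \leq a_k$ together with $c_k - c_{k+1} \geq 0$, and re-sum by parts so that $\sum_k (c_k - c_{k+1}) a_k$ becomes $\sum_{k=1}^{d_t} c_k(a_k - a_{k-1})$, which is exactly the left-hand side of~\eqref{eq:condition_p_toad} and is bounded by $1$ by assumption; the boundary term carries the convention $\psi_t^{-1}(\infty) = a_0 = 0$ and the convention $T_t^{\text{TOAD}}(\infty) = 1/(\alpha\gamma_t)$, so that it vanishes.

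The step I expect to be the genuine obstacle is the summation-by-parts rearrangement. One is tempted to bound the expectation term by term, but super-uniformity only controls $\mathbb{P}_{H_t}(P_t \leq a_k)$ from above and yields no matching lower bound on $\mathbb{P}_{H_t}(P_t \leq a_{k-1})$, so the naive inequality $\mathbb{P}_{H_t}(a_{k-1} < P_t \leq a_k) \leq a_k - a_{k-1}$ is false in general. What rescues the argument is precisely the monotonicity of the coefficients $c_k$, which makes the Abel rearrangement produce only nonnegative weights $(c_k - c_{k+1})$ multiplying the $a_k$; this is where the structure of the e-BH/TOAD thresholds is used. The only remaining care is the handling of the $\infty$-value of the transform at the top of the grid and the truncation at $k = d_t$; beyond these bookkeeping points the argument is identical to the proof of Proposition~\ref{prop:p-values}.
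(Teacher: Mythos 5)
Your proof is correct and takes essentially the same route as the paper: the paper likewise reduces the claim via the rejection-equivalence of e-TOAD on $X_t$ and on $T_t^{\text{TOAD}}(X_t)$, notes that self-consistency plus Proposition~\ref{prop:e-self-consist} gives SupFDR control once $\mathbb{E}_{H_t}[T_t^{\text{TOAD}}(\psi_t(P_t))]\leq 1$, and then verifies this expectation bound exactly as in the proof of Proposition~\ref{prop:p-values}. The only difference is cosmetic: you bound the expectation directly for super-uniform $P_t$ via a layer-cake decomposition and Abel summation, whereas the paper computes it exactly for a uniform random variable $U$ (leaving the stochastic-dominance step and the boundary term $\psi_t^{-1}(\infty)$ implicit), so your write-up is, if anything, slightly more explicit on those points.
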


Define $\psi_t$ such that $\psi_t^{-1}(1/(\alpha \gamma_t k))=\alpha \beta_t(k) \gamma_t$. Then  $\psi_t(P_t)\geq 1/(\alpha \gamma_t k)$ iff $P_t\leq \psi_t^{-1}(1/(\alpha \gamma_t k))=\alpha \beta_t(k) \gamma_t$. Hence, e-TOAD applied to $\psi_1(P_1),\psi_2(P_2),\ldots $ rejects the same hypotheses as TOAD applied to $P_1, P_2,\ldots$ . In the exact same manner as in Proposition \ref{prop:BR_procedure}, we can show that $\psi_1,\psi_2,\ldots$ satisfy Proposition \ref{prop:transform_TOAD}. Hence, TOAD controls the SupFDR.

\begin{proposition}\label{prop:SupFDR_TOAD}
    The TOAD procedure \eqref{eq:toad} provides SupFDR control at level $\alpha$ under arbitrary dependence between the p-values.
\end{proposition}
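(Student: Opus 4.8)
The plan is to realize the p-value based TOAD procedure \eqref{eq:toad} as a special instance of e-TOAD applied to transformed random variables, and then to invoke Proposition~\ref{prop:transform_TOAD} (which already hands us SupFDR control once a shape condition is verified). Concretely, for each $t$ I would define a nonincreasing transform $\psi_t$ through its generalized inverse by the prescription
\[
\psi_t^{-1}\!\left(\frac{1}{\alpha \gamma_t k}\right) = \alpha \beta_t(k)\gamma_t, \qquad k = 1, 2, \ldots,
\]
together with $\beta_t(0) = 0$, so that the convention $1/0 = \infty$ forces $\psi_t^{-1}(\infty) = 0$. Since each $\beta_t$ is a nondecreasing shape function while $1/(\alpha\gamma_t k)$ is decreasing in $k$, this prescription is consistent with $\psi_t^{-1}$ being the (nonincreasing) generalized inverse of some nonincreasing, left-continuous $\psi_t$ with $\psi_t(0) = \infty$; I would construct $\psi_t$ as the associated piecewise-constant function, exactly mirroring the online BR argument in Proposition~\ref{prop:BR_procedure}.

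With this choice, $\psi_t(P_t) \geq 1/(\alpha\gamma_t k)$ holds if and only if $P_t \leq \alpha\beta_t(k)\gamma_t$, so the threshold that e-TOAD applies to $\psi_t(P_t)$ coincides, level by level, with the threshold that TOAD applies to $P_t$. Consequently both procedures compute the same $k_t^*$ at every step and return identical rejection sets for all $t$. It therefore suffices to check that the transforms $\psi_1, \psi_2, \ldots$ meet the hypothesis \eqref{eq:condition_p_toad} of Proposition~\ref{prop:transform_TOAD}.

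For that final verification I would substitute the defining identity into \eqref{eq:condition_p_toad} and let the differences telescope into shape-function increments:
\[
\sum_{k=1}^{d_t} \frac{1}{k\alpha\gamma_t}\left(\alpha\beta_t(k)\gamma_t - \alpha\beta_t(k-1)\gamma_t\right) = \sum_{k=1}^{d_t} \frac{\beta_t(k) - \beta_t(k-1)}{k} = \sum_{k=1}^{d_t} \frac{1}{k}\int_{k-1}^{k} x \ \nd\nu(x),
\]
where $\beta_t(k) = \int_0^k x \ \nd\nu(x)$ for a probability measure $\nu$ on $(0,\infty)$. Because $x/k \leq 1$ whenever $x \in (k-1, k]$, each summand is at most $\nu((k-1,k])$, so the whole sum is bounded by $\nu((0,d_t]) \leq 1$. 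Hence \eqref{eq:condition_p_toad} holds, Proposition~\ref{prop:transform_TOAD} applies, and the resulting SupFDR control for e-TOAD on $\psi_1(P_1), \psi_2(P_2), \ldots$ transfers to TOAD by the equivalence of rejection sets established above.

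The telescoping sum and the elementary bound $x/k \leq 1$ are routine; the one step deserving care is the construction of $\psi_t$ as a genuine nonincreasing, left-continuous transform whose generalized inverse hits the prescribed values, together with the bookkeeping at the $k=1$ boundary term where the convention $1/0 = \infty$ and $\beta_t(0) = 0$ must mesh. Since this is precisely the construction already carried out for online BR in Proposition~\ref{prop:BR_procedure}, I anticipate no genuine obstacle, and the proposition follows.
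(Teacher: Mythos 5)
Your proposal is correct and follows essentially the same route as the paper: defining $\psi_t$ via $\psi_t^{-1}(1/(\alpha\gamma_t k)) = \alpha\beta_t(k)\gamma_t$, observing that e-TOAD on $\psi_t(P_t)$ and TOAD on $P_t$ produce identical rejection sets, and verifying \eqref{eq:condition_p_toad} by the telescoping argument from Proposition~\ref{prop:BR_procedure} before invoking Proposition~\ref{prop:transform_TOAD}. Your explicit handling of the $k=1$ boundary via $\beta_t(0)=0$ and the construction of $\psi_t$ is slightly more detailed than the paper's proof, which simply defers to the BR argument, but the substance is the same.
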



\section{Online BH versus LORD\label{sec:LORD}}

In this section, we compare the LORD procedure \citep{javanmard2018online, ramdas2017online} with the online BH procedure. 
\citet{javanmard2018online} introduced the LORD algorithm with OnlineFDR control which was later generalized and improved to the LORD++ algorithm by \citet{ramdas2017online}. Here we follow the latter presentation, but call it LORD instead of LORD++ for brevity.

Let $\alpha_1^{\LORD}, \alpha_2^{\LORD}, \ldots$ be individual levels such that 
\ifarxiv
\begin{align}
    \frac{\sum_{i=1}^t \alpha_i^{\LORD} }{|R_t^{\LORD}|\lor 1}\leq \alpha \ \text{for all } t\in \mathbb{N}, \text{ where } R_t^{\LORD}\coloneqq\{i\leq t: P_i\leq \alpha_i^{\LORD}\}.\label{eq:cond_lord}
\end{align}
\else
\begin{align}
    &\frac{\sum_{i=1}^t \alpha_i^{\LORD} }{|R_t^{\LORD}|\lor 1}\leq \alpha \ \text{for all } t\in \mathbb{N}, \label{eq:cond_lord} \\ \text{where } & R_t^{\LORD}\coloneqq\{i\leq t: P_i\leq \alpha_i^{\LORD}\}.
\end{align}
\fi

\citet{ramdas2017online} showed that any online procedure defined by the rejection sets $R_t^{\LORD}$, $t\in \mathbb{N}$, controls the OnlineFDR if the null p-values are independent from each other and the non-nulls; and the thresholds $\alpha_i^{\text{LORD}}$, $i\in \mathbb{N}$, are nonrandom and nondecreasing functions of the past rejection indicators $\mathbbm{1}\{P_1\leq \alpha_1^{\LORD}\}, \ldots, \mathbbm{1}\{P_{i-1}\leq \alpha_{i-1}^{\LORD}\}$.

Let $\alpha_{i,t}^{\oBH}:=k_t^*\alpha \gamma_i$, $t\in \mathbb{N}$ and $i\leq t$, be the individual significance level for hypothesis $H_i$ at time $t$ obtained by the online BH procedure (Section~\ref{sec:onlineBH}). Since $k_t^*=|R_t^{\oBH}|$, it is easy to see that the individual levels $\alpha_{i,t}^{\oBH}$, $i\leq t$, also satisfy the LORD condition \eqref{eq:cond_lord}. However, $\alpha_{i,t}^{\oBH}$ is not a function of only the first $i-1$ rejections, but depends on all rejections up to step $t$. Therefore, online BH is not a special instance of the LORD algorithm, but online BH and LORD are both allowed to spend the same amount of significance level up to each time $t$. For this reason, online BH and LORD should have similar power, and the difference between these procedures lies in a different weighting of the hypotheses (with online BH all individual levels gain from rejections, while with LORD only future levels gain from rejections).

This is verified via simulations in Figure~\ref{fig:sim_LORD}, where we compare online BH with a special instance of the LORD algorithm implemented in the \texttt{R} package \texttt{onlineFDR} \citep{robertson2019onlinefdr}. The simulation setup is described in Section~\ref{sec:sim_setup}. Recall that $(\gamma_t)_{t\in \mathbb{N}}$ is a sequence that sums to one and which defines the weighting of the different hypotheses. In Figure~\ref{fig:sim_LORD}, we set $\gamma_t=q^{t-1}(1-q)$ with $q=0.99$ in the left plot and with $q=0.999$ in the right plot for both procedures. That means, in the right plot the $\gamma$ sequence is decreasing slower. It is easy to see that LORD performs better than online BH in the left plot and worse in the right plot. In addition, online BH with $q=0.999$ performs nearly the exact same as LORD with $q=0.99$. This supports our theoretical reasoning that the difference between these two procedures lies in a different weighting of the hypotheses, but both procedures spend the same amount of significance level. 

We think one advantage of online BH compared to LORD is that its weighting is easier to interpret. If $\gamma_i=2\gamma_j$ for $i\neq j$, then $H_i$ will be tested at twice the level compared to $H_j$ with online BH, regardless of the number of rejections. In contrast, the weighting of LORD strongly depends on the number of rejections in a complex nonlinear manner. Another important advantage of online BH compared to LORD is that online BH controls the OnlineFDR under PRDS, while LORD is only known to control it under independence. In addition, we proved that online BH even controls the SupFDR at slightly inflated levels under PRDN and WNDN. Such results are not known for the LORD procedure. However, of course LORD is an online procedure in the strict sense, which can be desirable in certain applications, while online BH is an online ARC procedure. We conclude that there is much to be gained by relaxing the online requirement to an online ARC requirement: we get a procedure with seemingly comparable power, but with a much stronger suite of type-I error control guarantees. 

\ifarxiv
\begin{figure*}[h!]
\centering
\includegraphics[width=14cm]{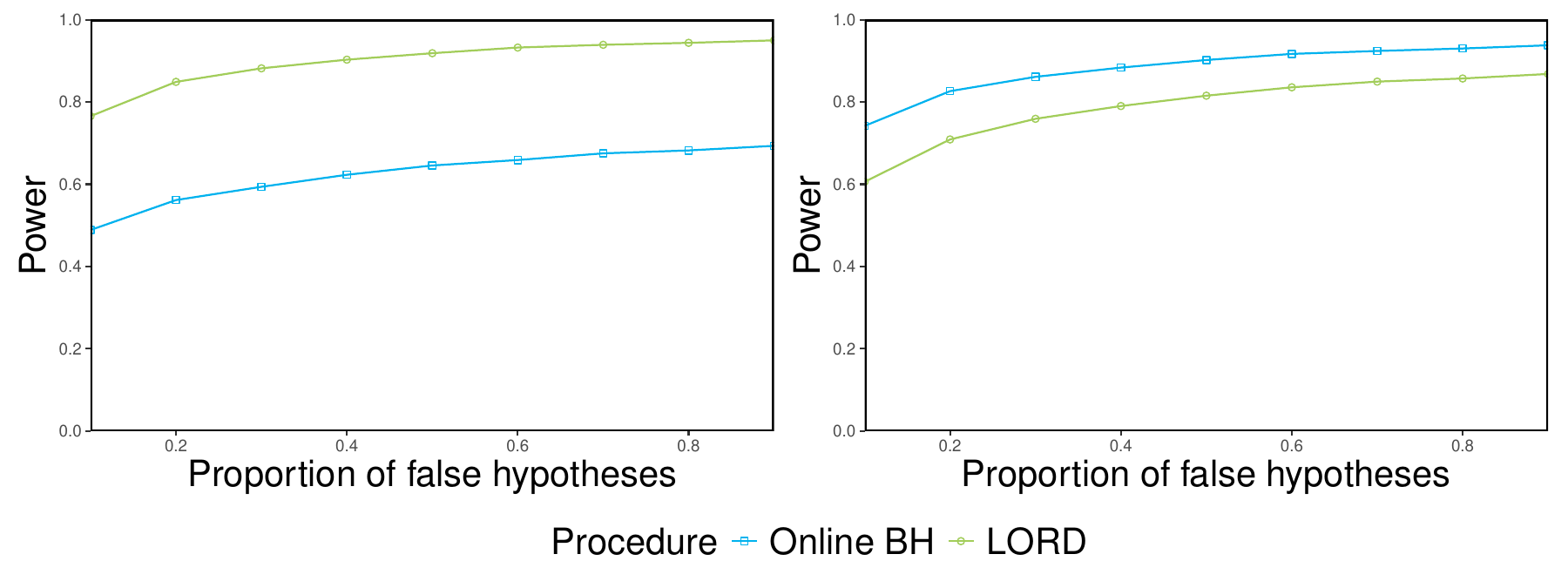}
\caption{Power comparison of online BH and LORD for different proportions of false hypotheses. In the left plot the sequence $(\gamma_t)_{t\in \mathbb{N}}$ decreases fast ($q=0.99$) and in the left plot it decreases slow ($q=0.999$). The simulation setup is described in Section \ref{sec:LORD}. \label{fig:sim_LORD} }\end{figure*}
\else 
\begin{figure*}[h!]
\centering
\includegraphics[width=14cm]{Plot_LORD.pdf}
\caption{Power comparison of online BH and LORD for different proportions of false hypotheses. In the left plot the sequence $(\gamma_t)_{t\in \mathbb{N}}$ decreases fast ($q=0.99$) and in the left plot it decreases slow ($q=0.999$). The simulation setup is described in Section \ref{sec:LORD}. \label{fig:sim_LORD} }\end{figure*}
\fi

\section{The online Storey-BH procedure\label{appn:SBH}}

Similarly as for e-BH and BH, one can define an online version of the Storey-BH procedure \citep{storey2002direct, storey2004strong}, which we call just SBH in the following. In this section, we introduce the online SBH procedure, prove its OnlineFDR guarantee and compare it with the existing SAFFRON \citep{ramdas2018saffron} procedure.

Recall that $(\gamma_t)_{t\in \mathbb{N}}$ is a sequence that sums to one. In addition, let $\lambda\in [\alpha, 1)$ be a user-defined constant and \begin{align}\hat{\pi}_0^t=\frac{\max_{i\in \mathbb{N}} \gamma_i +\sum_{i=1}^{\infty} \gamma_i \mathbbm{1}\{P_i>\lambda \lor i>t\}}{(1-\lambda)}. \label{eq:estimate_prop}\end{align} Then the online SBH method is defined by the rejection sets
\ifarxiv
 \begin{align}
 R_t^{\oSBH}&\coloneqq \left\{i\leq t: P_i\leq {\min(k_t^* \alpha \gamma_i/\hat{\pi}_0^t, \lambda)}\right\}, t\in \mathbb{N},
 \\
 \text{where  }
k_t^* &=\max\left\{k\in \{1,\ldots,t\}: \sum_{j=1}^t \mathbbm{1}\{P_j\leq \min(k \alpha  \gamma_j/\hat{\pi}_0^t, \lambda) \} \geq k\right\},
\end{align}
\else 
\begin{align}
 &R_t^{\oSBH}\coloneqq \left\{i\leq t: P_i\leq {\min(k_t^* \alpha \gamma_i/\hat{\pi}_0^t, \lambda)}\right\}, \text{ where}
 \\
&k_t^* =\max\left\{k\in \{1,\ldots,t\}: \sum_{j=1}^t \mathbbm{1}\{P_j\leq \min(k \alpha  \gamma_j/\hat{\pi}_0^t, \lambda) \} \geq k\right\},
\end{align}
\fi
with the convention $\max(\emptyset)=0$. Note that for $\gamma_1=\ldots=\gamma_K=1/K$, we have $\hat{\pi}_0^K=\frac{1+\sum_{i=1}^K \mathbbm{1}\{P_i>\lambda\}}{(1-\lambda)K}$. In this case the online SBH procedure recovers the SBH procedure \citep{storey2002direct, storey2004strong} and therefore is a true generalization of it. In addition, since $\hat{\pi}_0^t$ is nonincreasing in $t$, the online SBH procedure is indeed an online ARC procedure.

 \begin{proposition}
     If the null p-values are independent from each other and the non-nulls, the online SBH procedure ensures that $\textnormal{OnlineFDR} \leq \alpha$.
 \end{proposition}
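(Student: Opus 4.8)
The plan is to reduce the infinite-horizon statement to a family of offline claims, one per time $t$, since $\textnormal{OnlineFDR}=\sup_{t}\FDR_t$ and the bound I will obtain is uniform in $t$. Fix $t\in\NN$. For $i>t$ the indicator $\ind{P_i>\lambda\lor i>t}$ is identically $1$, so $\hat{\pi}_0^t$ depends only on $P_1,\dots,P_t$ and the deterministic tail mass $\sum_{j>t}\gamma_j$; in particular $R_t^{\oSBH}$ is exactly a weighted Storey--BH procedure applied to $P_1,\dots,P_t$ with null-proportion estimate $\hat{\pi}_0^t$, and $k_t^*=|R_t^{\oSBH}|$. Writing $S_i\coloneqq\min(k_t^*\alpha\gamma_i/\hat{\pi}_0^t,\lambda)$ for the effective rejection threshold of $H_i$, it suffices to show
$$\FDR_t=\sum_{i\in I_0,\,i\le t}\EE\!\left[\frac{\ind{P_i\le S_i}}{k_t^*\lor 1}\right]\le\alpha.$$

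Next, I would leave one null out and freeze the estimator on the rejection event. Define $\hat{\pi}_0^{t,-i}$ to be $\hat{\pi}_0^t$ with the summand $\gamma_i\ind{P_i>\lambda}$ deleted; it is measurable with respect to $\mathcal F_{-i}\coloneqq\sigma(P_j:j\ne i)$. Whenever $H_i$ is rejected we have $P_i\le\lambda$, hence $\ind{P_i>\lambda}=0$ and $\hat{\pi}_0^t=\hat{\pi}_0^{t,-i}$. More generally, on the whole region $\{P_i\le\lambda\}$ the estimate equals the constant $a_i\coloneqq\hat{\pi}_0^{t,-i}$, so there the online SBH run coincides with the ordinary weighted BH procedure that uses the frozen weights $\gamma_j/a_i$ and the cap $\lambda$. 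Since rejection of $H_i$ forces $P_i\le\lambda$, the quantity $\ind{P_i\le S_i}/k_t^*$ agrees with its value under this frozen procedure, for which $k_t^*$ is nonincreasing in $P_i$ and the usual self-consistency identity ties the event $\{i\text{ rejected},\,k_t^*=r\}$ to $\{P_i\le\min(r\alpha\gamma_i/a_i,\lambda)\}$ together with an $\mathcal F_{-i}$-measurable event. Using independence and super-uniformity of the null $P_i$, the standard Benjamini--Hochberg leave-one-out computation then yields the per-null bound
$$\EE\!\left[\frac{\ind{P_i\le S_i}}{k_t^*\lor 1}\,\middle|\,\mathcal F_{-i}\right]\le\frac{\alpha\gamma_i}{\hat{\pi}_0^{t,-i}}.$$

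Summing over the nulls and taking expectations gives $\FDR_t\le\alpha\sum_{i\in I_0,\,i\le t}\EE[\gamma_i/\hat{\pi}_0^{t,-i}]$, so everything comes down to proving
$$\sum_{i\in I_0,\,i\le t}\EE\!\left[\frac{\gamma_i}{\hat{\pi}_0^{t,-i}}\right]\le 1.$$
Dropping the nonnegative non-null and tail contributions from each denominator, this reduces to a weighted Bernoulli-type inequality for the independent indicators $B_j=\ind{P_j>\lambda}$, $j\in I_0\cap[t]$, which satisfy $\EE[B_j]=\pr(P_j>\lambda)\ge1-\lambda$. Here the additive constant $\max_j\gamma_j$ in the numerator of $\hat{\pi}_0^t$ plays exactly the role of the classical ``$+1$'' in the Storey estimator: in the unweighted case $\gamma_j\equiv1/K$ one has the clean identity $\sum_i 1/(1+V_{-i})\le (n_0+1)/(V+1)$ with $V=\sum_{j\in I_0}B_j$, and the elementary bound $\EE[(n_0+1)/(1+V)]\le(1-\lambda)^{-1}$ closes the argument; the general-weight statement is the weighted analog of this.

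I expect this last step, the weighted overestimation bound $\sum_{i\in I_0,\,i\le t}\EE[\gamma_i/\hat{\pi}_0^{t,-i}]\le1$, to be the main obstacle, since (unlike the almost-sure bounds that suffice under positive dependence) it genuinely uses independence of the nulls and a nontrivial combinatorial/probabilistic estimate; the decoupling in the second step is also delicate because the threshold $S_i$ couples $P_i$ through both $k_t^*$ and $\hat{\pi}_0^t$, and it is precisely the freezing of $\hat{\pi}_0^t$ on $\{P_i\le\lambda\}$ that disentangles them. Once the fixed-$t$ estimate $\FDR_t\le\alpha$ is established uniformly in $t$, taking the supremum over $t$ gives $\textnormal{OnlineFDR}\le\alpha$.
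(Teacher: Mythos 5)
Your proposal follows the same route as the paper's own proof: fix $t$, pass to the leave-one-out estimator $\hat{\pi}_0^{t,-i}$ (using that a rejection of $H_i$ forces $P_i\le\lambda$, so that $\hat{\pi}_0^t=\hat{\pi}_0^{t,-i}$ on the relevant event), drop the cap $\lambda$, condition on the other p-values, and run the standard independence leave-one-out argument to get $\EE\left[\ind{P_i\le S_i}/(k_t^*\lor 1)\,\middle|\,\mathcal{F}_{-i}\right]\le\alpha\gamma_i/\hat{\pi}_0^{t,-i}$. This part is correct and is exactly what the paper does by invoking Lemma 1(a) of \citet{ramdas2019unified}, including the key observation that freezing the estimator on $\{P_i\le\lambda\}$ makes $k_t^*$ a nonincreasing function of $P_i$.

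The gap is the final step, which you flagged but did not close: you prove the overestimation bound $\sum_{i\in I_0,\,i\le t}\EE[\gamma_i/\hat{\pi}_0^{t,-i}]\le 1$ only for $\gamma_j\equiv 1/K$ and assert that the general case is "the weighted analog." It is not a routine extension. The telescoping you rely on, $\sum_{i\in I_0}1/(1+V_{-i})=(n_0+1)/(1+V)$, is an identity only for equal weights; with general weights one retains (via $\gamma_i\le\gamma^*\coloneqq\max_j\gamma_j$) only the inequality $\sum_i\gamma_i/(\gamma^*+V_{-i})\le\left(\gamma^*+\sum_i\gamma_i\right)/(\gamma^*+V)$, after which one still needs the weighted inverse-binomial bound
\begin{align}
\EE\left[\frac{1}{\gamma^*+\sum_{j\in I_0\cap[t]}\gamma_j\ind{P_j>\lambda}}\right]\le\frac{1}{(1-\lambda)\left(\gamma^*+\sum_{j\in I_0\cap[t]}\gamma_j\right)},
\end{align}
and this is genuinely nontrivial: the elementary binomial computation does not carry over, and a naive induction over the nulls fails because the bound does not compose term by term. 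This missing inequality is precisely Lemma 3 of \citet{ramdas2019unified}, which the paper cites to obtain $\EE[1/\hat{\pi}_0^{t,-i}]\le\left(\max_{j}\gamma_j+\sum_{j\le t,\,j\in I_0,\,j\neq i}\gamma_j\right)^{-1}$, whence $\gamma_i\le\max_j\gamma_j$ gives $\FDR_t\le\alpha$. So your skeleton coincides with the paper's, and the statement you left open is true and available in the literature, but as written the proof is incomplete at exactly the step that carries the real content of the independence assumption.
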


\begin{proof}
Let $\hat{\pi}_0^{t, -j}=\frac{\max_{i\in \mathbb{N}} \gamma_i +\sum_{i=1, i\neq j}^{\infty} \gamma_i \mathbbm{1}\{P_i>\lambda \lor i>t\}}{(1-\lambda)}$. For any $t\in \mathbb{N}$, we have 
\begin{align}
    \text{FDR}_t&=\sum_{i\leq t, i\in I_0} \mathbb{E}\left[ \frac{\mathbbm{1}\{P_i\leq {\min(k_t^* \alpha \gamma_i/\hat{\pi}_0^t, \lambda)}\}}{k_t^*} \right]\\ &
    {=}\sum_{i\leq t, i\in I_0} \mathbb{E}\left[ \frac{\mathbbm{1}\{P_i\leq {\min(k_t^* \alpha \gamma_i/\hat{\pi}_0^{t,-i}, \lambda)}\}}{k_t^*} \right]\\
    &\leq \sum_{i\leq t, i\in I_0} \mathbb{E}\left[ \frac{\mathbbm{1}\{P_i\leq {k_t^* \alpha \gamma_i/\hat{\pi}_0^{t,-i}}\}}{k_t^*} \right] \\
    &=\sum_{i\leq t, i\in I_0} \mathbb{E}\left[\mathbb{E}\left( \frac{\mathbbm{1}\{P_i\leq {k_t^* \alpha \gamma_i/\hat{\pi}_0^{t,-i}}\}}{k_t^*} \Bigg\vert (P_j)_{j\in \{1,\ldots,t\}\setminus i} \right)\right] \\
    &\stackrel{(i)}{\leq}  \sum_{i\leq t, i\in I_0} \alpha \gamma_i\mathbb{E} \left[ \frac{1}{\hat{\pi}_0^{t,-i}}\right] \\
    &\stackrel{(ii)}{\leq} \sum_{i\leq t, i\in I_0} \alpha \gamma_i \frac{1}{\max_{j\leq t} \gamma_j+\sum_{j\leq t, j\in I_0, j\neq i} \gamma_j}  \\
    &= \alpha\sum_{i\leq t, i\in I_0}   \frac{\gamma_i}{\max_{j\leq t} \gamma_j+\sum_{j\leq t, j\in I_0, j\neq i} \gamma_j} \leq \alpha.
\end{align}
    Inequality $(i)$ follows by Lemma 1 (a) and inequality $(ii)$ by Lemma 3 of \citet{ramdas2019unified}.
\end{proof}

An existing online procedure that adapts to the proportion of true hypotheses is SAFFRON \citep{ramdas2018saffron}, which is defined as LORD (see Section~\ref{sec:LORD}) but with the condition 
\ifarxiv
\begin{align}
    \frac{\sum_{i=1}^t \alpha_i^{\SAFFRON} \frac{\mathbbm{1}\{P_i>\lambda\}}{1-\lambda} }{|R_t^{\SAFFRON}|\lor 1}\leq \alpha \ \text{for all } t\in \mathbb{N}, \text{ where } R_t^{\SAFFRON}\coloneqq\{i\leq t: P_i\leq \alpha_i^{\SAFFRON}\} \label{eq:cond_saffron}
\end{align}
\else 
\begin{align}
    &\frac{\sum_{i=1}^t \alpha_i^{\SAFFRON} \frac{\mathbbm{1}\{P_i>\lambda\}}{1-\lambda} }{|R_t^{\SAFFRON}|\lor 1}\leq \alpha \ \text{for all } t\in \mathbb{N}, \label{eq:cond_saffron} \\ \text{where } &R_t^{\SAFFRON}\coloneqq\{i\leq t: P_i\leq \alpha_i^{\SAFFRON}\} 
\end{align}
\fi
instead of \eqref{eq:cond_lord}. The relation between SAFFRON and online SBH is similar to the relation between LORD and online BH (see Section~\ref{sec:LORD}). That means, the levels defined by online SBH satisfy \eqref{eq:cond_saffron}. However, in contrast to the LORD vs. online BH case, online SBH satisfies the SAFFRON condition \eqref{eq:cond_saffron} only in a conservative manner due to the additional summand $\max_{i\in \mathbb{N}} \gamma_i$ in \eqref{eq:estimate_prop}. We think this is required, since SAFFRON only allows to use the information $P_i>\lambda_i$ for all levels $\alpha_j^{\SAFFRON}$, $j>i$, while online SBH uses it for all $j\in \mathbb{N}$. Consequently, online SBH can lead to a more balanced weighting.  Overall, we think both procedures lead to a comparable power and the main difference lies in a different weighting of the hypotheses, just as for online BH and LORD. 

This can also be seen in Figure~\ref{fig:sim_SASFFRON}, where we compare online SBH with SAFFRON as implemented in the \texttt{R} package \texttt{onlineFDR} \citep{robertson2019onlinefdr}, which should be compared to Figure~\ref{fig:sim_LORD}. Both procedures were applied with $\lambda=0.5$, the remaining simulation setup is the same as in Section~\ref{sec:LORD}. 

\ifarxiv
\begin{figure*}[h!]
\centering
\includegraphics[width=0.8\textwidth]{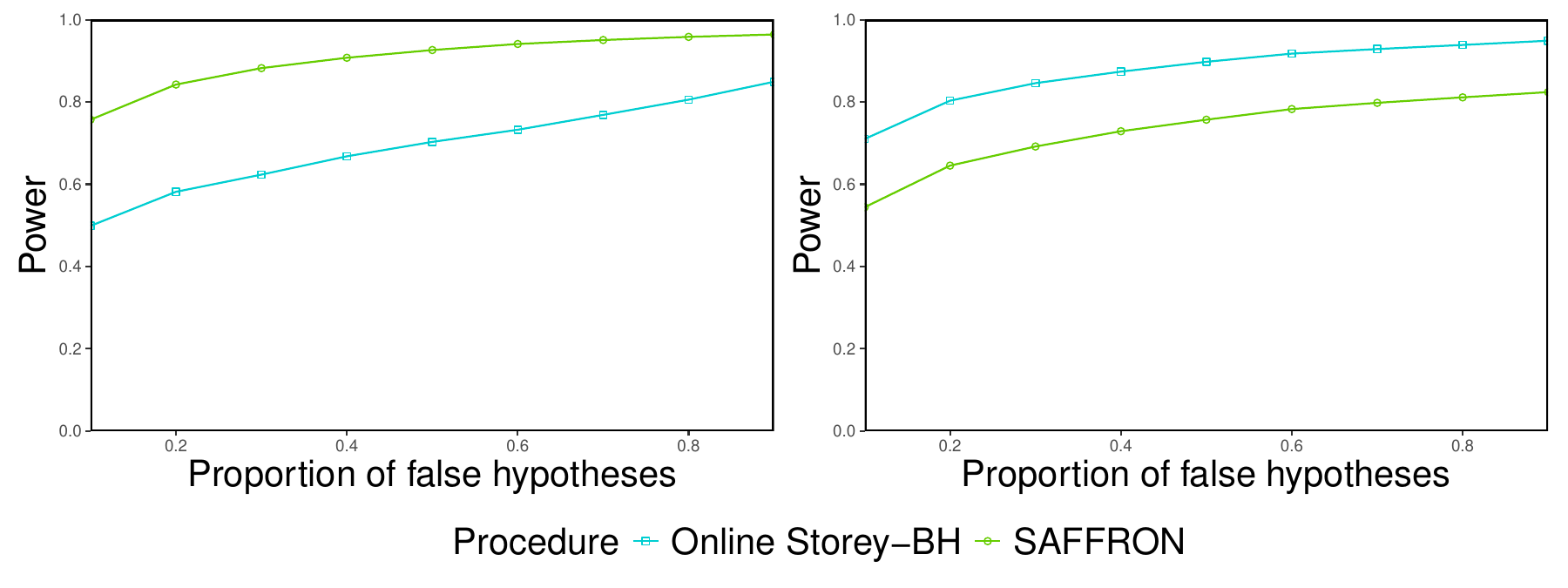}
\caption{Power comparison of online SBH and SAFFRON for different proportions of false hypotheses. In the left plot the sequence $(\gamma_t)_{t\in \mathbb{N}}$ decreases fast ($q=0.99$) and in the left plot it decreases slow ($q=0.999$). The simulation setup is described in Section \ref{appn:SBH}. \label{fig:sim_SASFFRON} }\end{figure*}
\else 
\begin{figure*}[h!]
\centering
\includegraphics[width=0.7\textwidth]{Plot_SAFFRON.pdf}
\caption{Power comparison of online SBH and SAFFRON for different proportions of false hypotheses. In the left plot the sequence $(\gamma_t)_{t\in \mathbb{N}}$ decreases fast ($q=0.99$) and in the left plot it decreases slow ($q=0.999$). The simulation setup is described in Section \ref{appn:SBH}. \label{fig:sim_SASFFRON} }\end{figure*}
\fi

\section{Positive and negative dependence\label{sec:PRDS}}

Here we give the formal definitions of the used positive and negative dependence assumptions made in the paper, which are online extensions of the classical definitions \citep{benjamini2001control, su_fdr-linking_theorem_2018, chi_multiple_testing_2024, wang2022false}.

Let us first define two different positive dependence assumptions that one can make on p-values.
The first is the PRDS condition for FDR control of the BH procedure that was introduced by \citet{benjamini2001control}.
%
%
We generalize this typical PRDS definition to the online setting (i.e., applicable to countably infinite hypotheses) in the following way.
\begin{definition}[PRDS]\label{def:PRDS}
    A set $D \subseteq \reals^n$ is called \emph{increasing} if $\mathbf{x} \in D$ implies that $\mathbf{y} \in D$ for all $\mathbf{y}\geq \mathbf{x}$. For a sequence of p-values $(P_t)_{t \in I}$ with a countable index set $I$, we say that they have \emph{positive regression dependence on a subset (PRDS)} if, for any finite $I' \subseteq I$, increasing set $D \subseteq \reals^{I'}$, and null index $i \in I_0$, we have that $\prob{(P_t)_{t \in I'} \in D \mid P_i \leq s}$ is increasing over $s \in (0, 1]$.
\end{definition}
Note that this definition is equivalent to the standard PRDS definition when $I$ is finite. Further, this is the PRDS definition implicitly used by \citet{zrnic2021asynchronous}. For example, PRDS is satisfied if the p-values corresponding to true hypotheses are independent from all the other p-values. However, it also holds under some kind of positive dependence \citep{benjamini2001control}.

A slightly weaker notion of positive dependence is PRDN \citep{su_fdr-linking_theorem_2018}, which is restricted to the dependency between p-values that correspond to true hypotheses.

\begin{definition}[PRDN]
    For a sequence of p-values $(P_t)_{t \in I}$ with a countable index set $I$, we say that they have \emph{positive regression dependence within nulls (PRDN)} if, for any finite subset of the null hypotheses $I' \subseteq I_0$, increasing set $D \subseteq \reals^{I'}$, and null index $i \in I_0$, we have that $\prob{(P_t)_{t \in I'} \in D \mid P_i \leq s}$ is a increasing over $s \in (0, 1]$.
\end{definition}
Likewise, this is the generalization of PRDN from \citet{su_fdr-linking_theorem_2018} to the online setting.

The only type of negative dependence we consider in this paper is WNDN.

\begin{definition}[WNDN]
    We say that a sequence of p-values $(P_t)_{t \in I}$ are \emph{weakly negatively dependent within nulls (WNDN)} if $\prob{\bigcap_{i \in A} P_i \leq s} \leq \prod_{i \in A} \prob{P_i \leq s}$ for any $A \subseteq I_0$ and $s \in [0, 1]$.
\end{definition}

Similarly as for positive dependence, there are many types of negative dependence. A comparison of the different types is given by \citet{chi_multiple_testing_2024}.

Since e-values and p-values are working on inverse scales, the PRDS condition is also flipped around when considering e-values \citep{wang2022false}.
\begin{definition}[PRDS for e-values]\label{def:PRDS_e-values}
    A set $D \subseteq \reals^n$ is called \emph{decreasing} if $\mathbf{x} \in D$ implies that $\mathbf{y} \in D$ for all $\mathbf{y}\leq \mathbf{x}$. For a sequence of e-values $(E_t)_{t \in I}$ with a countable index set $I$, we say that they have \emph{positive regression dependence on a subset (PRDS)} if, for any finite $I' \subseteq I$, decreasing set $D \subseteq \reals^{I'}$, and null index $i \in I_0$, we have that $\prob{(E_t)_{t \in I'} \in D \mid E_i \geq s}$ is decreasing over $s \in (0, \infty)$.
\end{definition}
This definition of PRDS for e-values ensures that if the p-values $(P_t)_{t\in \mathbb{N}}$ are PRDS and $(\psi_t)_{t\in \mathbb{N}}$ are nonincreasing transforms, then the random variables $(\psi_t(P_t))_{t\in \mathbb{N}}$ are PRDS in the e-value sense.

\section{Simulation setup\label{sec:sim_setup}}

We simulated $m=100$ independent trials, each consisting of $n=1000$ hypothesis pairs of the form $H_t:X_t\sim N(0,1)$ vs. $H_t^A:X_t\sim N(\mu_A,1)$, $t\in \mathbb{N}$, where $\mu_A>0$ is the signal strength. In case of $\mu_A=3.5$ we speak of weak signal and in case of $\mu_A=4.5$ of strong signal.
To obtain locally dependent e-values, we first generated normally distributed random variables $(Z_1,\ldots,Z_n)$ with mean $0$, variance $1$ and a batch dependence structure. That means we have batches $B_i=\{Z_{(i-1)b+1},\ldots, Z_{ib}\}$, $i\in \{1,\ldots, n/b\}$, such that random variables contained in the same batch have correlation $0.5$ while random variables from different batches are independent. The final test statistics were obtained by $X_t=Z_t+\mu_A \Pi_A$, where $\Pi_A\sim \text{Ber}(\pi_A)$ and $\pi_A\in \{0.1, 0.2,\ldots, 0.9\}$ gives the probability that $H_t^A$ is true. 

The e-values were calculated as likelihood ratio $\phi_{\mu_A}(X_t)/\phi_{0}(X_t)$, where $\phi_{\mu}$ denotes the density of a normal distribution with variance $1$ and mean $\mu$. The boosted e-values were obtained as described in Examples \ref{example:boosting} and \ref{example:boosting_local} using the truncation functions $\leftindex^-{T}_t^{s}$ and $\leftindex^-{T}_t^{L_t, s}$ with $s=n$, respectively. In all simulations concerning e-values we set the batch-size to $b=20$ and applied all procedures with $\gamma_t=q^{t-1}(1-q)$, $t\in \{1,\ldots,n\}$, where $q=0.99$. 

 The p-values were calculated by $P_t=\Phi(-Z_t)$. When considering p-values, we generated independent test statistics ($b=1$) and only considered weak signals ($\mu_A=3.5$).

The code for the simulations is available at \url{https://github.com/fischer23/online_e-BH}.

\section{Omitted proofs and derivations\label{sec:omitted_proofs}}

In this section, we provide the omitted proofs or derivations for results that were mentioned in the main paper.

\subsection{Online e-BH with p-values\label{appn:proof_p-vals}}
In this subsection, we provide the formal proof of Proposition~\ref{prop:p-values}.
\begin{proof}
    If we prove that $\mathbb{E}[T_t(\psi_t(U))]\leq 1$, where $T_t$ is given by \eqref{eq:truncation_online} and $U\sim U[0,1]$, then the first claim follows by Proposition \ref{prop:boosted}. For this, note that
    \ifarxiv
    \begin{align}
        \mathbb{E}[T_t(\psi_t(U))]&=\sum_{k=1}^\infty \mathbb{P}\left( \frac{1}{k\alpha\gamma_t} \leq \psi_t(U) < \frac{1}{(k-1)\alpha\gamma_t}\right) \frac{1}{k\alpha \gamma_t}\\
        &= \sum_{k=1}^\infty \mathbb{P}\left( \psi^{-1}\left(\frac{1}{k\alpha\gamma_t}\right) \geq U > \psi^{-1}\left(\frac{1}{(k-1)\alpha\gamma_t}\right) \right) \frac{1}{k\alpha \gamma_t} \\
        &= \sum_{k=1}^\infty \frac{1}{k\alpha \gamma_t}\left(\psi_t^{-1}\left(\frac{1}{k \alpha \gamma_t}\right)- \psi_t^{-1}\left(\frac{1}{(k-1) \alpha \gamma_t}\right)\right) \\
        &\leq 1.
    \end{align}
    \else 
\begin{align}
        &\mathbb{E}[T_t(\psi_t(U))] \\ &=\sum_{k=1}^\infty \mathbb{P}\left( \frac{1}{k\alpha\gamma_t} \leq \psi_t(U) < \frac{1}{(k-1)\alpha\gamma_t}\right) \frac{1}{k\alpha \gamma_t}\\
        &= \sum_{k=1}^\infty \mathbb{P}\left( \psi^{-1}\left(\frac{1}{k\alpha\gamma_t}\right) \geq U > \psi^{-1}\left(\frac{1}{(k-1)\alpha\gamma_t}\right) \right) \frac{1}{k\alpha \gamma_t} \\
        &= \sum_{k=1}^\infty \frac{1}{k\alpha \gamma_t}\left(\psi_t^{-1}\left(\frac{1}{k \alpha \gamma_t}\right)- \psi_t^{-1}\left(\frac{1}{(k-1) \alpha \gamma_t}\right)\right) \\
        &\leq 1.
    \end{align}
    \fi
    The second claim follows immediately by Proposition~\ref{prop:boost_PRDS}, since
    \ifarxiv
    \begin{align}
    \sup_{k\in \mathbb{N}} \frac{1}{k \alpha \gamma_t} \mathbb{P}(\psi_t(U)\geq 1/(k\alpha \gamma_t))&=\sup_{k\in \mathbb{N}} \frac{1}{k \alpha \gamma_t} \mathbb{P}(U\leq \psi_t^{-1}(1/(k\alpha \gamma_t)))\\ &= \sup_{k\in \mathbb{N}} \frac{1}{k \alpha \gamma_t} \psi_t^{-1}(1/(k\alpha \gamma_t)).
    \end{align}
    \else 
\begin{align}
    &\sup_{k\in \mathbb{N}} \frac{1}{k \alpha \gamma_t} \mathbb{P}(\psi_t(U)\geq 1/(k\alpha \gamma_t)) \\ &=\sup_{k\in \mathbb{N}} \frac{1}{k \alpha \gamma_t} \mathbb{P}(U\leq \psi_t^{-1}(1/(k\alpha \gamma_t)))\\ &= \sup_{k\in \mathbb{N}} \frac{1}{k \alpha \gamma_t} \psi_t^{-1}(1/(k\alpha \gamma_t)).
    \end{align}
    \fi
\end{proof}

\subsection{SupFDR control of\ online BR and r-LOND\label{sec:appn_supfdr}}
%
%
%

In the following proposition, we show how the online BR procedure can be written as a special instance of the online e-BH method and therefore provides SupFDR control under arbitrary dependence between the e-values.

\begin{proposition}\label{prop:BR_procedure}
    The online BR procedure provides SupFDR control at level $\alpha$ under arbitrary dependence between the p-values.
\end{proposition}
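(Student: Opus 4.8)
The goal is to show that the online BR procedure, defined via the shape function $\beta_t$ and the recursion in \eqref{eq:toad} with $d_t=\infty$, is an instance of the online e-BH framework applied to suitably transformed p-values, so that its SupFDR control follows from Proposition~\ref{prop:p-values}(i). The plan is to construct, for each $t$, a nonincreasing transform $\psi_t$ that converts the p-value $P_t$ into a random variable on which online e-BH reproduces exactly the online BR rejection rule, and then to verify that $\psi_t$ satisfies the boosting condition \eqref{eq:condition_p}.

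\textbf{Step 1: Define the transform.} Following the same device used in the e-TOAD analysis, I would define $\psi_t$ through its generalized inverse by requiring $\psi_t^{-1}(1/(k\alpha\gamma_t)) = \alpha\beta_t(k)\gamma_t$ for each $k\in\mathbb{N}$. With this choice, the event $\psi_t(P_t)\geq 1/(k\alpha\gamma_t)$ is equivalent to $P_t\leq \psi_t^{-1}(1/(k\alpha\gamma_t)) = \alpha\beta_t(k)\gamma_t$. Hence the self-consistency threshold used by online e-BH on $\psi_t(P_t)$ coincides term-by-term with the online BR threshold $\alpha\gamma_t\beta_t(k)$, so that online e-BH applied to $\psi_1(P_1),\psi_2(P_2),\ldots$ rejects exactly the hypotheses that online BR rejects. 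This reduces the claim to checking condition \eqref{eq:condition_p} for these $\psi_t$.

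\textbf{Step 2: Verify the boosting condition.} I would substitute $\psi_t^{-1}(1/(k\alpha\gamma_t)) = \alpha\beta_t(k)\gamma_t$ into the left-hand side of \eqref{eq:condition_p}, giving a telescoping-style sum
\begin{align}
\sum_{k=1}^\infty \frac{1}{k\alpha\gamma_t}\bigl(\alpha\beta_t(k)\gamma_t - \alpha\beta_t(k-1)\gamma_t\bigr)
= \sum_{k=1}^\infty \frac{\beta_t(k)-\beta_t(k-1)}{k},
\end{align}
with the convention $\beta_t(0)=0$. Now I would use that $\beta_t(k)=\int_0^k x\,\nd\nu_t(x)$ for a probability measure $\nu_t$ on $(0,\infty)$, so that $\beta_t(k)-\beta_t(k-1)=\int_{k-1}^k x\,\nd\nu_t(x)$. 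Bounding $x\leq k$ on the interval $[k-1,k]$ yields $\frac{1}{k}(\beta_t(k)-\beta_t(k-1)) \leq \int_{k-1}^k \nd\nu_t(x)$, and summing over $k$ gives $\sum_{k=1}^\infty \int_{k-1}^k \nd\nu_t(x) = \nu_t((0,\infty)) = 1$. This establishes \eqref{eq:condition_p}, and the result then follows directly from Proposition~\ref{prop:p-values}(i).

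\textbf{Main obstacle.} The conceptual content is entirely in Step 1, namely arguing that the piecewise-constant transform $\psi_t$ obtained by pinning down its generalized inverse at the countably many points $1/(k\alpha\gamma_t)$ is genuinely a well-defined nonincreasing, left-continuous transform with $\psi_t(0)=\infty$, and that the equivalence of rejection rules holds at \emph{every} step $t$ of the recursion (not just at a fixed comparison point), so that the two nested rejection sequences coincide pathwise. One must check that $\beta_t$ being nondecreasing makes the prescribed values $\alpha\beta_t(k)\gamma_t$ nondecreasing in $k$, which is exactly what is needed for a valid generalized inverse of a nonincreasing $\psi_t$; this is where the shape-function structure is used. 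The arithmetic in Step 2 is routine once the telescoping is set up, so I expect essentially no difficulty there beyond handling the $\beta_t(0)=0$ boundary term correctly.
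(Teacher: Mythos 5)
Your proposal is correct and follows essentially the same route as the paper's own proof: defining $\psi_t$ through its generalized inverse via $\psi_t^{-1}(1/(k\alpha\gamma_t))=\alpha\beta_t(k)\gamma_t$, observing the term-by-term equivalence of the rejection rules, and verifying condition \eqref{eq:condition_p} by writing $\beta_t(k)-\beta_t(k-1)=\int_{k-1}^k x\,\nd\nu_t(x)$ and bounding $x/k\leq 1$ on each interval, exactly as in Proposition~\ref{prop:BR_procedure}'s proof. Your additional care about well-definedness of the piecewise-constant $\psi_t$ (using monotonicity of $\beta_t$) is a reasonable extra check but does not change the argument.
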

\begin{proof}
    Let $\beta_t$ be a shape function and $\psi_t$ be chosen such that $\psi_t^{-1}(1/(\alpha \gamma_t k))=\alpha \beta_t(k) \gamma_t$. Note that $\psi_t(P_t)\geq 1/(\alpha \gamma_t k)$ iff $P_t\leq \psi_t^{-1}(1/(\alpha \gamma_t k))=\alpha \beta_t(k) \gamma_t$. Hence, online e-BH applied to $\psi_1(P_1),\psi_2(P_2),\ldots $ rejects the same hypotheses as online BR applied to $P_1, P_2,\ldots$ . Furthermore, it holds that
    \ifarxiv
\begin{align}
    \sum_{k=1}^\infty \frac{1}{k\alpha \gamma_t}\left(\psi_t^{-1}\left(\frac{1}{k \alpha \gamma_t}\right)- \psi_t^{-1}\left(\frac{1}{(k-1) \alpha \gamma_t}\right)\right)&=\sum_{k=1}^\infty \frac{1}{k}\left(\beta_t(k)- \beta_t(k-1)\right)\\
    &=\sum_{k=1}^\infty \int_{k-1}^k x/k \ d\nu(x) \\
    &\leq \int_0^\infty 1 \ d\nu(x) \\
    &=1.
\end{align}
\else 
\begin{align}
    &\sum_{k=1}^\infty \frac{1}{k\alpha \gamma_t}\left(\psi_t^{-1}\left(\frac{1}{k \alpha \gamma_t}\right)- \psi_t^{-1}\left(\frac{1}{(k-1) \alpha \gamma_t}\right)\right) \\ &=\sum_{k=1}^\infty \frac{1}{k}\left(\beta_t(k)- \beta_t(k-1)\right)\\
    &=\sum_{k=1}^\infty \int_{k-1}^k x/k \ d\nu(x) \\
    &\leq \int_0^\infty 1 \ d\nu(x) \\
    &=1.
\end{align}
\fi 
Consequently, the SupFDR control follows by Proposition \ref{prop:p-values}.
\end{proof}

In the same manner as we derived the online BR procedure as special case of the boosted online e-BH procedure, one could also derive r-LOND as a special case of a boosted e-LOND procedure. For this, one can just define the truncation function
\ifarxiv
    $$
    T_t^{\text{LOND}}=\sum_{k=1}^t \mathbbm{1}\left\{ \frac{1}{k\alpha\gamma_t} \leq x < \frac{1}{(k-1)\alpha\gamma_t}\right\} \frac{1}{k\alpha \gamma_t} \quad \text{with } T_t(\infty)=\frac{1}{\alpha\gamma_t}
    $$
    \else 
$$
    T_t^{\text{LOND}}=\sum_{k=1}^t \mathbbm{1}\left\{ \frac{1}{k\alpha\gamma_t} \leq x < \frac{1}{(k-1)\alpha\gamma_t}\right\} \frac{1}{k\alpha \gamma_t}
    $$
    with $T_t(\infty)=\frac{1}{\alpha\gamma_t}$
    \fi 
    and then do the same steps as above. Consequently, r-LOND also provides SupFDR control.

\begin{proposition}\label{prop:r-LOND}
    The r-LOND procedure provides SupFDR control at level $\alpha$ under arbitrary dependence between the p-values.
\end{proposition}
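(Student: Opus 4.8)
The plan is to follow the proof of Proposition~\ref{prop:BR_procedure} verbatim in structure, swapping the online e-BH/online BR correspondence for an e-LOND/r-LOND correspondence and replacing the truncation $T_t$ by the finitely supported $T_t^{\text{LOND}}$. First I would fix, for each $t$, the nonincreasing transform $\psi_t$ determined by $\psi_t^{-1}(1/(\alpha\gamma_t k)) = \alpha\beta_t(k)\gamma_t$, exactly as in the BR proof. The immediate consequence $\psi_t(P_t)\geq 1/(\alpha\gamma_t k) \iff P_t\leq \alpha\beta_t(k)\gamma_t$ lets me match thresholds: running e-LOND on $\psi_1(P_1),\psi_2(P_2),\ldots$ produces exactly the rejection sets of r-LOND run on $P_1,P_2,\ldots$. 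Because both procedures are defined recursively through $|R_{t-1}|$, I would confirm this identity by induction on $t$: assuming the two rejection histories agree up to time $t-1$, the threshold index $k=|R_{t-1}|+1$ is common to both, and the threshold-matching equivalence forces the same decision at $H_t$.

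Next I would supply the e-LOND analog of the boosting guarantee used in Proposition~\ref{prop:p-values}. The key point is that e-LOND only ever compares the value for $H_t$ against thresholds $1/(\alpha\gamma_t k)$ with $k=|R_{t-1}^{\text{e-LOND}}|+1\leq t$, so truncating at level $t$ via $T_t^{\text{LOND}}$ never alters any rejection decision; hence e-LOND applied to $X_1,X_2,\ldots$ rejects identically to e-LOND applied to $T_1^{\text{LOND}}(X_1),T_2^{\text{LOND}}(X_2),\ldots$. If $\mathbb{E}_{H_t}[T_t^{\text{LOND}}(X_t)]\leq 1$ for every null $t$, then the latter inputs are genuine e-values, and e-LOND on genuine e-values controls the SupFDR by Proposition~\ref{prop:SupFDR_e-lond}. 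This yields a boosted-e-LOND statement entirely parallel to Proposition~\ref{prop:boosted}.

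It then remains to verify the boosting condition for the specific inputs $X_t=\psi_t(P_t)$. Since $T_t^{\text{LOND}}$ is nondecreasing and $\psi_t$ nonincreasing, the composite $T_t^{\text{LOND}}\circ\psi_t$ is nonincreasing, so under the null $\mathbb{E}_{H_t}[T_t^{\text{LOND}}(\psi_t(P_t))]\leq \mathbb{E}[T_t^{\text{LOND}}(\psi_t(U))]$ with $U\sim U[0,1]$. Evaluating this expectation exactly as in the proof of Proposition~\ref{prop:p-values}, and substituting $\psi_t^{-1}(1/(\alpha\gamma_t k))=\alpha\beta_t(k)\gamma_t$, collapses it to $\sum_{k=1}^t \tfrac{1}{k}\bigl(\beta_t(k)-\beta_t(k-1)\bigr)$. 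Writing $\beta_t(k)=\int_0^k x\,d\nu(x)$ as in the BR computation shows this finite sum is dominated by the corresponding infinite sum, which the BR proof already bounds by $1$; the truncation to $k\leq t$ only drops nonnegative terms. Combining the three steps gives SupFDR control of r-LOND.

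The main obstacle I anticipate is not any single estimate but the bookkeeping in the first two steps: making the inductive equivalence of the r-LOND and e-LOND rejection sets airtight, and cleanly justifying that the finite support of $T_t^{\text{LOND}}$ (as opposed to the infinite $T_t$) is exactly matched to the range $k\leq t$ of threshold indices that e-LOND can ever use. Once that structural alignment is in place, the analytic content is identical to the already-completed BR argument.
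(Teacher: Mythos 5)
Your proof is correct and takes essentially the same approach as the paper: the paper's own (very terse) proof likewise derives r-LOND as a boosted e-LOND procedure, defining the finitely supported truncation $T_t^{\text{LOND}}$ and reusing the transforms $\psi_t$ with $\psi_t^{-1}(1/(\alpha\gamma_t k))=\alpha\beta_t(k)\gamma_t$ from the online BR argument, then invoking the SupFDR guarantee for e-LOND on genuine e-values. Your extra bookkeeping (the induction showing the r-LOND and e-LOND rejection histories coincide, and the check that truncation at indices $k\le t$ never alters e-LOND's decisions) simply fills in the details the paper compresses into ``do the same steps as above.''
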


\subsection{Weighted \texttt{FDR-Linking} theorem\label{sec:proofs}}
Here we provide the proof of Theorem~\ref{theo:FDR-linking}.
\begin{proof}
We follow a similar proof structure as the proof of Theorem 1 in \citet{su_fdr-linking_theorem_2018}. 
By the definition of self-consistency, we note that any $\rejset \in \mathcal{R}(\alpha)$ and $t \in \rejset$, the following inequality holds:
\begin{align}
        |\rejset| \geq \lceil (\alpha\gamma_t)^{-1}  P_t\rceil. \label{eq:rejset-lb}
\end{align}
Thus, we get an upper bound on the supremum over FDP
\ifarxiv
\begin{align}
    \sup_{\rejset \in \Rcal(\alpha)}\ \FDP(R)
    &=\max_{\rejset \in \Rcal(\alpha)} \frac{|I_0 \cap R|}{|R| \vee 1}
    \leq \max_{\rejset \in \Rcal(\alpha)} \frac{|I_0 \cap R|}{\max\{\lceil (\alpha \gamma_t)^{-1}  P_t\rceil : t \in \rejset\} \cup \{1\}}.
\end{align}
\else 
\begin{align}
    \sup_{\rejset \in \Rcal(\alpha)}\ \FDP(R)
    &=\max_{\rejset \in \Rcal(\alpha)} \frac{|I_0 \cap R|}{|R| \vee 1}
    \\ &\leq \max_{\rejset \in \Rcal(\alpha)} \frac{|I_0 \cap R|}{\max\{\lceil (\alpha \gamma_t)^{-1}  P_t\rceil : t \in \rejset\} \cup \{1\}}.
\end{align}
\fi
The inequality is a result of applying the bound in \eqref{eq:rejset-lb} to every $t \in R$ and taking the best possible bound.

Let $K_0 \coloneqq |I_0|$ and $P_1^\tnull, \dots, P_{K_0}^\tnull$ and $\gamma_1^\tnull, \dots, \gamma_{K_0}^\tnull$ denote the corresponding weights. Now, define $P_{(1)}^\tnull, \dots, P_{(K_0)}^\tnull$ and $\gamma_{(1)}^\tnull, \dots, \gamma_{(K_0)}^\tnull$ be the null p-values and weights sorted in ascending order based on $(\gamma_i^\tnull)^{-1} P_i^\tnull$. We continue the derivation of the upper bound as follows.
\ifarxiv
\begin{align}
    \sup_{\rejset \in \Rcal(\alpha)}\ \FDP(R)
    & \leq \max_{\rejset \in \Rcal(\alpha)} \frac{|I_0 \cap R|}{\lceil (\alpha \gamma_{(|I_0 \cap R|)}^\tnull)^{-1} P_{(|I_0 \cap R|)}^\tnull\rceil}
     \leq \max_{j \in [K_0]} \frac{j}{\lceil (\alpha \gamma_{(j)}^\tnull)^{-1} P_{(j)}^\tnull\rceil}\\
    &\leq  \frac{\alpha}{\min_{j \in [K_0]} (j \gamma_{(j)}^\tnull)^{-1} P_{(j)}^\tnull} = \frac{\pi_0 \alpha}{\min_{j \in [K_0]} (j \gamma_{(j)}^\tnull / \pi_0)^{-1} P_{(j)}^\tnull}. \label{eq:supfdp-ub}
\end{align}
\else 
\begin{align}
    \sup_{\rejset \in \Rcal(\alpha)}\ \FDP(R)
    & \leq \max_{\rejset \in \Rcal(\alpha)} \frac{|I_0 \cap R|}{\lceil (\alpha \gamma_{(|I_0 \cap R|)}^\tnull)^{-1} P_{(|I_0 \cap R|)}^\tnull\rceil}
     \\ &\leq \max_{j \in [K_0]} \frac{j}{\lceil (\alpha \gamma_{(j)}^\tnull)^{-1} P_{(j)}^\tnull\rceil}\\
    &\leq  \frac{\alpha}{\min_{j \in [K_0]} (j \gamma_{(j)}^\tnull)^{-1} P_{(j)}^\tnull}\\ &= \frac{\pi_0 \alpha}{\min_{j \in [K_0]} (j \gamma_{(j)}^\tnull / \pi_0)^{-1} P_{(j)}^\tnull}. \label{eq:supfdp-ub}
\end{align}
\fi 
The first inequality is noting that there exists a null p-value and corresponding weight of index $i \in R$ that satisfies $(\gamma_{(i)}^\tnull)^{-1}P_{(i)}^\tnull \leq (\gamma_{(|I_0 \cap \rejset|)}^\tnull)^{-1}P_{(|I_0 \cap \rejset|)}^\tnull$ simply by the cardinality of $I_0 \cap \rejset$. The second inequality is by changing the indexing set of the maximum, and the third inequality is by dropping the ceiling function in the denominator.

Now we note that
\begin{align}
    P^\nullSimes \coloneqq \min_{j \in [K_0]} (j \gamma_{(j)}^\tnull / \pi_0)^{-1} P_{(j)}^\tnull
\end{align} is precisely the weighted Simes p-value applied only to the p-values and weights of the null hypotheses, i.e., $I_0$. One rejects the weighted Simes p-value if and only if the weighted BH procedure makes any discoveries at level $\alpha \in [0, 1]$.

We now observe that trivially, we can augment the bound in \eqref{eq:supfdp-ub} to get that
\begin{align}
     \sup_{\rejset \in \Rcal(\alpha)}\ \FDP(R)  \leq \left(\frac{\pi_0 \alpha}{P^{\nullSimes}}\right) \wedge 1. \label{eq:min-supfdp-bound}
\end{align}

Now, let $F$ denote the c.d.f. of $P^{\nullSimes}$.
Thus, we get the following derivation:
\ifarxiv
\begin{align}
    \expect\left[\sup_{\rejset \in \Rcal(\alpha)}\ \FDP(R) \right]
    &\leq \expect\left[\left(\frac{\pi_0 \alpha}{P^{\nullSimes}}\right) \wedge 1\right]\\
    &=\prob{P^\nullSimes \leq \pi_0\alpha} + \expect\left[\frac{\pi_0\alpha}{P^\nullSimes} \cdot \ind{P^\nullSimes > \pi_0\alpha}\right]\\
    &= F(\pi_0\alpha) + \int\limits_{\pi_0\alpha}^1\frac{\pi_0\alpha}{x}\ \nd F(x).
\end{align}
\else
\begin{align}
    &\expect\left[\sup_{\rejset \in \Rcal(\alpha)}\ \FDP(R) \right]
    \\ &\leq \expect\left[\left(\frac{\pi_0 \alpha}{P^{\nullSimes}}\right) \wedge 1\right]\\
    &=\prob{P^\nullSimes \leq \pi_0\alpha} \\ &+ \expect\left[\frac{\pi_0\alpha}{P^\nullSimes} \cdot \ind{P^\nullSimes > \pi_0\alpha}\right]\\
    &= F(\pi_0\alpha) + \int\limits_{\pi_0\alpha}^1\frac{\pi_0\alpha}{x}\ \nd F(x).
\end{align}
\fi 
The first inequality is by plugging in \eqref{eq:min-supfdp-bound}. The second equality follows from casing on the value of $P^\nullSimes$ into disjoint cases, and the last equality comes from the definition of $F$.


Using integration by parts, it follows that

\ifarxiv
\begin{align}
    \expect\left[\sup_{\rejset \in \Rcal(\alpha)}\ \FDP(R) \right]&\leq F(\pi_0\alpha) + \pi_0\alpha\left(F(1) - \frac{F(\pi_0\alpha)}{\pi_0\alpha} + \int\limits_{\pi_0\alpha}^1\frac{F(x)}{x^2}\ \nd x\right)\\
    &=\pi_0\alpha + \pi_0\alpha \int\limits_{\pi\alpha}^1 \frac{F(x)}{x^2}\ \nd x
    =\pi_0\alpha + \pi_0\alpha \int\limits_{\pi\alpha}^1 \frac{\FDR_0(x)}{x^2}\ \nd x.
\end{align} 
\else 
\begin{align}
    &\expect\left[\sup_{\rejset \in \Rcal(\alpha)}\ \FDP(R) \right]\\ &\leq F(\pi_0\alpha) + \pi_0\alpha\left(F(1) - \frac{F(\pi_0\alpha)}{\pi_0\alpha} + \int\limits_{\pi_0\alpha}^1\frac{F(x)}{x^2}\ \nd x\right)\\
    &=\pi_0\alpha + \pi_0\alpha \int\limits_{\pi\alpha}^1 \frac{F(x)}{x^2}\ \nd x
    =\pi_0\alpha + \pi_0\alpha \int\limits_{\pi\alpha}^1 \frac{\FDR_0(x)}{x^2}\ \nd x.
\end{align} 
\fi
The last equality follows from the equivalence between Type I error of weighted Simes p-value and FDR of weighted BH under the global null (i.e., when applied to only null p-values). Hence we have shown our desired result.
\end{proof}

\subsection{Relationship between SupFDR and StopFDR\label{sec:appn_supFDR_FDR}}

We noted in the introduction that a bound on the FDR at all stopping times also implies a bound on the SupFDR. In this section we give a formal derivation of this result.

\begin{definition}    Recall that a filtration $(\Fcal_t)_{t \in \naturals}$ is a sequence of nested sigma-algebras, and stopping times (w.r.t.\ $(\Fcal_t)$) are random variables $\tau \in \naturals$ such that $\ind{\tau = t}$ is measurable w.r.t.\ $\Fcal_t$. 
\end{definition}

\begin{theorem}
    Let $(W_t)_{t \in \naturals}$ be a nonnegative process that is adapted to some filtration $(\Fcal_t)$, meaning $W_t$ is measurable with respect to $\mathcal{F}_t$ for all $t\in \mathbb{N}$. Furthermore, let $(W_t)_{t \in \naturals}$ be almost surely bounded (with respect to $\mathbb{P}$) by some constant $c \geq 1$ and $\mathbb{E}[W_{\tau}]\leq 1$ for all stopping times $\tau$ with respect to $(\Fcal_t)$. Then, $\expect[\sup_{t \in \naturals} W_t] \leq 1 + \log(c)$.
\end{theorem}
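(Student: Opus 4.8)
The plan is to reduce everything to a single tail estimate on $W^* \coloneqq \sup_{t \in \naturals} W_t$ and then integrate. Since each $W_t$ is nonnegative and bounded by $c$ almost surely, a countable union shows $W^* \le c$ a.s., so by the layer-cake (Fubini) representation
\[
\expect[W^*] = \int_0^c \prob{W^* > \lambda}\,\nd\lambda .
\]
Thus it suffices to prove the tail bound $\prob{W^* > \lambda} \le 1/\lambda$ for every $\lambda \ge 1$, which I would then feed into the integral.

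To get this tail bound I would stop the process the first time it exceeds $\lambda$. The subtlety — and the main obstacle — is that the natural crossing time $\inf\{t : W_t > \lambda\}$ may be infinite, whereas the hypothesis $\expect[W_\tau] \le 1$ only applies to stopping times taking values in $\naturals$. I therefore use a \emph{truncate-then-take-limits} device: fix a finite horizon $N$ and define
\[
\tau_\lambda^N \coloneqq \inf\{t \le N : W_t > \lambda\},
\]
with the convention $\tau_\lambda^N = N$ when the set is empty. One checks $\{\tau_\lambda^N = t\} \in \Fcal_t$ for each $t \le N$, so $\tau_\lambda^N$ is a genuine finite stopping time and the assumption gives $\expect[W_{\tau_\lambda^N}] \le 1$. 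On the event $\{\max_{t \le N} W_t > \lambda\}$ the infimum is attained before $N$, so $W_{\tau_\lambda^N} > \lambda$ there; combined with nonnegativity of $W$ off this event this yields
\[
\lambda\,\prob{\textstyle\max_{t \le N} W_t > \lambda} \le \expect\!\left[W_{\tau_\lambda^N}\,\ind{\textstyle\max_{t\le N} W_t > \lambda}\right] \le \expect[W_{\tau_\lambda^N}] \le 1 .
\]
Letting $N \to \infty$, the events $\{\max_{t \le N} W_t > \lambda\}$ increase to $\{W^* > \lambda\}$, so $\prob{W^* > \lambda} \le 1/\lambda$ for all $\lambda \ge 1$.

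Finally I would split the layer-cake integral at $\lambda = 1$, bounding the probability by $1$ on $[0,1]$ and by the tail estimate on $[1,c]$ (here using $c \ge 1$):
\[
\expect[W^*] \le \int_0^1 1\,\nd\lambda + \int_1^c \frac{1}{\lambda}\,\nd\lambda = 1 + \log(c),
\]
which is the claim. Everything past the stopping-time construction is an elementary Markov inequality applied at a well-chosen stopping time followed by a routine integration, so the only real work is justifying the finiteness and measurability of $\tau_\lambda^N$, which the truncation handles cleanly.
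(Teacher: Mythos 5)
Your proof is correct, and its overall architecture coincides with the paper's: establish the maximal tail bound $\prob{\sup_{t\in\naturals} W_t > \lambda} \le \lambda^{-1}$, then integrate the tail over $[0,c]$, splitting at $\lambda = 1$ to get $1 + \log(c)$. The one genuine difference is how the tail bound is obtained. The paper dispatches it in one line by citing Ville's inequality; you prove it from scratch by stopping at the first crossing of level $\lambda$, truncating at a finite horizon $N$ so that $\tau_\lambda^N$ is a legitimate $\naturals$-valued stopping time to which the hypothesis $\expect[W_{\tau_\lambda^N}] \le 1$ applies, running the Markov-type estimate $\lambda\,\prob{\max_{t\le N} W_t > \lambda} \le \expect[W_{\tau_\lambda^N}] \le 1$ on the crossing event, and letting $N\to\infty$ via continuity from below. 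This substitution is more than pedantry: Ville's inequality is classically stated for nonnegative supermartingales, whereas the theorem assumes only that $\expect[W_\tau]\le 1$ at all stopping times (an e-process-type condition with no supermartingale structure), so the paper's one-line citation implicitly relies on exactly the extension that your truncation argument establishes. Your version is therefore self-contained and supplies the justification the paper leaves implicit; the paper's route buys only brevity by outsourcing the maximal inequality to a known result.
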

\begin{proof}
    By Ville's inequality, we have that $$\prob{\sup_{t \in \naturals} W_t > \alpha^{-1}} \leq \alpha$$ for all $\alpha \in [0, 1]$. Thus, we can use the tail bound integration formula for expectation of nonnegative random variables:
    \ifarxiv
    \begin{align}
        \expect\left[\sup_{t \in \naturals} W_t\right]
        &\leq \int\limits_0^\infty \prob{\sup_{t \in \naturals} W_t \geq s}\ ds
        \\&= \int\limits_0^c \prob{\sup_{t \in \naturals} W_t \geq s}\ ds
        \leq\int\limits_0^c s^{-1} \wedge 1\ ds \\
        &= \int\limits_0^1\ ds + \int\limits_1^c s^{-1}\ ds
        = 1 + \log(c)
    \end{align}
    \else  
\begin{align}
        \expect\left[\sup_{t \in \naturals} W_t\right]
        &\leq \int\limits_0^\infty \prob{\sup_{t \in \naturals} W_t \geq s}\ ds
        \\&= \int\limits_0^c \prob{\sup_{t \in \naturals} W_t \geq s}\ ds
        \leq\int\limits_0^c s^{-1} \wedge 1\ ds \\
        &= \int\limits_0^1\ ds + \int\limits_1^c s^{-1}\ ds
        = 1 + \log(c)
    \end{align}
    \fi
\end{proof}

\begin{corol}
If a sequence of discovery sets $(R_t)$ ensures StopFDR control at level $\alpha$, then SupFDR is controlled at level $\alpha(1 + \log(\alpha^{-1}))$.
\end{corol}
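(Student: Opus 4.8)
The plan is to apply the theorem immediately above directly to a rescaled version of the false-discovery-proportion process. Concretely, I would set $W_t \coloneqq \FDP_t / \alpha$ for each $t \in \naturals$, take $(\Fcal_t)$ to be the natural filtration with respect to which the discovery sets $(R_t)$ are adapted, and take $c = \alpha^{-1}$. The whole proof then reduces to checking that $(W_t)$ satisfies the four hypotheses of the theorem.

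First I would verify nonnegativity and boundedness: since $\FDP_t = |R_t \cap I_0|/(|R_t| \vee 1) \in [0,1]$ for every $t$, we have $W_t \geq 0$ and $W_t \leq \alpha^{-1}$ deterministically, so the process is bounded by $c = \alpha^{-1}$, and $c \geq 1$ because $\alpha \in (0,1]$. For adaptedness, I would note that $I_0$ is a fixed (deterministic) index set, so $\FDP_t$ is a deterministic function of $R_t$; since $R_t$ is $\Fcal_t$-measurable by the definition of an online ARC procedure, $W_t$ is adapted.

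The key step is translating the StopFDR hypothesis into the expectation bound the theorem requires. By assumption $\StopFDR = \sup_{\tau \in \Tcal} \FDR_\tau \leq \alpha$, which is exactly $\mathbb{E}[\FDP_\tau] \leq \alpha$ for every stopping time $\tau$ with respect to $(\Fcal_t)$; dividing by $\alpha$ gives $\mathbb{E}[W_\tau] \leq 1$ for all such $\tau$. With all four hypotheses verified, the theorem yields $\mathbb{E}[\sup_t W_t] \leq 1 + \log(c) = 1 + \log(\alpha^{-1})$, and multiplying through by $\alpha$ recovers $\supFDR = \mathbb{E}[\sup_t \FDP_t] \leq \alpha(1 + \log(\alpha^{-1}))$, as claimed.

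There is no genuine obstacle here beyond bookkeeping, since the corollary is an immediate specialization of the general Ville-inequality-based bound. The one point I would be careful to state explicitly is the compatibility of the two stopping-time notions: the set $\Tcal$ appearing in the definition of StopFDR must coincide with the stopping times relative to the filtration $(\Fcal_t)$ used in the theorem, namely the natural filtration of the observed p-values or e-values. Once this identification is pinned down, the substitution $W_t = \FDP_t/\alpha$ finishes the argument.
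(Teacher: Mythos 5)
Your proof is correct and is exactly the argument the paper intends: the corollary is presented as an immediate consequence of the preceding theorem, applied with $W_t = \FDP_t/\alpha$, $c = \alpha^{-1}$, and the observation that StopFDR control at level $\alpha$ is precisely $\mathbb{E}[W_\tau] \leq 1$ for all stopping times $\tau$. Your verification of the remaining hypotheses (nonnegativity, boundedness by $\alpha^{-1}$, and adaptedness of $\FDP_t$ given that $I_0$ is deterministic) fills in exactly the bookkeeping the paper leaves implicit.
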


The above theorem can also be seen as similar to defining an adjuster \citep{shafer_test_martingales_2011,dawid_insuring_loss_2011,dawid_probability-free_pricing_2011,choe_combining_evidence_2024} for upper bounded e-processes or calibrator \citep{vovk_logic_probability_1993,vovk_e-values_calibration_2021} for lower bounded anytime-valid p-values.

\subsection{Sharpness of SupFDR and StopFDR control of online BH\label{sec:appn_obh_sharp}}

We will show that the SupFDR and StopFDR control of online BH under PRDN p-values in \Cref{thm:onlinebh-supfdr} is sharp, as claimed in Theorem~\ref{thm:sharp-supfdr}. First, we cite the following construction from \citet{su_fdr-linking_theorem_2018} for a worst-case construction of the FDP of picking non-nulls as a function of the realized null p-values. Let there be $K$ fixed hypotheses, and without loss of generality let $I_0 = [|K_0|]$, i.e., the nulls come first. Now, define the following quantities:
\begin{align}
    j^* \coloneqq \underset{j \in I_0}{\text{argmax}}\ j / \lceil K P^{\text{null}}_{(j)} / \alpha\rceil,\\
    K_1^* \coloneqq (\lceil K P_{(j^*)}^\textnormal{null} / \alpha \rceil - j^*) \vee 0,
\end{align} where $P^{\text{null}}_{(1)}, \dots, P^{\text{null}}_{(K_0)}$ are the p-values corresponding to null hypotheses ordered from smallest to largest.
Consider the following adversarial construction of non-null p-values for $t \in \{K_0 + 1, \dots, K\}$.
\begin{align}
    P_t  = 1 - \ind{t \leq K_0 + K_1^*}. \label{eq:adv-pvalue}
\end{align}
\begin{fact}[Lemma 2.4 of \citet{su_fdr-linking_theorem_2018}]\label{fact:su-cex}
    Let $R^\BH$ be the discovery set of the BH procedure at level $\alpha$ applied to any p-values where the non-null p-values are set in the manner of \eqref{eq:adv-pvalue}. When $K_1^* \leq K - K_0$, then $\FDP(R^\BH) = \max_{j \in I_0} (j / \lceil Kp_{(j^*)}^{\textnormal{null}}/\alpha\rceil) \wedge 1$.
\end{fact}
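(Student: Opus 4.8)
The plan is to compute the step-up BH rejection set directly on the sorted list of all $K$ p-values and read off the FDP. Write $n_j := \lceil K P_{(j)}^\tnull/\alpha\rceil$ for $j \in I_0$, so that $j^* = \argmax_{j \in I_0} j/n_j$ and, setting $m^* := n_{j^*}$, we have $K_1^* = (m^* - j^*)\vee 0$. Under \eqref{eq:adv-pvalue} exactly $K_1^*$ of the non-null p-values equal $0$ and the remaining $(K-K_0)-K_1^* \ge 0$ equal $1$; the hypothesis $K_1^* \le K-K_0$ is precisely what makes this assignment feasible. Sorting all $K$ p-values therefore places the $K_1^*$ zeros first, then the sorted nulls $P_{(1)}^\tnull \le \cdots \le P_{(K_0)}^\tnull$, and finally the $1$'s. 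Since a p-value equal to $1$ meets the BH criterion $P_{(k)} \le k\alpha/K$ only when $k \ge K/\alpha > K$, none of the trailing $1$'s is ever rejected (for $\alpha < 1$), so the BH cutoff lies inside the zeros-and-nulls block.

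First I would treat the generic case $m^* > j^*$, i.e.\ $K_1^* = m^* - j^* > 0$ (equivalently, the maximal ratio $j^*/m^* < 1$). Here the null $P_{(j)}^\tnull$ sits at overall rank $K_1^*+j$, so its BH criterion reads $P_{(j)}^\tnull \le (K_1^*+j)\alpha/K$; because $K_1^*+j$ is a positive integer, this is equivalent to $n_j \le K_1^*+j$. Thus the number of rejected nulls is $\hat{\jmath} := \max\{\,j \in \{0,\dots,K_0\} : n_j \le K_1^*+j\,\}$ (with $n_0 = 0$), and I claim $\hat{\jmath} = j^*$. For the lower bound, $n_{j^*} = m^* = K_1^*+j^*$ shows $j^*$ meets its criterion, so $\hat{\jmath}\ge j^*$. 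For the upper bound, fix $j > j^*$; optimality of $j^*$ gives $j/n_j \le j^*/m^*$, hence $n_j \ge j\,m^*/j^*$, and using $m^* = K_1^*+j^*$ one computes $j\,m^*/j^* - (K_1^*+j) = K_1^*(j-j^*)/j^* > 0$, so $n_j > K_1^*+j$ and the criterion fails at $j$. Therefore $\hat{\jmath} = j^*$, the only rejected non-nulls are the $K_1^*$ zeros, and with $|R^\BH \cap I_0| = j^*$ and $|R^\BH| = K_1^*+j^* = m^*$,
$$\FDP(R^\BH) = \frac{j^*}{K_1^*+j^*} = \frac{j^*}{m^*} = \max_{j \in I_0}\frac{j}{n_j},$$
which matches the asserted value (the $\wedge 1$ being inactive since this ratio is $< 1$).

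The boundary case $m^* \le j^*$ forces $K_1^* = 0$, so every non-null p-value equals $1$; every BH rejection is then a null, and since $j^*$ still satisfies $n_{j^*} \le j^*$ there is at least one rejection, giving $\FDP(R^\BH) = 1$. This again matches $(\max_{j} j/n_j)\wedge 1 = 1$, because the maximal ratio is $\ge 1$ in this regime. Combining the two cases yields the stated identity $\FDP(R^\BH) = \max_{j \in I_0}(j/n_j)\wedge 1$.

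The main obstacle is the upper-bound half of Step~2 in the generic case: converting the optimality of $j^*$ (a statement about ratios $j/n_j$ that carry ceilings) into the strict failure $n_j > K_1^*+j$ of the \emph{integer} BH criterion for every $j > j^*$. The two simplifications that make this work are the equivalence $\lceil x\rceil \le n \iff x \le n$ for integer $n$ (which strips the ceiling from the BH criterion) and the identity $m^* = K_1^*+j^*$ (which collapses the algebra to the manifestly positive $K_1^*(j-j^*)/j^*$). I would also note that when the maximal ratio equals $1$ one should take $j^*$ to be the largest maximizer to pin down the cutoff, but this never affects the FDP, since that situation falls into the boundary regime where $\FDP = 1$ regardless of the precise cutoff.
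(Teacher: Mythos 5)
Your proof is correct, but note that the paper itself never proves this statement: it is imported verbatim as a Fact, citing Lemma 2.4 of \citet{su_fdr-linking_theorem_2018}, so there is no internal proof to compare against, and your argument serves as a self-contained verification of the borrowed result. The argument is sound at every step I checked: the reduction of the step-up cutoff to $\hat{k} = K_1^* + \hat{\jmath}$ (the leading zeros always satisfy the criterion, the trailing ones never do for $\alpha < 1$), the ceiling-stripping equivalence $\lceil x \rceil \le n \iff x \le n$ for integer $n$, the identity $n_{j^*} = K_1^* + j^*$ giving the lower bound, and the algebra $n_j - (K_1^* + j) \ge K_1^*(j - j^*)/j^* > 0$ giving strict failure for all $j > j^*$ --- which, as you implicitly exploit, uses only the \emph{non-strict} optimality of $j^*$ and hence handles ties among maximizers without needing your closing tie-breaking remark (any maximizer works in the generic case, and the boundary case $m^* \le j^*$ yields $\FDP = 1$ regardless). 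Two small points worth making explicit: the rejection count $|R^\BH| = \hat{k}$ is exact even with ties at the threshold (if more than $\hat{k}$ p-values were $\le \alpha\hat{k}/K$, maximality of $\hat{k}$ would be violated), which you use when writing $|R^\BH| = m^*$; and you silently assume $\alpha < 1$ and $P^{\tnull}_{(j)} > 0$ (so that $n_j \ge 1$ and no division by zero occurs in $j/n_j$) --- both harmless here, since the paper applies the Fact with i.i.d.\ uniform nulls and small $\alpha$, but worth flagging if the Fact is read as holding for ``any p-values.''
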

\Cref{fig:lb-diagram} shows how we derive our problem instance for the online ARC setting from the aforementioned offline construction. Let $\gamma_t = K^{-1}$ for $t \in [K]$ and $\gamma_t = 0$ for $t > K$. We let the first $K_0$ p-values be identical to the offline setting and the remaining hypotheses be non-null with p-values set to 0. Now we note the following.
\begin{lemma}\label{lemma:fdp-equiv}
    For the above choice of p-values and weights for the online setting, we have that $R^{\oBH}_{K_0 + K_1^*} = R^\BH$ and as a result, we get the following identity for the FDP when $K_1^* \leq K - K_0$: $\FDP\left(R^{\oBH}_{K_0 + K_1^*}\right) = (\max_{j \in I_0} j / \lceil Kp_{(j^*)}^{\textnormal{null}} / \alpha \rceil) \wedge 1.$
\end{lemma}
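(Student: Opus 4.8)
The plan is to show that the two discovery sets coincide by comparing the self-consistent thresholds they use, and then to read off the FDP identity directly from Fact~\ref{fact:su-cex}. Write $m \coloneqq \lceil K P_{(j^*)}^{\textnormal{null}}/\alpha\rceil$, so that $K_1^* = (m - j^*)\vee 0$ and set $t \coloneqq K_0 + K_1^*$. First I would record that, with uniform weights $\gamma_i = K^{-1}$ on $[K]$, the online count function at time $t$, namely $k \mapsto \sum_{j\le t}\ind{P_j \le k\alpha/K}$, agrees with the offline count function $k \mapsto \sum_{j\le K}\ind{P_j \le k\alpha/K}$ for every $k$ with $k\alpha/K < 1$. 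Indeed, the indices present only in the offline stage are the non-nulls $K_0+K_1^*+1,\dots,K$, which by \eqref{eq:adv-pvalue} all carry p-value $1$ and hence contribute nothing to either count as long as the threshold is strictly below $1$; and for $k\le K$, the assumption $\alpha<1$ gives $k\alpha/K\le\alpha<1$. The remaining indices $1,\dots,t$ carry identical p-values in both constructions (the null p-values are copied verbatim, and the non-nulls $K_0+1,\dots,K_0+K_1^*$ have p-value $0$ in both).

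Next I would argue that this common count function forces $k_t^{*} = k^{*} = m$, where $k^*$ denotes the offline BH count. By Fact~\ref{fact:su-cex} the offline BH procedure rejects exactly $m$ hypotheses, of which $j^*$ are null, so $k^* = m$. The only thing left to check is that the offline optimizer $m$ lies inside the online search window $\{1,\dots,t\}$: in the generic case $m\ge j^*$ we have $K_1^* = m - j^*$, and since $j^*\in I_0=[K_0]$ this gives $m = K_1^* + j^* \le K_1^* + K_0 = t$ (in the degenerate case $m<j^*$ one has $K_1^*=0$, $t=K_0$, and the argument specializes to the null p-values alone). Because the two count functions coincide on $\{1,\dots,t\}$ and the offline maximizer already sits in this range, restricting the maximization from $\{1,\dots,K\}$ to $\{1,\dots,t\}$ cannot change it; hence $k_t^* = m$.

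With the threshold pinned down, the discovery sets are immediate: $R^{\oBH}_{t} = \{i\le t: P_i \le m\alpha/K\}$ while $R^{\BH} = \{i\le K: P_i\le m\alpha/K\}$, and the extra offline indices $i>t$ have $P_i=1 > m\alpha/K$ (again using $m\alpha/K\le\alpha<1$), so they are excluded from $R^\BH$. Therefore $R^{\oBH}_{K_0+K_1^*} = R^{\BH}$, and applying $\FDP$ to both sides together with Fact~\ref{fact:su-cex} yields the stated identity $\FDP(R^{\oBH}_{K_0+K_1^*}) = (\max_{j\in I_0} j/\lceil K P_{(j)}^{\textnormal{null}}/\alpha\rceil)\wedge 1$ whenever $K_1^*\le K-K_0$. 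The one place demanding care is the bookkeeping of the two index ranges $[t]$ and $[K]$: I must verify that the non-nulls that are \emph{hidden} at time $t$ are precisely the ones the adversarial construction sets to $1$, so that neither the count function nor the discovery set is disturbed when passing between the offline and online pictures. This matching is the crux; everything else reduces to the elementary maximality property of the BH threshold.
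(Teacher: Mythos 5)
Your proposal is correct and takes essentially the same route as the paper's (much terser) proof: both identify online BH at time $K_0+K_1^*$ with offline BH on all $K$ hypotheses by observing that the trailing p-values equal to $1$ can never be rejected under thresholds $k\alpha/K<1$, and then read off the FDP identity from Fact~\ref{fact:su-cex}. One small caution: Fact~\ref{fact:su-cex} as stated gives only the FDP value, not that offline BH rejects exactly $m$ hypotheses of which $j^*$ are null, so your inference ``$k^*=m$'' over-reads the cited fact --- but this is harmless, since the containment you actually need, $k^*\le t$, follows directly from your own count-agreement observation (any offline-feasible $k$ satisfies $k\le \mathrm{count}(k)\le t$, because only the first $t$ p-values lie strictly below $1$), after which $k_t^*=k^*$ and the set equality go through exactly as you wrote.
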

\begin{proof}    
Note that we can map p-value realizations of our offline construction to our online construction simply by setting the null p-values to be equivalent but setting all non-null p-values to be 0 in the online construction. Applying weighted BH with weights of $\gamma_t = K^{-1}$ to the first $K_0 + K_1^*$ hypotheses is exactly equivalent to applying standard BH to $K$ hypotheses where the first $K_0 + K_1^*$ hypotheses have identical p-values to the weighted BH p-values, and the remaining p-values are 1, as the hypotheses with p-values of 1 will never be rejected. Thus, when combined with Fact~\ref{fact:su-cex}, we can justify the above lemma.
\end{proof}
\ifarxiv
\begin{figure*}[h]
    \centering
    \includegraphics[width=\linewidth]{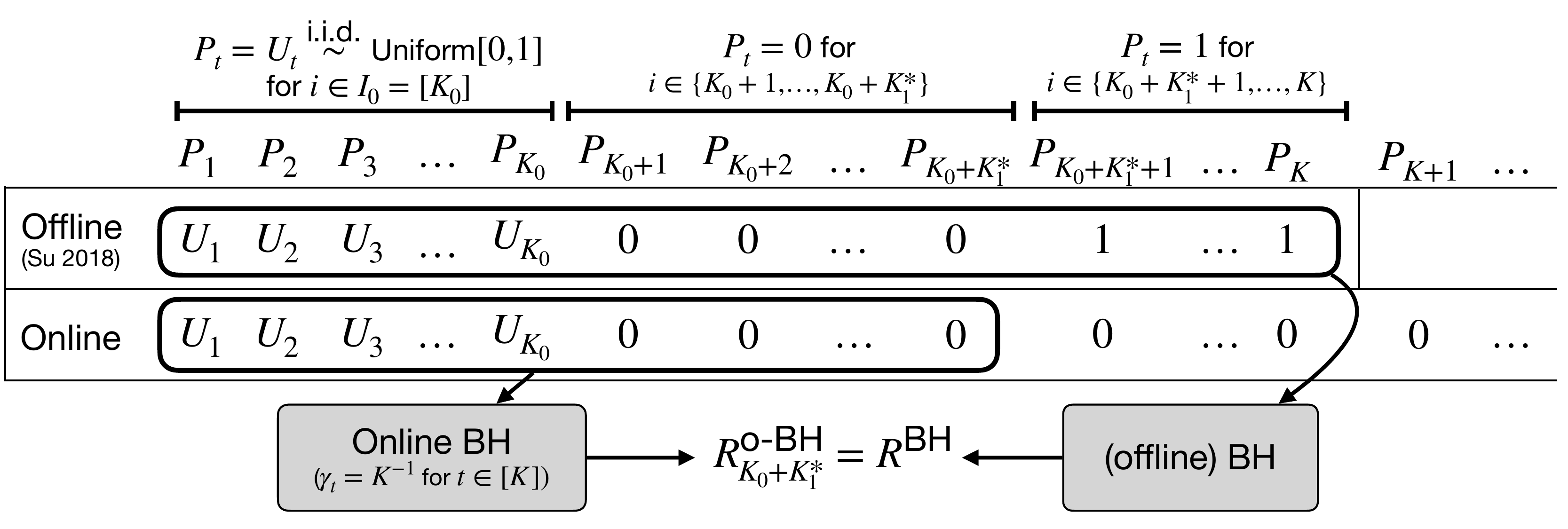}
    \caption{Visualization of our p-value construction for the online ARC setting that results in online BH producing discovery sets that are equal to the discovery sets in \citet{su_fdr-linking_theorem_2018} produced by the (offline) BH procedure (which maximize the FDR). This occurs whenever $K_1^* \leq K - K_0$, and we formalize the discovery equivalence result in \Cref{lemma:fdp-equiv}.}
    \label{fig:lb-diagram}
\end{figure*}
\else 
\begin{figure*}[h]
    \centering
    \includegraphics[width=0.7\linewidth]{fig/supfdr_lb.pdf}
    \caption{Visualization of our p-value construction for the online ARC setting that results in online BH producing discovery sets that are equal to the discovery sets in \citet{su_fdr-linking_theorem_2018} produced by the (offline) BH procedure (which maximize the FDR). This occurs whenever $K_1^* \leq K - K_0$, and we formalize the discovery equivalence result in \Cref{lemma:fdp-equiv}.}
    \label{fig:lb-diagram}
\end{figure*}
\fi

Now, \citet{su_fdr-linking_theorem_2018} showed that the FDP expression in the above lemma was sufficient to justify sharpness of the FDR bound when the null p-values are chosen to be i.i.d.\ uniform random variables.
\begin{fact}[Theorem 6 of \citet{su_fdr-linking_theorem_2018}]
    For every $\varepsilon > 0$, there exists a $K$, $K_0 < K$, and $\alpha' \in (0, 1)$ such that $\FDR(R^{\BH}) > (1 - \varepsilon)\alpha(1 + \log(\alpha^{-1}))$ for all $\alpha \leq \alpha'$ where the null p-values are i.i.d. uniform random variables, and the non-null p-values are set in accordance with \eqref{eq:adv-pvalue}.
\end{fact}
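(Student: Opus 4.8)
The statement is the sharpness result of \citet{su_fdr-linking_theorem_2018}, cited here as a Fact; I outline how I would reconstruct its proof. The plan is to reduce the FDR to a tractable functional of the null order statistics, evaluate that functional exactly via a classical identity for the uniform empirical process, and then push the null proportion to one. By Fact~\ref{fact:su-cex}, on the event $\{K_1^* \le K - K_0\}$ the configuration \eqref{eq:adv-pvalue} forces $\FDP(R^{\BH}) = \max_{j \in I_0} j/\lceil K P^{\tnull}_{(j)}/\alpha\rceil \wedge 1$, which depends only on the null p-values. Since these are i.i.d.\ uniform, writing $U_{(1)} \le \dots \le U_{(K_0)}$ for their order statistics, the first task is to lower bound $\expect[\max_{j} j/\lceil KU_{(j)}/\alpha\rceil \wedge 1]$.

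First I would discard the ceiling for a lower bound via $\lceil x\rceil \le x+1$, giving $\max_j j/\lceil KU_{(j)}/\alpha\rceil \ge \max_j j/(KU_{(j)}/\alpha + 1)$. The leading term $\max_j j\alpha/(KU_{(j)})$ is exactly $\alpha\pi_0 S_{K_0}$, where $\pi_0 = K_0/K$ (consistent with the paper's $\pi_0$, as $\gamma_t = 1/K$) and $S_{K_0} := \sup_{t>0}\hat{F}_{K_0}(t)/t$ is the supremum of the reciprocal coverage of the uniform empirical CDF $\hat{F}_{K_0}$: indeed $\hat{F}_{K_0}(U_{(j)}) = j/K_0$, and $\hat{F}_{K_0}(t)/t$ is maximized at the jump points $t=U_{(j)}$. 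The $+1$ from the ceiling perturbs each relevant term by a factor $1/(1 + \alpha/(KU_{(j)}))$, which tends to $1$ as $K\to\infty$ and $\alpha\to 0$ on the small indices attaining the supremum.

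The key identity is Daniels' theorem: for every $n$ and every $a \ge 1$, $\prob{S_n \ge a} = 1/a$, while $S_n \ge 1$ almost surely (at $t=1$, $\hat{F}_n(1)/1 = 1$). Using the tail-integral formula for truncated expectation then gives, remarkably independently of $K_0$,
\begin{align}
\expect\left[\alpha\pi_0 S_{K_0} \wedge 1\right]
&= \int_0^1 \prob{S_{K_0} \ge v/(\alpha\pi_0)}\ \nd v \\
&= \int_0^{\alpha\pi_0} 1\ \nd v + \int_{\alpha\pi_0}^1 \frac{\alpha\pi_0}{v}\ \nd v
= \alpha\pi_0\bigl(1 + \log((\alpha\pi_0)^{-1})\bigr).
\end{align}
Writing $f(x) = x(1 + \log(1/x))$ (the increasing function used in the proof of Proposition~\ref{prop:p-self-consist}), this is $f(\alpha\pi_0)$; and since $\alpha\pi_0 \le \alpha$ forces $\log((\alpha\pi_0)^{-1}) \ge \log(\alpha^{-1})$, I obtain the clean bound $f(\alpha\pi_0) \ge \pi_0 f(\alpha) = \pi_0\,\alpha(1 + \log(\alpha^{-1}))$, uniformly in $\alpha$.

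To finish, given $\varepsilon > 0$ I would fix $\pi_0 = K_0/K > 1 - \varepsilon/2$, so that the leading term already exceeds $(1-\varepsilon/2)\alpha(1+\log(\alpha^{-1}))$, and then take $K$ large and $\alpha'$ small so that the ceiling correction and the probability of the bad event $\{K_1^* > K-K_0\}$ together cost less than the remaining $\varepsilon/2$ for all $\alpha \le \alpha'$. The main obstacle is precisely this last bookkeeping: the construction \eqref{eq:adv-pvalue} requires $K_1^*$ free non-null slots, and although $K_1^* = j^*(1-\FDP)/\FDP$ is typically $O(\alpha^{-1})$ because the supremum $S_{K_0}$ is usually attained at a small index $j^*$, rare realizations with large $j^*$ could demand more slots than the $\varepsilon K$ available. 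I would control this by showing the contribution to the expectation from $\{j^* > \delta K\}$ (equivalently, large $K_1^*$) is negligible as $K \to \infty$, exploiting the Pareto tail of $S_{K_0}$ and the fact that large values of $\hat{F}_{K_0}(t)/t$ concentrate as $t \to 0$; this is the one step requiring genuine care, in contrast to the otherwise exact computation.
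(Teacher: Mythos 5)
The paper does not prove this statement at all --- it is imported verbatim as a Fact from \citet{su_fdr-linking_theorem_2018} --- so the only comparison available is with the argument you would need to supply. Your central computation is correct and is essentially the canonical route: the reduction via Fact~\ref{fact:su-cex} to the null-only functional, the rewrite $\max_j j\alpha/(K U_{(j)}) = \alpha\pi_0 S_{K_0}$ with $S_n = \sup_t \hat F_n(t)/t$, Daniels' exact tail $\prob{S_n \geq a} = 1/a$ for $a \geq 1$ (for every $n$), and the tail-integral identity $\expect[\alpha\pi_0 S_{K_0} \wedge 1] = \alpha\pi_0(1+\log((\alpha\pi_0)^{-1})) \geq \pi_0\,\alpha(1+\log(\alpha^{-1}))$ are all right. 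Your ceiling-removal justification is loose pointwise --- at the critical scale the factor $1/(1+\alpha/(KU_{(j^*)}))$ can be as bad as $1/2$ when $j^*=1$, not ``tending to $1$'' --- but this is harmless: integrating the worst-case correction $v \mapsto v/(1+v)$ against the Daniels tail yields $\alpha\pi_0\log(1+(\alpha\pi_0)^{-1})$, an additive loss of only $O(\alpha)$, which the $\varepsilon$-slack absorbs as $\alpha \to 0$.

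The genuine gap is your endgame, and it is not fixable as you have framed it. You fix $\pi_0$, then take $K$ large, then $\alpha'$ small, and hope the bad event $\{K_1^* > K - K_0\}$ contributes negligibly \emph{uniformly over all $\alpha \leq \alpha'$}. But the slot demand is $K_1^* = j^*(1/\FDP - 1) \approx j^*/(\pi_0\alpha)$ even on \emph{typical} realizations with $j^* = 1$ (where $U_{(1)} \sim 1/K_0$), so once $\alpha \ll 1/((1-\pi_0)\pi_0 K)$ the good event has probability $O(K\alpha)$ and the Daniels integral is effectively truncated at $v \approx ((1-\pi_0)K)^{-1}$, retaining only $\approx \alpha\pi_0(1+\log((1-\pi_0)K)) \ll \alpha\log(\alpha^{-1})$. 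The obstruction is therefore not ``rare realizations with large $j^*$'' but the entire bulk of the distribution; indeed, with $K$ fixed the claimed inequality is \emph{false} for all sufficiently small $\alpha$, since the Benjamini--Yekutieli bound $\FDR \leq \alpha\ell_K$ \citep{benjamini2001control} (cited in this very paper) caps the FDR below $(1-\varepsilon)\alpha(1+\log(\alpha^{-1}))$ once $\log(\alpha^{-1}) \gtrsim \ell_K/(1-\varepsilon)$. The statement as transcribed here has the quantifiers in an unprovable order; the correct reading (and the only completable version of your proof) lets $K$ and $K_0$ depend on $\alpha$, e.g.\ $(1-\pi_0)K \gtrsim J/(\pi_0\alpha)$ with $J = J(\varepsilon) \to \infty$ slowly, together with a verification that the max restricted to indices $j \leq J$ already captures the Daniels tail up to a $(1-\varepsilon)$ factor --- restricting to $j^*=1$ alone loses a factor $1-e^{-1}$, so this last step genuinely requires the care you anticipated, just in a different place than you located it.
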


Now, we note that $K_0 + K_1^*$ is a stopping time since it can be determined solely by the null p-values (which we will have observed by the $K_0$th time step). Thus, if we apply our \Cref{lemma:fdp-equiv} to the above fact, we have shown our desired result in \Cref{thm:sharp-supfdr}.

As a result, SupFDR and StopFDR control of online BH is sharp in the PRDN case, and the $\log(\alpha^{-1})$ factor is unimprovable in the WNDN case.

\begin{remark}
We will also note what our example here means w.r.t.\ the sharpness of certain self-consistent procedures in the offline multiple testing setting.
\begin{itemize}
\item If we are thinking about the self-consistent set that has the maximum FDP over all self-consistent sets, then our construction does indicate that under i.i.d. null and non-null p-values, a FDR lower bound of $\alpha(1 +\log(\alpha^{-1}))$ is sharp.
\item However, we can also restrict ourselves to discovery sets along the ``BH path'', i.e., self-consistent sets that correspond to to the smallest $k$ p-values for some $k \in [K]$. The discovery sets used for our example would not be along the BH path, since the null p-values would not be 0 almost surely, and anything along the BH path would discover all the non-null p-values first (as they are all 0). In this case, we would still have i.i.d.\ null p-values, but some of the non-null p-values must be adversarially chosen in the sense of \citet{su_fdr-linking_theorem_2018}, i.e., setting the $K_1^*$ non-null p-values to be 1, to restrict the ``BH path'' to reach the sharp lower bound of $\alpha(1 + \log(\alpha^{-1}))$.
\end{itemize}
\end{remark}

\end{appendix}
\end{document}